\newif\ifanon\anonfalse
\newif\iffull\fulltrue
\newcommand{\A}{\ensuremath{\mathop{\text{A}}}}
\newcommand{\U}{\ensuremath{\mathbin{\text{U}}}}
\newcommand{\R}{\ensuremath{\mathbin{\text{R}}}}
\newcommand{\AAx}{\ensuremath{\mathop{\text{AA}}}}
\newcommand{\X}{\ensuremath{\mathop{\text{X}}}}
\newcommand{\G}{\ensuremath{\mathop{\text{G}}}}
\newcommand{\F}{\ensuremath{\mathop{\text{F}}}}
\newcommand{\cl}{\mathit{cl}}
\newcommand{\ms}{\mathit{ms}}
\newcommand{\Buchi}{B\"uchi}
\newcommand{\proj}{\mathit{prj}}
\newcommand{\zip}{\ensuremath{\mathit{zip}}}
\newcommand{\unzip}{\ensuremath{\mathit{unzip}}}
\newcommand{\Hp}{$\text{HyperLTL}_2$}
\newcommand{\Hpctl}{$\text{HyperCTL}^*$}
\newcommand{\ctlstr}{$\text{CTL}^*$}
\newcommand{\IF}{\mbox{\bf if}}
\newcommand{\THEN}{\mbox{\bf then}}
\newcommand{\ELSE}{\mbox{\bf else}}
\newcommand{\WHILE}{\mbox{\bf while}}
\newcommand{\plDO}{\mbox{\bf do}}
\newcommand{\cond}[3]{\ensuremath{\IF ~ #1 ~\THEN~ #2 ~\ELSE~ #3}}
\newcommand{\while}[2]{\ensuremath{\WHILE \: #1 \:\plDO\: #2}}
\DeclareMathSymbol{\Pi}{\mathalpha}{operators}{5}
\DeclareMathSymbol{\Phi}{\mathalpha}{operators}{8}
\renewcommand{\LTLdiamond}{\ensuremath{\F}}
\renewcommand{\LTLcircle}{\ensuremath{\X}}
\renewcommand{\LTLsquare}{\ensuremath{\G}}
\newcommand{\seeappendix}[1]{\iffull{#1}\else{the companion technical report~\cite{hypertl-tr}}\fi}
\newcommand{\donotshow}[1]{}
\newcommand{\Out}{\ensuremath{\mc{V}_\mc{O}} }
\newcommand{\out}{\ensuremath{\mathsf{out}} }
\newcommand{\AltPaths}{\mathsf{AltPaths}}
\newcommand{\false}{\mathsf{false}}
\newcommand{\true}{\mathsf{true}}
\newcommand{\V}{\ensuremath{\mathcal{V}} }
\newcommand{\W}{\ensuremath{\mathop{\text{W}}}}
\newcommand{\AP}{\ensuremath{\mathsf{AP}} }
\newcommand{\TR}{\ensuremath{\mathsf{TR}} }
\newcommand{\Hide}{\ensuremath{\mathcal{H}\hspace{1pt}} }
\newcommand{\DO}{\mathsf{do}}
\renewcommand{\Out}{\mathit{Out}}
\newcommand{\Input}{\mathit{In}}
\newcommand{\Paths}{\ensuremath{\mathsf{Paths}} }
\newcommand{\Traces}{\ensuremath{\mathsf{Traces}} }
\newcommand{\Agts}{\ensuremath{\mathsf{Agts}}}
\begin{document}

\mainmatter
\pagestyle{headings}

\title{Temporal Logics for Hyperproperties}

\ifanon
\else
\author{Michael R. Clarkson$^1$, Bernd Finkbeiner$^2$, Masoud Koleini$^1$, \\
  Kristopher K.\ Micinski$^3$, Markus N. Rabe$^2$, and C\'esar S\'anchez$^4$}
\institute{$^1$George Washington University; $^2$Universit\"at des Saarlandes; \\
$^3$University of Maryland, College Park; $^4$IMDEA Software Institute}
\fi

\maketitle

\begin{abstract}
  Two new logics for verification of hyperproperties are proposed.
  Hyperproperties characterize security policies, such as
  noninterference, as a property of sets of computation
  paths. Standard temporal logics such as LTL, CTL, and \ctlstr{} can refer
  only to a single path at a time, hence cannot express many
  hyperproperties of interest. The logics proposed here, HyperLTL and
  \Hpctl{}, add explicit and simultaneous quantification over multiple paths to LTL
  and to \ctlstr{}. This kind of quantification enables expression 
  of hyperproperties.  A model checking algorithm for the proposed logics is given. 
  For a fragment of HyperLTL, a prototype model checker has been implemented. 
\end{abstract}



\section{Introduction}
\label{sec:intro}

\emph{Trace properties}, which developed out of an interest in proving the correctness of
programs~\cite{Lamport77}, characterize correct behavior as 
properties of individual execution traces.
Although early verification techniques specialized in proving
individual correctness properties of interest, such as mutual
exclusion or termination, temporal logics soon emerged as a general,
unifying framework for expressing and verifying trace properties.
Practical model checking
tools~\cite{Holzmann:1997:SPIN,NuSMV:2000:Cimatti,Cook+Koskinen+Vardi/2011/TemporalPropertyVerificationAsAProgramAnalysisTask}
based on those logics now enable automated verification of program
correctness.

Verification of security is not directly possible with
such tools, because some important security policies cannot be characterized
as properties of individual execution traces~\cite{McLean:1994:GeneralTheory}. 
Rather, they are properties of sets of execution traces, also known as 
\emph{hyperproperties}~\cite{ClarksonS10}.
Specialized verification techniques have been developed for
particular hyperproperties~\cite{HammerS09,BanerjeeN05,Myers99,MilushevC13},
as well as for \emph{2-safety} properties~\cite{Terauchi+Aiken/05/SecureInformationFlowAsSafetyProblem}, which 
are properties of pairs of execution traces. 
But a unifying program logic for expressing and verifying
hyperproperties could enable automated verification of a
wide range of security policies.

In this paper, we propose two such logics. Both are based, like
hyperproperties, on examining more than one execution trace at a time.
Our first logic, \emph{HyperLTL}, generalizes linear-time temporal
logic (LTL)~\cite{Pnueli:1977:TLP}.  LTL implicitly quantifies over
only a single execution trace of a system, but HyperLTL allows
explicit quantification over multiple execution traces simultaneously,
as well as propositions that stipulate relationships among those
traces.  For example, HyperLTL can express information-flow
policies such as \emph{observational 
determinism}~\cite{McLean92,Roscoe95,ZdancewicM03}, which requires
programs to behave as (deterministic) functions from low-security
inputs to low-security outputs.  The following two programs do not
satisfy observational determinism, because they leak the value of
high-security variable $h$ through low-security variable $l$, thus making
the program behave nondeterministically from a low-security user's
perspective:
\begin{equation*}
\begin{array}{l@{\hspace{4em}}l}
\text{(1)}~~l := h & \text{(2)}~~\cond{h=0}{l:=1}{l:=0}
\end{array}
\end{equation*}
Other program logics could already express observational determinism or closely related policies~\cite{BartheDR04,MilushevC13,HuismanWS/06/TLCharacterisationOfOD}. 
Milushev and Clarke~\cite{MilushevC12,Milushev:2013:thesis,MilushevC13} have even
proposed other logics for hyperproperties, which we discuss in Section~\ref{sec:relatedwork}.
But HyperLTL provides a simple and unifying logic in which many information-flow security policies
can be directly expressed.

Information-flow policies are not one-size-fits-all.  Different policies might be needed 
depending on the power of the adversary. For example, the following
program does not satisfy observational determinism, but the program might be
acceptable if nondeterministic choices, denoted $\talloblong$, are
resolved such that the probability distribution on output value $l$ is
uniform:
\begin{equation*}
\text{(3)} ~~ l := h ~\talloblong~ l := 0 ~\talloblong~ l := 1
\end{equation*}
On the other hand, if the adversary can influence the resolution of nondeterministic choices, program (3) could be exploited to leak information.
Similarly, the following program does satisfy observational
determinism, but the program might be unacceptable if adversaries 
can monitor execution time:
\begin{equation*}
\text{(4)} ~~ \while{h > 0}{\{h := h-1\}}
\end{equation*}%
\setcounter{equation}{4}
In Section~\ref{sec:examples}, we show how policies appropriate for the above 
programs, as well as other security policies, can be formalized in HyperLTL.


Our second logic, \emph{\Hpctl{}}, generalizes a branching-time
temporal logic,
\ctlstr{}~\cite{EmersonH86}.
Although \ctlstr{} already has explicit trace quantifiers,
only one trace is ever in scope at a given point in a formula (see
Section~\ref{sec:ltlctl}),
so \ctlstr{} cannot directly express hyperproperties.
But \Hpctl{} can, because it permits quantification over multiple 
execution traces simultaneously.
HyperLTL and \Hpctl{} enjoy a similar relationship to that of LTL and \ctlstr{}:
HyperLTL is the syntactic fragment of \Hpctl{} containing only formulas in \emph{prenex} form---that is, formulas that begin exclusively with quantifiers and end with a quantifier-free formula.
\Hpctl{} is thus a strict generalization of HyperLTL.
\Hpctl{} also generalizes a related temporal logic, SecLTL~\cite{Dimitrova:2012:SecLTL}, and subsumes epistemic temporal logic~\cite{Fagin-book-95,Meyden/1993/AxiomsForKnowledgeAndTimeInDistributedSystemsWithPerfectRecall} (see Section~\ref{sec:relatedlogics}). 

Having defined logics for hyperproperties, we investigate model checking of those logics. 
In Section~\ref{sec:ctlmodelchecking}, we show that for \Hpctl{} the 
model checking problem is decidable by reducing it to the 
satisfiability problem for quantified propositional temporal 
logic (QPTL)~\cite{Sistla+Vardi+Wolper/1987/TheComplementationProblemForBuchiAutomata}. 
Since \Hpctl{} generalizes HyperLTL, we immediately obtain that the HyperLTL model checking
problem is also decidable. 
We present a hierarchy of fragments, which allows us to precisely characterize the complexity of the model checking problem in the number quantifier alternations. 
The lowest fragment, which disallows any quantifier alternation, can be checked by a space-efficient polynomial-time algorithm (NLOGSPACE in the number of states of the program). 

We also prototype a model
checker that can handle an important fragment of HyperLTL, including all the
examples from Section~\ref{sec:examples}.  The prototype implements a
new model checking algorithm 
based on a well-known LTL algorithm~\cite{Vardi:1994:infComp,Wolper00} and on
a self-composition construction~\cite{BartheDR04,Terauchi+Aiken/05/SecureInformationFlowAsSafetyProblem}. 
The complexity of our algorithm is exponential in the size of
the program and doubly exponential in the size of the
formula---impractical for real-world programs, but at least a
demonstration that model checking of hyperproperties formulated in our logic is possible.

This paper contributes to theoretical and foundational aspects of security by:
\begin{compactitem}
\item defining two new program logics for expressing hyperproperties,
\item demonstrating that those logics are expressive enough to formulate important in\-for\-ma\-tion-flow policies, 
\item proving that the model checking problem is decidable, and
\item prototyping a new model checking algorithm and using it to verify security policies.
\end{compactitem}

The rest of the paper is structured as follows.  Section
\ref{sec:hyperltl} defines the syntax and semantics of HyperLTL.
Section~\ref{sec:examples} provides several example formulations of
information-flow policies.
Section~\ref{sec:hyperctl} defines the syntax and semantics of
\Hpctl{}.  Section~\ref{sec:relatedlogics} compares our two logics
with other temporal and epistemic logics.
Section~\ref{sec:ctlmodelchecking} obtains a model checking algorithm
for \Hpctl{}.  Section~\ref{sec:prototype} describes our prototype
model checker.  Section~\ref{sec:relatedwork} reviews related work,
and Section~\ref{sec:conclusion} concludes.

\section{HyperLTL}
\label{sec:hyperltl}

HyperLTL extends propositional linear-time temporal logic (LTL) \cite{Pnueli:1977:TLP} with explicit quantification over \emph{traces}.
A \emph{trace} is an infinite sequence of sets of \emph{atomic propositions}.
Let $\AP$ denote the set of all atomic propositions.
The set $\TR$ of all traces is therefore $(2^\AP)^{\omega}$.

We first define some notation for manipulating traces.
Let $t\in\TR$ be a trace. 
We use $t[i]$ to denote element $i$ of $t$, where $i \in \mathbb{N}$.
Hence, $t[0]$ is the first element of $t$.
We write $t[0,i]$ to denote the prefix of $t$ up 
to and including element $i$, and $t[i,\infty]$ to denote the infinite suffix of $t$ beginning with element $i$.  

\paragraph*{Syntax.} 
Let $\pi$ be a \emph{trace variable} from an infinite supply $\V$ of trace variables. 
Formulas of HyperLTL are defined by the following grammar:
\[
\begin{array}{lll}
\psi ~ ::= ~~ \exists \pi.\; \psi  &\vert \quad \forall \pi.\; \psi  &\vert \quad \varphi\\
\varphi ~ ::= ~~ a_{\pi} \quad &\vert \quad \neg\varphi \quad &\vert \quad 
\varphi\vee\varphi
\quad \vert \quad~ \LTLcircle\varphi \quad \vert \quad \varphi \U\varphi 
\end{array}
\]
\noindent 
Connectives $\exists$ and $\forall$ are universal and existential trace quantifiers, read as ``along some traces'' and ``along all traces.''
For example, $\forall \pi_1.\; \forall \pi_2.\; \exists \pi_3.\; \psi$ means that for all traces $\pi_1$ and $\pi_2$, there exists another trace $\pi_3$, such that $\psi$ holds on those three traces.
(Since branching-time logics also have explicit path quantifiers, it is natural to wonder why one of them does not suffice to formulate hyperproperties. Section~\ref{sec:ltlctl} addresses that question.)
A HyperLTL formula is \emph{closed} if all occurrences of trace variables are bound by a trace quantifier.

An atomic proposition $a$, where $a \in \AP$, expresses some fact about states.
Since formulas may refer to multiple traces, we need to disambiguate which trace the proposition refers to. 
So we annotate each occurrence of an atomic proposition with a trace variable $\pi$.  
Boolean connectives $\neg$ and $\vee$ have the usual classical meanings.
Implication, conjunction, and bi-implication are defined as syntactic sugar: $\varphi_1\rightarrow\varphi_2 \equiv \neg\varphi_1\vee\varphi_2$, and $\varphi_1\wedge\varphi_2 \equiv \neg(\neg\varphi_1\vee\neg\varphi_2)$, and $\varphi_1 \leftrightarrow \varphi_2 \equiv (\varphi_1 \rightarrow \varphi_2) \wedge (\varphi_2 \rightarrow \varphi_1)$.
True and false, written $\true$ and $\false$, are defined as $a_\pi\vee\neg a_\pi$ and $\neg\true$.

Temporal connective $\LTLcircle\varphi$ means that $\varphi$ holds on the next state of every quantified trace.
Likewise, $\varphi_1\U\varphi_2$ means that $\varphi_2$ will eventually hold of the states of all quantified traces that appear at the same index, and until then $\varphi_1$ holds.
The other standard temporal connectives are defined as syntactic sugar:
$\LTLdiamond\varphi\equiv\true\U\varphi$, and
$\LTLsquare\varphi\equiv\neg\LTLdiamond\neg\varphi$, and
$\varphi_1\W\varphi_2\equiv(\varphi_1\U\varphi_2)\vee\LTLsquare\varphi_1$, 
and $\varphi_1\R\varphi_2 \equiv \neg(\neg\varphi_1\U\neg\varphi_2)$.

We also introduce syntactic sugar for comparing traces.  
Given a set $P$ of atomic propositions, $\pi[0]\!=_P\!\pi'[0] \equiv\bigwedge_{a\in P} a_{\pi}\!\leftrightarrow\! a_{\pi'}$.  
That is, $\pi[0]\!=_P\!\pi'[0]$ holds whenever the first state in both $\pi$ and $\pi'$ agree on all the propositions in $P$. 
And $\pi\!=_P\!\pi' \equiv \LTLsquare (\pi[0]\!=_P\!\pi'[0])$, that is, all the positions of $\pi$ and $\pi'$ agree on $P$.  
The analogous definitions hold for $\neq$.

\paragraph*{Semantics.}
The validity judgment for HyperLTL formulas is written $\Pi \models_T \psi$, where $T$ is a set of traces, and $\Pi:\mathcal{V}\to \TR$ is a \emph{trace assignment} (i.e., a \emph{valuation}), which is a partial function mapping trace variables to traces.
Let $\Pi[\pi\mapsto t]$ denote the same function as $\Pi$, except that $\pi$ is mapped to $t$.
We write trace set $T$ as a subscript on $\models$, because $T$ propagates unchanged through the semantics; we omit $T$ when it is clear from context.
Validity is defined as follows:
\[
\begin{array}{l@{\hspace{1em}}c@{\hspace{1em}}l}
   \Pi\models_T\exists\pi.\; \psi & \text{iff} & \text{there exists $t \in T : \Pi[\pi \mapsto t] \models_T\psi$} \\
   \Pi\models_T\forall\pi.\; \psi & \text{iff} & \text{for all $t \in T : \Pi[\pi \mapsto t] \models_T\psi$} \\
  \Pi\models_T a_{\pi} & \text{iff} & a\in \Pi(\pi)[0]~ \\
  \Pi\models_T \neg \varphi & \text{iff} & \Pi\not\models_T\varphi~ \\
  \Pi\models_T \varphi_1 \vee \varphi_2 & \text{iff} & \Pi\models_T\varphi_1 \text{ or } \Pi\models_T\varphi_2~ \\
  \Pi\models_T\LTLcircle\varphi & \text{iff} & \Pi[1,\infty]\models_{T}\varphi~ \\
  \Pi\models_T\varphi_1\U\varphi_2 & \text{iff} & \text{there exists $i \geq 0 :$ } \Pi[i,\infty]\models_{T}\varphi_2 \\
  && \quad\text{and for all $0 \leq j < i$ we have $\Pi[j,\infty]\models_{T}\varphi_1$}
\end{array}
\]
Trace assignment \emph{suffix} $\Pi[i,\infty]$ denotes the trace assignment $\Pi'(\pi)=\Pi(\pi)[i,\infty]$ for all~$\pi$. 
If $\Pi \models_T \varphi$ holds for the empty assignment $\Pi$, then $T$ \emph{satisfies} $\varphi$.

We are interested in whether programs satisfy formulas, so we first
derive a set $T$ of traces from a program, first using \emph{Kripke
  structures} as a unified representation of programs.  A Kripke
structure $K$ is a tuple $(S,s_0,\delta,\AP,L)$ comprising a set of
\emph{states} $S$, an \emph{initial state} $s_0\in S$, a transition
function $\delta:S\to 2^{S}$, a set of \emph{atomic propositions}
$\AP$, and a labeling function $L:S\to 2^{\AP}$.  To ensure that all
traces are infinite, we require that $\delta(s)$ is nonempty for every
state $s$.


The set $\Traces(K)$ of traces of $K$ is the set of all sequences of labels produced by the state transitions of $K$ starting from initial state. Formally, $\Traces(K)$ contains trace $t$ iff  there exists a sequence $s_0 s_1 \ldots$ of states, such that $s_0$ is the initial state, and for all $i \geq 0$, it holds that $s_{i+1}\in\delta(s_{i})$; and $t[i] = L(s_i)$.
A Kripke structure $K$ \emph{satisfies} $\varphi$, denoted by $K\models \varphi$, if ${\Traces(K)}$ satisfies $\varphi$.

It will later be technically convenient to consider enlarging the set
$\AP$ of atomic propositions permitted by a Kripke structure to a set
$\AP'$, such that $\AP \subset \AP'$.  We extend $\Traces(K)$ into the
set of traces $\Traces(K,\AP')$ that is agnostic about whether each
new proposition holds at each state.
A trace $(P_0\cup P'_0) (P_1 \cup P'_1) \ldots\in\Traces(K,\AP')$ whenever
$P_0 P_1 \ldots\in\Traces(K)$, and for all $i\geq 0$:
$P'_i \subseteq \AP' \setminus \AP$.
The final conjunct requires every possible set of new atomic
propositions to be included in the traces.

\section{Security Policies in HyperLTL}
\label{sect:ExamplesSecurity}
\label{sec:examples}

We now put HyperLTL into action by formulating several \emph{infor\-mation-flow security policies}, which stipulate how information may propagate from inputs to outputs.  
Information-flow is a very active field in security; see~\cite{SabelfeldM03,FocardiG00} for surveys.

\paragraph*{Noninterference.} 
A program satisfies \emph{noninterference}~\cite{GoguenM82} when the
outputs observed by low-security users are the same as they would be
in the absence of inputs submitted by high-security users.  Since its
original definition, many variants with different execution models
have been named ``noninterference.''  For clarity of our examples, we
choose a simple \emph{state-based} synchronous execution model in
which atomic propositions of the traces contain the values of program
variables, and in which progress of time corresponds to execution
steps in the model.  We also assume
that the variables are partitioned into input and output variables,
and into two security levels, \emph{high} and \emph{low}.  (We could
handle lattices of security levels by conjoining several formulas that
stipulate noninterference between elements of the
lattice.)

\emph{Noninference}~\cite{McLean:1994:GeneralTheory} is a variant of noninterference that 
can be stated in our simple system model.
Noninference stipulates that, for all traces, the low-observable behavior must not change when all high inputs are replaced by a dummy input $\lambda$, that is, when the high input is removed. 
Noninference, a \emph{liveness hyperproperty}~\cite{ClarksonS10}, can be expressed in HyperLTL as follows:
\begin{equation}
\label{hp:noninference}
\forall\pi.\exists\pi'.~(\LTLsquare\lambda_{\pi'})~\wedge~\pi\!=_{L}\!\pi'
\end{equation}
where $\lambda_{\pi'}$ expresses that all of the high inputs in the current state of $\pi'$ are $\lambda$, and $\pi\!=_{L}\!\pi'$ expresses that all low variables in $\pi$ and $\pi'$ have the same values.

\paragraph*{Nondeterminism.} 
Noninterference was introduced for use with deterministic programs.
Nonetheless, nondeterminism naturally arises when program specifications  abstract from implementation details, so many variants of noninterference have been developed for nondeterministic programs. 
We formalize two variants here.

A (nondeterministic) program satisfies \emph{observational determinism}~\cite{ZdancewicM03} if every pair of traces with the same initial low observation remain indistinguishable for low users. 
That is, the program appears to be deterministic to low users. 
Programs that satisfy observational determinism are immune to \emph{refinement attacks}~\cite{ZdancewicM03}, because observational determinism is preserved under refinement. 
Observational determinism, a \emph{safety hyperproperty}~\cite{ClarksonS10}, can be expressed in HyperLTL as follows:
\begin{equation}
\label{hp:od}
\forall \pi.\forall\pi'. ~\pi[0]\!=_{L,\mathsf{in}}\!\pi'[0] ~\rightarrow~ \pi\!=_{L,\mathsf{out}}\!\pi'~
\end{equation}
where $\pi\!=_{L,\mathsf{in}}\!\pi'$ and $\pi\!=_{L,\mathsf{out}}\!\pi'$ express that both traces agree on the low input and low output variables, respectively. 

\emph{Generalized noninterference} (GNI)~\cite{McCullough:1987:GNI} permits nondeterminism in the low-observable behavior, but stipulates that low-security outputs may not be altered by the injection of high-security inputs.
Like noninterference, GNI was original formulated for event-based systems, but it can also be formulated for state-based systems~\cite{McLean:1994:GeneralTheory}.  
GNI is a liveness hyperproperty and can be expressed as follows:
\begin{equation}
\label{hp:gni}
\forall\pi.\forall\pi'.\exists\pi''.~\pi\!=_{H,\mathsf{in}}\!\pi'' ~\wedge~ \pi'\!=_{L}\!\pi''
\end{equation}
%
The trace $\pi''$ in (\ref{hp:gni}) is an \emph{interleaving} of the
high inputs of the first trace and the low inputs and outputs of the
second trace.  Other security policies based on interleavings, such as
\emph{restrictiveness}~\cite{McCullough:1990:Hookup},
\emph{separability}~\cite{McLean:1994:GeneralTheory}, and
\emph{forward correctability}~\cite{Millen94} can similarly be
expressed in HyperLTL.

\paragraph*{Declassification.}
Some programs need to reveal secret information to fulfill functional requirements.  
For example, a password checker must reveal whether the entered password is correct or not.  
The noninterference policies we have examined so far prohibit such behavior. 
More flexible security policies have been designed to permit \emph{declassification} of information; see~\cite{Sabelfeld+Sands/2005/DimensionsAndPrinciplesOfDeclassification} for a survey.

With HyperLTL, we easily specify customized declassification policies. 
For example, suppose that a system inputs a password in its initial state, then declassifies whether that password is correct in the next state.  
The following policy (a safety hyperproperty) stipulates that leaking the correctness of the password is permitted, but that otherwise observational determinism must hold: 
\begin{equation}
\forall \pi.\forall\pi'.  (\pi[0]\!=_{L,\mathsf{in}}\!\pi'[0]~\wedge~\LTLcircle(\mathit{pw}_\pi\!\leftrightarrow\mathit{pw}_{\pi'})) \rightarrow \pi\!=_{L,\mathsf{out}}\!\pi'
\end{equation}
where atomic proposition $\mathit{pw}$ expresses that the entered password is correct. 

\paragraph*{Quantitative noninterference.}
Quantitative information-flow policies~\cite{Gray/1991/TowardAMathematicalFoundationForIFSecurity,ClarkHM05,Koepf+Basin/2007/AnInformationTheoreticModelForAdaptiveSideChannelAttacks,ClarksonMS09} permit leakage of information at restricted rates.  
One way to measure leakage is with \emph{min-entropy}~\cite{Smith/2009/OnTheFoundationsOfQantitativeInformationFlow}, which quantifies the amount of information an attacker can gain given the answer to a single guess about the secret.
The \emph{bounding problem}~\cite{Yasuoka+Terauchi/2010/OnBoundingProblemsOfQuantitativeInformationFlow} for min-entropy is to determine whether that amount is bounded from above by a constant $n$. 
Assume that the program whose leakage is being quantified is deterministic, and assume that the secret input to that program is uniformly distributed.
The bounding problem then reduces to determining that there is no tuple of $2^n+1$ low-distinguishable traces~\cite{Smith/2009/OnTheFoundationsOfQantitativeInformationFlow,Yasuoka+Terauchi/2010/OnBoundingProblemsOfQuantitativeInformationFlow} (a safety hyperproperty).   
We can express that as follows:
\begin{equation}
\neg\exists\pi_0.\;\dots\;.~\exists\pi_{2^n} . ~\Big(\bigwedge_{i}\pi_i=_{L,\mathsf{in}}\pi_0\Big) ~\wedge \bigwedge_{i\not=j}\pi_i\not=_{L,\mathsf{out}}\pi_j
\end{equation}
The initial negation can pushed inside to obtained a proper HyperLTL
formula.

Quantitative flow and entropy naturally bring to mind probabilistic systems.  
We haven't yet explored extending our logics to enable specification of policies that involve probabilities.
Perhaps techniques previously used with epistemic logic~\cite{GrayS98} could be adapted; we leave this as future work.


\paragraph*{Event-based systems.}
Our examples above use a synchronous state-based execution model.
Many formulations of security policies, including the original formulation of noninterference~\cite{GoguenM82}, instead use an \emph{event-based} system model, in which input and output events are not synchronized and have no relation to time. 
HyperLTL can express policies for asynchronous execution models, too. 
For example, HyperLTL can express the original definition of noninterference~\cite{GoguenM82} and observational determinism; 
\seeappendix{Appendix~\ref{sec:eventbased}} shows how.
The key idea is to allow the system to stutter and to quantify over all stuttered versions of the executions. 
We characterize the correct \emph{synchronization} of a pair of traces as having updates to low variables only at the same positions. 
We then add an additional antecedent to the policy formula to require
that only those pairs of traces that are synchronized correctly need
to fulfill the security condition.
\section{\Hpctl{}}
\label{sec:hyperctl}

HyperLTL was derived from LTL by extending the models of formulas from single traces to sets of traces.
However, like LTL, HyperLTL is restricted to linear time and cannot
express branching-time properties (e.g., all states that succeed the
current state satisfy some proposition).
We show now that a branching-time logic for
hyperproperties could be derived from a branching-time logic for trace
properties, such as
\ctlstr{}~\cite{EmersonH86}.
We call this logic \Hpctl{}.
The key idea is again to use sets instead of singletons as the models
of formulas.

\paragraph*{Syntax.} 
\Hpctl{} generalizes HyperLTL by allowing quantifiers to appear
anywhere within a formula. 
Quantification in {\Hpctl} is over \emph{paths} through a Kripke structure.
A path $p$ is an infinite sequence of pairs of a state and a set of atomic propositions.  
Hence, a path differs from a trace by including a state of the Kripke structure in each element.  
Formally,
$p \in (S\times 2^\AP)^\omega$, where $S$ is the states of the Kripke
structure.
As with traces, $p[i]$ denotes the element $i$ of $p$, and $p[i,\infty]$ denotes the suffix of $p$ beginning with element $i$.
We also define a new notation:  let $p(i)$ be the state in element $i$ of $p$.

In \Hpctl{}, $\pi$ is a \emph{path variable} and $\exists \pi$ is a \emph{path quantifier}.
Formulas of \Hpctl{} are defined by the following grammar:
\[
\begin{array}{llllllllllllllllll}
\varphi &::= & a_{\pi} & \vert & \neg\varphi & \vert & 
\varphi\vee\varphi \vert & \LTLcircle\varphi & \vert & \varphi \U\varphi & \vert & \exists \pi.\; \varphi & 
\end{array}
\]
We introduce all the syntactic sugar for derived logical operators,
as for HyperLTL.
The universal quantifier can now be defined as syntactic sugar, too: $\forall \pi.\; \varphi \equiv \neg\exists\pi.\; \neg\varphi$. 
A \Hpctl{} formula is \emph{closed} if all occurrences of some path variable
$\pi$ are in the scope of a path quantifier. A \Hpctl{}
\emph{specification} is a Boolean combination of closed \Hpctl{}
formulas each beginning with a quantifier (or its negation).


\paragraph*{Semantics.} 

The validity judgment for \Hpctl{} formulas is written $\Pi \models_K \varphi$, where $K$ is a Kripke structure, and $\Pi:\mathcal{V}\to (S\times 2^{\AP})^{\omega}$ is a \emph{path assignment}, which is a partial function mapping path variables to paths. 
We write $K$ as a subscript on $\models$, because $K$ propagates unchanged through the semantics; we omit $K$ when it is clear from context.
Validity is defined as follows:
\[
\begin{array}{l@{\hspace{1em}}c@{\hspace{1em}}l}
  \Pi\models_K a_{\pi} & \text{iff} & a\in L\big(\Pi(\pi)(0)\big) \\
  \Pi\models_K \neg \varphi & \text{iff} & \Pi\not\models_K\varphi \\
  \Pi\models_K \varphi_1 \vee \varphi_2 & \text{iff} & \Pi\models_K\varphi_1 \text{ or } \Pi\models\varphi_2 \\
  \Pi\models_K\LTLcircle\varphi & \text{iff} & \Pi[1,\infty]\models_K\varphi \\
  \Pi\models_K\varphi_1\U\varphi_2 & \text{iff} & \text{there exists $i \geq 0 :$} \Pi[i,\infty]\models_K\varphi_2 \\
  && \quad\text{and for all $0 \leq j < i$ we have $\Pi[j,\infty]\models_K\varphi_1$}  \\
  \Pi\models_K\exists\pi.\; \varphi & \text{iff} & \text{there exists } p \in \Paths(K,\Pi(\pi')(0)) : \Pi[\pi \mapsto p] \models_K \varphi
\end{array}
\]
In the clause for existential quantification, $\pi'$ denotes the path variable most recently added to $\Pi$ (i.e., closest in scope to $\pi$).
If $\Pi$ is empty, let $\Pi(\pi')(0)$ be the initial state of $K$. 
It would be straightforward but tedious to further formalize this notation, so we omit the details. 
That clause uses another new notation, $\Paths(K,s)$, which is the set of paths produced by Kripke structure $K$ beginning from state $s$.
Formally, $\Paths(K,s)$ contains path $p$, where $p = (s_0, P_0) (s_1, P_1) \ldots$ and $P_i \in 2^\AP$, iff there exists a sequence $s_0 s_1 \ldots$ of states, such that $s_0$ is $s$, and for all $i \geq 0$, it holds that $s_{i+1}\in\delta(s_{i})$ and $P_i = L(s_i)$.

Like with $\Traces$ in HyperLTL, we define $\Paths(K,s,\AP')$ as follows:
We have $(s_0,P_0\cup P'_0) (s_1,P_1 \cup P'_1) \ldots\in\Paths(K,s,\AP')$ iff $(s_0,P_0) (s_1,P_1) \ldots\in\Paths(K,s)$, and for all $i\geq 0$, it holds that $P'_i \subseteq \AP' \setminus \AP$.  

We say that a Kripke structure
$K$ \emph{satisfies} a \Hpctl{} specification $\varphi$, denoted by $K\models
\varphi$, if $\Pi \models_K \varphi$ holds true for the empty assignment.  The
\emph{model checking problem} for \Hpctl{} is to decide
whether a given Kripke structure satisfies a given \Hpctl{} specification.

\paragraph*{\Hpctl{} vs. HyperLTL.} 
LTL can be characterized as the fragment of \ctlstr{} containing formulas  of the form $\A \varphi$, where $\A$ is the \ctlstr{} universal path quantifier and $\varphi$ contains no quantifiers. 
Formula A$\varphi$ is satisfied in \ctlstr{} by a Kripke structure iff $\varphi$ is satisfied in LTL by the traces of the Kripke structure. 

A similar relationship holds between HyperLTL and \Hpctl{}:
HyperLTL can be characterized as the fragment of \Hpctl{} containing formulas in \emph{prenex} form---that is, a series of quantifiers followed by a quantifier-free formula.
A formula $\varphi$ in prenex form is satisfied in \Hpctl{} by a Kripke structure iff $\varphi$ is satisfied in HyperLTL by the traces of the Kripke structure.
\Hpctl{} is a strict generalization of HyperLTL, which extends HyperLTL with the capability to use quantified formulas as subformulas in the scope of temporal operators. 
For example, consider the program $(l := 0 ~\talloblong~ l := 0) ~\talloblong~ (l := 1 ~\talloblong~ l := 1)$.
A low-observer can infer which branch of the center-most nondeterministic choice is taken, but not which branch is taken next.
This is expressed by \Hpctl{} formula $\forall \pi.\ \LTLcircle\, \forall \pi'.\ \LTLcircle\, (l_{\pi}\leftrightarrow l_{\pi'})$. 
There is no equivalent HyperLTL formula. 

As we show in Subsection~\ref{ssect:SecLTL}, the temporal logic SecLTL~\cite{Dimitrova:2012:SecLTL} can be encoded in \Hpctl{}, but not in HyperLTL. 
This provides further examples that distinguish HyperLTL and \Hpctl{}.


\section{Related Logics}
\label{sect:relations}
\label{sec:relatedlogics}

We now examine the expressiveness of HyperLTL
and \Hpctl{} compared to several existing temporal logics: LTL, {\ctlstr}, QPTL, ETL, and SecLTL.
There are many other logics that we could compare to in future work; some of those
are discussed in Section~\ref{sec:relatedwork}.


\subsection{Temporal Logics}
\label{sec:ltlctl}

\Hpctl{} is an extension of \ctlstr{} and therefore
subsumes LTL, CTL, and \ctlstr{}.  Likewise, HyperLTL subsumes
LTL.  But temporal logics LTL, CTL, and \ctlstr{} cannot express information-flow policies.
LTL formulas express properties of individual execution paths.
All of the noninterference properties of Section~\ref{sec:examples}
are properties of sets of execution
paths~\cite{McLean:1994:GeneralTheory,ClarksonS10}.  Explicit path quantification does
enable their formulation in HyperLTL. 

Even though
CTL and \ctlstr{} have explicit path quantifiers, information-flow security policies, such as observational determinism~\eqref{hp:od},
cannot be expressed with them.
Consider the following fragment of \ctlstr{} semantics:
\begin{equation*}
\begin{array}{ll}
s \models \A \varphi &~\text{iff for all $p \in \Paths(K,s):~p \models \varphi$} \\
p \models \Phi &~\text{iff $p(0) \models \Phi$}
\end{array}
\end{equation*}
\emph{Path} formulas $\varphi$ are modeled by paths $p$, and \emph{state} formulas $\Phi$ are modeled by states $s$.
State formula $\A \varphi$ holds at state $s$ when all paths proceeding from $s$ satisfy $\varphi$.
Any state formula $\Phi$ can be treated as a path formula, in which case $\Phi$ holds of the path iff $\Phi$ holds in the first state on that path.
Using this semantics, consider the meaning of $\AAx \varphi$, which is the form of observational determinism~\eqref{hp:od}:
\begin{equation*}
\begin{array}{l}
s \models \AAx \varphi \\
= \text{for all $p \in \Paths(K,s):~p \models A\varphi$} \\
= \text{for all $p \in \Paths(K,s)$ and $p' \in \Paths(K,s):~p' \models \varphi$} 
\end{array}
\end{equation*}
Note how the meaning of $\AAx \varphi$ is ultimately determined by the meaning of $\varphi$, where $\varphi$ is modeled by the single path $p'$.  
Path $p$ is ignored in determining the meaning of $\varphi$;
the second universal path quantifier causes $p$ to ``leave scope.''
Hence $\varphi$ cannot express correlations between $p$ and $p'$, as observational determinism requires.
So \ctlstr{} path quantifiers do not suffice to express information-flow policies.
Neither do CTL path quantifiers, because CTL is a sub-logic of \ctlstr{}. 
In fact, even the modal $\mu$-calculus does not suffice to express some information-flow properties~\cite{AlurCZ06}.

By using the self-composition construction~\cite{BartheDR04,Terauchi+Aiken/05/SecureInformationFlowAsSafetyProblem}, it is possible to express
relational noninterference in
CTL~\cite{BartheDR04}
and observational determinism in
\ctlstr{}~\cite{HuismanWS/06/TLCharacterisationOfOD}.  Those approaches
resemble \Hpctl{}, but \Hpctl{} formulas express policies
directly over the original system, rather than over a self-composed
system.  Furthermore, the self-composition approach does not seem
capable of expressing policies that require both universal and
existential quantifiers over infinite executions,
like noninference~\eqref{hp:noninference} and generalized noninterference~\eqref{hp:gni}.
It is straightforward to express such policies in our logics.

\paragraph*{QPTL.}
Quantified propositional temporal logic
(QPTL)~\cite{Sistla+Vardi+Wolper/1987/TheComplementationProblemForBuchiAutomata}
extends LTL with quantification over propositions, whereas
HyperLTL extends LTL with quantification over traces.  
Quantification over traces is more powerful than quantification over
propositions, as we now show.

QPTL formulas are generated by the following grammar,
where $a\in\AP$:
\[
\psi ~~ ::= ~~ a ~~\vert~~ \neg\psi ~~\vert~~
\psi\vee\psi
~~\vert~~ \LTLcircle\psi ~~\vert~~ \LTLdiamond\psi
~~\vert~~ \exists a.\; \psi
\] 
All QPTL connectives have the same semantics as in LTL, except 
for propositional quantification:
\[
p\models\exists a.\psi\;\;\;\text{ iff }\;\;\; \text{there exists }  p'\in (2^{\AP})^{\omega}:~p=_{\AP\setminus a}p' \text{ and }p'\models\psi\;.
\]

\begin{theorem}
HyperLTL subsumes QPTL, but QPTL does not subsume HyperLTL.
\end{theorem}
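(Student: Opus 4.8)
The plan is to prove the two halves separately. For "HyperLTL subsumes QPTL," the idea is to simulate propositional quantification $\exists a.\psi$ by trace quantification, using the observation already introduced in the paper that one may enlarge the set of atomic propositions of a Kripke structure so that traces become agnostic about the new propositions (the $\Traces(K,\AP')$ construction). First I would fix a QPTL formula $\psi$ over propositions $\AP$ and a single distinguished trace variable, say $\pi_0$, along which the original LTL-like connectives of $\psi$ are interpreted. Then I would translate $\psi$ recursively into a HyperLTL formula $\hat\psi$: atoms $a$ become $a_{\pi_0}$, Boolean and temporal connectives are preserved verbatim, and $\exists a.\psi'$ becomes $\exists\pi.\,(\pi =_{\AP\setminus\{a\}} \pi_0') \wedge \hat{\psi'}[\pi/\pi_0']$, where $\pi_0'$ is the "current reference trace" and inside $\hat{\psi'}$ all occurrences of $a_{\pi_0'}$ are redirected to $a_\pi$ while every other proposition is still read off $\pi$ (which agrees with $\pi_0'$ on all propositions except $a$). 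The outermost formula is then $\forall\pi_0.\,\hat\psi$ (or $\exists\pi_0$, since $K$ has a single initial state the choice is immaterial as long as we run the argument over $\Traces(K,\AP')$ with $\AP'$ large enough to contain every quantified proposition). The key lemma to state and prove by structural induction is: for every subformula, $p\models\psi$ in QPTL iff the corresponding trace assignment satisfies $\hat\psi$ over $\Traces(K,\AP')$ — the crux being that "there exists $p'$ agreeing with $p$ off $a$" is faithfully mirrored by "there exists a trace in $\Traces(K,\AP')$ agreeing with the reference trace off $a$," which holds precisely because the $\AP'$-enlargement forces every combination of truth values for the fresh propositions to occur.

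For the non-subsumption direction, "QPTL does not subsume HyperLTL," the plan is a standard separating-example argument: exhibit a HyperLTL property that is not expressible in QPTL. The natural candidate is observational determinism~\eqref{hp:od} or, more simply, the two-trace equality-style property $\forall\pi.\forall\pi'.\,(a_\pi \leftrightarrow a_{\pi'})$ restricted to the initial state, or even $\forall\pi.\forall\pi'.\,\LTLsquare(a_\pi\leftrightarrow a_{\pi'})$, which says "all traces of the system agree everywhere on $a$." QPTL, like LTL, defines a property of a single trace and hence (via $\Traces(K)$) a trace property in the classical sense — its models form a set of traces closed under the usual trace-property structure — whereas the HyperLTL formula above is a genuine hyperproperty that distinguishes, e.g., the Kripke structure whose traces are $\{a^\omega\}$ from the one whose traces are $\{a^\omega, (\emptyset)^\omega\}$, while no QPTL formula can distinguish these at the level of "which traces satisfy it" in a way that tracks the set. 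I would make this precise by noting that $K\models\psi$ for a QPTL formula $\psi$ depends only on the set $\Traces(K)$ through the condition "$t\models\psi$ for all $t\in\Traces(K)$" (QPTL being interpreted per-trace), so QPTL can only express subset-closed, intersection-respecting trace properties lifted to sets, and then checking that the chosen HyperLTL formula violates the requisite closure.

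The main obstacle I anticipate is in the first direction: getting the bookkeeping of the translation right so that, when a propositional quantifier $\exists a$ is nested inside temporal operators and inside other propositional quantifiers, the "current reference trace" is threaded correctly and the fresh trace $\pi$ agrees with it on exactly the right set of propositions at exactly the right positions. One has to be careful that temporal operators commute properly with the substitution $[\pi/\pi_0']$ — this works because the HyperLTL suffix operation $\Pi[i,\infty]$ acts uniformly on all trace variables, so $\pi =_{\AP\setminus\{a\}} \pi_0'$ is preserved along suffixes — and that the enlargement $\AP'$ is chosen once, globally, to contain all propositions quantified anywhere in $\psi$. Once the invariant is stated crisply, the induction itself is routine; the separating example in the second direction is easy by comparison and mainly needs a clean statement of what "property of a single trace" buys us.
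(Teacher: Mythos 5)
Your first direction rests on the same core idea as the paper (simulate $\exists a$ by trace quantification over $\Traces(K,\AP')$, where fresh propositions are unconstrained), but as written it has two genuine gaps. (a)~You never rename bound propositions apart from the structure's alphabet, yet your key lemma silently assumes they are fresh (``every combination of truth values for the fresh propositions''). QPTL may quantify a proposition $a\in\AP$ that the Kripke structure itself labels; then the witnessing trace $\pi$ is drawn from $\Traces(K,\AP')$, whose $a$-component is still dictated by $K$, whereas QPTL's $p'$ is arbitrary on $a$. Concretely, for the one-state structure labelled $\{a\}$ with a self-loop, the QPTL formula $\exists a.\,\neg a$ holds, but your translation amounts to $\exists \pi.\,\neg a_\pi$ over $\Traces(K,\AP')$, which is false. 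The paper's proof begins by $\alpha$-renaming every bound proposition to a fresh name outside $\AP$ --- a one-line step, but exactly the one your argument is missing (after it, your ``copy everything and constrain agreement off $a$'' bookkeeping works, and in fact the paper's leaner version needs no agreement constraint at all: only the renamed proposition is read off $\pi_a$). (b)~Your translation is compositional: each $\exists a$ is replaced in place by $\exists\pi$, so when the QPTL formula has quantifiers nested under temporal operators (which its grammar allows), the output has trace quantifiers inside $\LTLcircle$/$\U$ --- that is a \Hpctl{} formula, not a HyperLTL formula, and quantifiers cannot in general be extruded past temporal operators in these logics (that is precisely what separates them). You must first put the QPTL formula into prenex normal form, as the paper does. (Also, the parenthetical that $\forall\pi_0$ versus $\exists\pi_0$ is immaterial is wrong whenever $\psi$ contains unquantified propositions; matching ``every trace of $K$ satisfies $\psi$'' requires $\forall\pi_0$.)

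For the converse direction your candidate formula is fine, but the justification is incorrect as stated: QPTL \emph{does} distinguish the two structures you name --- $\LTLsquare a$ holds when the trace set is $\{a^\omega\}$ and fails for $\{a^\omega,\emptyset^\omega\}$ --- so ``no QPTL formula can distinguish these'' is false. The correct argument is a closure argument, but the relevant closure is under \emph{unions}, not the subset/intersection conditions you invoke: $K\models\psi$ for QPTL means every trace of $K$ satisfies $\psi$, and any such universally lifted trace property is preserved when one takes the union of two satisfying trace sets. Your formula $\forall\pi.\forall\pi'.\LTLsquare(a_\pi\leftrightarrow a_{\pi'})$ holds of $\{a^\omega\}$ and of $\{\emptyset^\omega\}$ but fails for their union; note it \emph{is} subset-closed, so subset-closure alone would not separate it. The paper's example $\exists\pi.\LTLcircle a_\pi$ is simpler still, since universal liftings are subset-closed while an existential requirement is not.
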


\begin{proof}[\textit{Proof sketch}]
To express a QPTL formula in HyperLTL, rewrite the formula to prenex form, and rename all bound propositions with unique fresh names from a set $\AP'$.
These propositions act as free variables, which are unconstrained because they do not occur in the Kripke structure.
Replace each propositional quantification $\exists a$ in the QPTL formula by a path quantification $\exists \pi_a$ in the
HyperLTL formula.
And replace each occurrence of $a$ by $a_{\pi_a}$.
The result is a HyperLTL formula that holds iff the original QPTL formula holds.

But not all HyperLTL formulas can be expressed in QPTL.  
For example, QPTL cannot express properties that require the
existence of paths, such as $\exists \pi . \LTLcircle a_\pi$.
\end{proof}

In Section~\ref{sect:MCandSat}, we exploit the relationship
between HyperLTL and QPTL to obtain a model checking algorithm for
HyperLTL.

%

\subsection{Epistemic Logics}

HyperLTL and \Hpctl{} express information-flow policies by explicit quantification over multiple traces or paths. 
Epistemic temporal logic has also been used to express such policies~\cite{HalpernO08,Balliu::Epis,Chadha+al/2009/EpistemicLogicsAppliedPiCalculus,vanderMeyden+Wilke/2007/PreservationOfEpistemicPropertiesInSecurityProtocolImplementations} by implicit quantification over traces or paths with the \emph{knowledge} connective ${\sf K}$ of epistemic logic~\cite{Fagin-book-95}.
We do not yet know which is more powerful, particularly for information-flow policies.  
But we do know that HyperLTL subsumes a common epistemic temporal logic.

Define ETL (epistemic temporal logic) to be LTL with the addition of ${\sf K}$ under its perfect recall semantics~\cite{Meyden/1993/AxiomsForKnowledgeAndTimeInDistributedSystemsWithPerfectRecall,Balliu::Epis,Fagin-book-95}.
The model of an ETL formula is a pair $(K,\Agts)$ of a Kripke structure $K$ and a set $\Agts$ of equivalence relations on $\AP$, called the \emph{agents}; each relation models the knowledge of an agent.
(\emph{Interpreted systems}, rather than Kripke structures, are often used to model ETL formulas~\cite{Meyden/1993/AxiomsForKnowledgeAndTimeInDistributedSystemsWithPerfectRecall,Fagin-book-95}.
Interpreted systems differ in style but can be translated to our formulation.)
In the \emph{asynchronous} semantics of ETL, ${\sf K}_{A}\psi$ holds on state $i$ of trace $t\in\Traces(K)$, denoted $t,i\models {\sf K}_A\varphi$, iff
\[
\begin{array}{r}
\text{for all }t'\in\Traces(K) :~ t[0,i]\!\approx_A\! t'[0,i] \text{ implies } t',i\!\models\varphi,
\end{array}
\]
where $\approx_A$ denotes stutter-equivalence on finite traces with respect to $A$. 
%
In the \emph{synchronous} semantics of ETL, stutter-equivalence is replaced by stepwise-equivalence.

The following two theorems show that HyperLTL subsumes ETL:


\begin{theorem}\label{thm:synchEpistemic}
In the synchronous semantics, for every ETL formula $\psi$ and every set $\Agts$ of agents, there exists a HyperLTL formula $\varphi$ such that for all Kripke structures $K$, we have $(K,\Agts)\models\psi$ iff $K\models\varphi$. 
\end{theorem}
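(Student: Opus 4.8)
The plan is to proceed by structural induction on the ETL formula $\psi$, after first putting $\psi$ in negation normal form so that negation touches only atoms and knowledge subformulas. I define a translation $\mathrm{tr}(\cdot,\pi)$ taking an ETL subformula and the trace variable $\pi$ on which it is to be evaluated to a HyperLTL formula, modulo a final prenexing step. An atom $a$ becomes $a_\pi$; the connectives $\neg,\vee,\LTLcircle,\U$ are translated homomorphically, since HyperLTL has the very same operators and their semantics advance all traces in lockstep. The only interesting case is the knowledge connective: an occurrence of $\mathsf{K}_A\psi'$ contributes a universal trace quantifier $\forall\pi'$ (and $\neg\mathsf{K}_A\psi'$ an existential one), whose body asserts that whenever $\pi'$ agrees with the reference trace $\pi$ on $A$'s observations \emph{at every position up to the one currently under consideration}, then $\mathrm{tr}(\psi',\pi')$ holds of $\pi'$ there. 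Because the semantics is synchronous, the finite-trace relation $t[0,i]\approx_A t'[0,i]$ is just the pointwise conjunction $\bigwedge_{j\le i}(t[j]\sim_A t'[j])$; write $\mathrm{agr}_A(\pi,\pi')$ for the single-state HyperLTL formula, a conjunction of bi-implications $a_\pi\leftrightarrow a_{\pi'}$, that captures the one-step equivalence $\pi[0]\sim_A\pi'[0]$.

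The core difficulty is that HyperLTL is prenex, so $\forall\pi'$ must be hoisted all the way to the front, yet the position $i$ at which a (possibly deeply nested) $\mathsf{K}_A\psi'$ is evaluated is fixed by the surrounding temporal operators and is, in general, existentially quantified (when $\mathsf{K}_A\psi'$ lies under an $\U$) or universally quantified (when it lies under a $\G$); hence the agreement constraint ``$\pi,\pi'$ agree on $A$ throughout $0,\dots,i$'' is a bounded-history property that cannot be stated at position $i$ directly (HyperLTL lacks past operators) and is \emph{not} equivalent to global agreement $\G\,\mathrm{agr}_A(\pi,\pi')$. My remedy uses two gadgets, selected by the temporal context of the knowledge operator. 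When the evaluation position is existentially determined (prototypically $\psi$ of the form $\alpha\U\mathsf{K}_A\psi'$), I \emph{name} it with a fresh atomic proposition: enlarging $\AP$ with a fresh $q$ and working over $\Traces(K,\AP\cup\{q\})$ --- legitimate by the $\Traces(K,\AP')$ construction, since $q\notin\AP$ leaves its truth pattern along a trace entirely free --- I introduce an existentially quantified trace $\pi_q$ that carries $q$ at exactly one position (the position $i$), require the witness position of the outer $\U$ to be this $q$-marked one, and then, under $\forall\pi'$, impose
\[
\bigl(\mathrm{agr}_A(\pi,\pi')\bigr)\U\bigl(q_{\pi_q}\wedge\mathrm{agr}_A(\pi,\pi')\bigr)\;\longrightarrow\;\LTLdiamond\bigl(q_{\pi_q}\wedge \mathrm{tr}(\psi',\pi')\bigr),
\]
placing $\exists\pi_q$ immediately before the $\forall\pi'$ it governs, so that the $\exists i\,\forall t'$ pattern of the knowledge operator is reproduced. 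When the position is universally determined (prototypically $\G\,\mathsf{K}_A\psi'$), no marker is needed and the obligation becomes the weak-until $\mathrm{tr}(\psi',\pi')\W\neg\mathrm{agr}_A(\pi,\pi')$, which says exactly that $\mathrm{tr}(\psi',\pi')$ holds at every position up to but not including the first divergence of $\pi'$ from $\pi$. Nested knowledge operators are handled by iterating: the inner reference trace is the trace bound by the enclosing $\mathsf{K}$, the inner marker (if any) is constrained to occur no earlier than the outer one, and every new quantifier is appended to the prefix in the order its knowledge operator is reached.

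Pulling the resulting $\forall\pi'$'s and $\exists\pi_q$'s to the front in that order --- each $\exists\pi_q$ before its $\forall\pi'$, outer knowledge before inner --- leaves a quantifier-free LTL body over the trace variables, and since the empty assignment starts all front-quantified traces at the initial state, this is a genuine HyperLTL formula $\varphi$ over $\AP$ together with the finitely many fresh markers. Correctness, $(K,\Agts)\models\psi$ iff $K\models\varphi$, then follows by induction on $\psi$: the Boolean and temporal cases are immediate, and for $\mathsf{K}_A\psi'$ one checks that a choice of the marker trace $\pi_q$ is precisely a choice of evaluation position $i$, that the $\U$-formula (resp.\ the $\W$-formula) above holds for $\pi'$ exactly when $\pi'$ is $A$-prefix-equivalent to $\pi$ through $i$ (resp.\ throughout the relevant window), and that, by the induction hypothesis applied with $\pi'$ as the new reference trace, $\LTLdiamond(q_{\pi_q}\wedge\mathrm{tr}(\psi',\pi'))$ captures $t',i\models\psi'$.

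I expect the main obstacle to be exactly this interaction of the prenex restriction with the history-sensitivity of synchronous knowledge: making the marker propositions and the $\U$/$\W$ gadgets interlock so that (i) a deeply nested knowledge operator is evaluated at the context-determined position, (ii) a trace $\pi'$ that diverges from its reference trace \emph{before} the relevant window vacuously discharges its obligation rather than violating a constraint, and (iii) the assembled quantifier prefix is a legal linear order faithful to the $\mathsf{K}/\neg\mathsf{K}$ alternation. The homomorphic cases, the bookkeeping of the current reference trace, and the final correctness induction should be routine by comparison.
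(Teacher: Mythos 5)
You correctly isolate the real difficulty (the prenex restriction plus the history-sensitivity of synchronous knowledge, with no past operators) and your basic devices---fresh marker propositions carried on auxiliary quantified traces, justified by the $\Traces(K,\AP')$ construction, together with ``agreement-until-the-marker'' gadgets---are in the same spirit as the paper's proof. But there is a genuine gap: your construction is a case analysis on the temporal context of a knowledge operator (``evaluation position existentially determined'' vs.\ ``universally determined''), and this analysis is neither exhaustive nor compositional. A knowledge subformula must in general be certified at a set of positions that is unbounded and trace-dependent: consider $\LTLsquare\LTLdiamond\,\mathsf{K}_A\psi'$, or $\LTLsquare(\alpha\vee\mathsf{K}_A\psi')$ where the positions at which the knowledge disjunct is ``used'' depend on where $\alpha$ fails, or $\mathsf{K}_A\psi'$ on the left of an $\U$. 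Your gadget names a \emph{single} position with a trace $\pi_q$ carrying $q$ exactly once, and it does not extend to marking a set of positions: with several $q$-positions, the implication $\bigl(\mathrm{agr}_A\,\U\,(q_{\pi_q}\wedge\mathrm{agr}_A)\bigr)\to\LTLdiamond(q_{\pi_q}\wedge\mathrm{tr}(\psi',\pi'))$ mis-binds positions---a $\pi'$ that agrees up to one marked position only needs to satisfy $\mathrm{tr}(\psi',\pi')$ at \emph{some} marked position, not at that one. Since all trace quantifiers must be hoisted to the prefix, a single $\forall\pi'$ has to serve every evaluation position simultaneously, and your formula gives it no way to do so. The sketch of ``iterating'' for nested operators, with inner markers ``no earlier than'' outer ones, inherits the same problem.

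The paper's proof closes exactly this hole with a uniform, non-syntax-directed transformation, one knowledge operator at a time: the occurrence of $\mathsf{K}_A\psi'$ is replaced by a fresh proposition $u$ on an \emph{existentially} quantified trace, whose $u$-sequence marks the (arbitrarily many) positions at which the operator is claimed to hold, and a second, \emph{universally} quantified auxiliary trace carries a proposition $t$ constrained by $t\U\LTLsquare\neg t$, acting as a prefix selector; the obligation $\LTLsquare(t\to\pi^*[0]\!=_A\!\pi''[0])\to\LTLsquare(t\wedge u\to[\psi']_{\pi''})$ then enforces, for every selected prefix and hence for every marked position separately, the agreement-up-to-that-position condition with a single hoisted $\forall\pi''$ (dually $\exists\pi''$ for negative occurrences). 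To repair your argument you would need to replace the single-position marker by such a marked-set-plus-universally-quantified-prefix-selector pair (or an equivalent device); without it, the translation as you describe it fails already on formulas like $\LTLsquare\LTLdiamond\,\mathsf{K}_A\psi'$.
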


\newcounter{thm-asynchEpistemic}
\setcounter{thm-asynchEpistemic}{\value{theorem}}
\begin{theorem}\label{thm:asynchEpistemic}
In the asynchronous semantics, for every ETL formula $\psi$ and every set $\Agts$ of agents, there exists a HyperLTL formula $\varphi$ such that for all asynchronous Kripke structures $K$, we have $(K,\Agts)\models\psi$ iff $K\models\varphi$.  
\end{theorem}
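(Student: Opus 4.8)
The plan is to reuse the compilation from the proof of Theorem~\ref{thm:synchEpistemic}, modifying only the clause for the knowledge connective so that the guard it generates tests \emph{stutter}-equivalence of the $A$-observed prefixes rather than the stepwise prefix-equivalence used in the synchronous case. Recall the shape of that compilation: an ETL formula $\psi$ is translated by recursion on its structure into a HyperLTL formula; atoms, the Boolean connectives, and the temporal operators $\LTLcircle$ and $\U$ are handled homomorphically (HyperLTL's lockstep temporal semantics matches ETL's along a single reference path), an atom $a$ under reference path $\pi$ becomes $a_\pi$, and each occurrence of ${\sf K}_A\chi$ introduces a fresh universally quantified path variable $\pi'$, hoisted into the prenex in an order respecting the nesting of the knowledge operators, guarded by a formula that asserts ``$\pi$ and $\pi'$ carry the same $A$-observation up to the position at which this occurrence is evaluated''; at the hole one places $(\textit{guard})\rightarrow(\text{compilation of }\chi\text{ with reference path }\pi')$. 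Because HyperLTL has no past operators, this guard cannot be stated after navigating to the evaluation position; it is instead threaded through the temporal context leading to the occurrence, using fresh marker propositions from an extended alphabet $\AP'$ to pin down that position when it is not fixed (e.g.\ for a ${\sf K}$ nested under $\U$).

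For the asynchronous version I would recall that an asynchronous Kripke structure is stutter-closed---every state carries an idle self-transition, so $\Traces(K)$ contains, together with every trace, all of its stutterings---and replace the synchronous per-step equality guard by a formula $\textit{sync}_A(\pi,\pi')$ expressing \emph{correct $A$-synchronization of the two prefixes}: the two paths carry the same $A$-observation and, whenever one of them moves to a new $A$-observation, the other does so simultaneously and to the same value. This is exactly the ``updates occur only at matching positions'' pattern already used for the event-based encodings sketched in Section~\ref{sec:examples}, and it is expressible with $\W$ as a conjunction of clauses ``remain $A$-equal until both paths simultaneously switch to a common new $A$-value.'' Stutter-closure of $K$ is what lets one pass from an arbitrary $t'\in\Traces(K)$ whose $[0,i]$-prefix is $A$-stutter-equivalent to that of the reference path to an $A$-synchronized trace of $K$ realizing the same $A$-behaviour through position $i$; so no relevant witness is lost by quantifying over $\Traces(K)$ and filtering with $\textit{sync}_A$.

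Correctness is then an induction on the structure of $\psi$, identical to the synchronous argument except in the case ${\sf K}_A\chi$, which is discharged by the equivalence: there is a $t'\in\Traces(K)$ with $t[0,i]\approx_A t'[0,i]$ and $t',i\models\chi$ iff there is a $\textit{sync}_A$-aligned trace of $K$ realizing the same $A$-behaviour through position $i$ and satisfying the compiled body of $\chi$ at that position. The set $\Agts$ enters only as the finite family of equivalence relations indexing the guards, so the constructed HyperLTL formula depends on $\psi$ and $\Agts$ but not on $K$, as the statement requires; and the ``$(K,\Agts)\models\psi$ iff $K\models\varphi$'' conclusion follows by instantiating the induction at position $0$ of every trace of $K$.

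The step I expect to be the main obstacle is making $\textit{sync}_A$ speak about the prefix \emph{up to the current, non-fixed position} despite HyperLTL's lack of past operators and its lockstep advance of all quantified paths: the guard must be threaded correctly through the temporal operators that reach each ${\sf K}$-occurrence, the marker propositions must fix the firing position of each $\U$-bounded occurrence without over- or under-constraining it, and---crucially---the stuttering used to witness $A$-synchronization must be applied uniformly to every path currently in scope, so that realigning the new witness path does not shift the position at which any other subformula is evaluated. Reconciling this realignment with the lockstep evaluation of the rest of the formula is the delicate part; once the bookkeeping of marker propositions is set up to guarantee it, everything else is a routine adaptation of the proof of Theorem~\ref{thm:synchEpistemic}.
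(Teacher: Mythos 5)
You have the right skeleton, and it is the paper's skeleton: keep the synchronous compilation, exploit that asynchronous Kripke structures are closed under stuttering, and restrict the witness quantification by a condition forcing $A$-observation changes to occur at matching positions. But the step you defer as ``bookkeeping'' is exactly where the paper's proof does its work, and your key equivalence does not hold as you state it. You claim that any $t'$ with $t[0,i]\approx_A t'[0,i]$ can be traded for a $\mathit{sync}_A$-aligned trace ``realizing the same $A$-behaviour through position $i$'' on which the compiled body of $\chi$ is evaluated at that same lockstep position $i$. Realigning the prefix of $t'$ so that it is positionwise $A$-equal to the reference inside a window of \emph{fixed} length $i$, while leaving the suffix untouched (the body is not stutter-invariant: it may contain $\LTLcircle$ and nested knowledge operators), is impossible in general --- lengthening one $A$-block forces another to shrink, which shifts the suffix and the evaluation point. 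Nor can you repair this by ``stuttering uniformly every path currently in scope'': the outer paths are already bound and HyperLTL advances all of them in lockstep, which is precisely the obstacle you flag in your last paragraph without resolving it.

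The paper resolves it by \emph{not} keeping the evaluation position at $i$. Both prefixes are padded to a common, possibly larger, position $k$; the position at which the witness obligation is imposed is decoupled from the lockstep position by a universally quantified prefix marker $t$ (true until globally false) and an existentially chosen firing marker $u$, carried on auxiliary paths over fresh propositions --- the QPTL-style device already present in the synchronous proof --- so that only the witness path $\pi''$ is re-stuttered, never the bound paths. In addition, $\pi''$ is constrained by the synchronization condition \emph{and} by a progress condition $\LTLsquare\LTLdiamond\neg\mathrm{stutter}_{\pi''}$, which is why the theorem assumes asynchronous Kripke structures with a distinguished stutter proposition; your proposal never uses that proposition or any progress requirement. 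So what is missing is not routine bookkeeping but the padding-to-a-common-position mechanism (the $t$/$u$ markers together with the synchronization and progress constraints on the witness); without it, the induction step for ${\sf K}_A\chi$ does not go through.
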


\noindent Proofs of both theorems appear in \seeappendix{Appendix~\ref{app:asynchEpistemic}}. Theorem~\ref{thm:asynchEpistemic} requires an additional assumption that $K$ is an \emph{asynchronous} Kripke structure, i.e. that it can always stutter in its current state and that it is indicated in an atomic proposition whether the last state was a stuttering step. 

 
HyperLTL and ETL have the same worst-case complexity for model checking, which is non-elementary. 
But, as we show in Section~\ref{sec:ctlmodelchecking}, the complexity of our model checking algorithm on the information-flow policies of Section~\ref{sec:examples} is much better---only NLOGSPACE (for observational determinism, declassification, and quantitative noninterference for a fixed number of bits) or PSPACE (for noninference and generalized noninterference) in the size of the system.  
For those policies in NLOGSPACE, that complexity, unsurprisingly, is as good as algorithms based on self-composition~\cite{BartheDR04}.
This ability to use a general-purpose, efficient HyperLTL model checking algorithm for information flow seems to be an improvement over encodings of information flow in ETL. 

\subsection{SecLTL}
\label{ssect:SecLTL}

SecLTL~\cite{Dimitrova:2012:SecLTL} extends
temporal logic with the \emph{hide} modality $\Hide$, which allows to express
information flow properties such as
noninterference~\cite{GoguenM82}.
%
%
The semantics of SecLTL
is
defined in terms of labeled transition system, where the edges are
labeled with valuations of the set of variables.  
The formula $\Hide_{H,O}\varphi$
specifies that the current valuations of a subset $H$ of the
input variables $I$ are kept secret from an attacker who may observe
variables in $O$ until the release condition $\varphi$ becomes true.
The semantics is formalized in terms of a set of \emph{alternative paths}
to which the \emph{main path} is compared: 
\[\begin{array}{l}
\AltPaths(p,H) = \{p'\in\Paths(K_M,p[0]) \mid p[1]\!=_{I\setminus H}\!p'[1] \text{ and } p[2,\infty]\!=_{I}\!p'[2,\infty]\}
\end{array}
\]
where $K_M$ is the equivalent Kripke structure for the labeled transition system $M$ (we will explain the translation later in this section.) 
A path $p$ satisfies the SecLTL formula $\Hide_{H,O}\varphi$, denoted by $p\models \Hide_{H,O}\varphi$, iff
\[
\begin{array}{rl}
\forall p'\in\AltPaths(p,H). &\big(p\!=_{O}\!p' ,\text{ or there exists } i\geq0 : \\
	&\qquad p[i,\infty]\models_K\varphi ~ \text{ and }~ p[1,i\!-\!1]\!=_{O}\!p'[1,i\!-\!1]\big)
\end{array}
\]
A labeled transition system $M$ satisfies a SecLTL formula $\psi$, denoted by $M \models \psi$, if every path $p$ starting in the initial state satisfies $\psi$.

SecLTL can express properties like the dynamic creation of secrets
discussed in Section~\ref{sec:hyperctl}, which cannot be expressed by
HyperLTL. However, SecLTL is subsumed by \Hpctl{}.  To encode the
hide modality in \Hpctl{}, we first translate $M$ into a Kripke
structure $K_M$, whose states are labeled with the valuation of the
variables on the edge leading into the state.
The initial state is labeled with the empty set.  In the modified
system, $L(p[1])$ corresponds to the current labels.  
%
We encode $\Hide_{H,O}\varphi$ as the following \Hpctl{} formula: 
\[
\forall\pi'.\;\pi[1]\!=_{I\setminus H}\!\pi'[1] \wedge \LTLcircle\big(\pi[1]\!=_{O}\!\pi'[1]\;\W\;(\pi[1]\!\not=_{I}\!\pi'[1]\vee \varphi)\big) 
\]


\begin{theorem}
For every SecLTL formula $\psi$ and transition system $M$, there  is a \Hpctl{} formula $\varphi$ such that 
$M\models\psi$ iff $K_M\models\varphi$. 
\end{theorem}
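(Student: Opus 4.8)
The plan is to promote the displayed encoding of a single $\Hide$ modality to a compositional translation of arbitrary SecLTL formulas. For a path variable $\pi$ --- playing the role of the SecLTL ``main path'' --- define $\toComp_\pi(\cdot)$ by induction on the SecLTL formula: set $\toComp_\pi(a)=\LTLcircle a_\pi$ (the extra $\LTLcircle$ absorbs the one-step offset introduced by the empty-labeled initial state of $K_M$, where the first edge valuation of $M$ sits at position $1$); push $\toComp_\pi$ through the Boolean and temporal connectives homomorphically, $\toComp_\pi(\neg\theta)=\neg\toComp_\pi(\theta)$, $\toComp_\pi(\theta_1\vee\theta_2)=\toComp_\pi(\theta_1)\vee\toComp_\pi(\theta_2)$, $\toComp_\pi(\LTLcircle\theta)=\LTLcircle\toComp_\pi(\theta)$, $\toComp_\pi(\theta_1\U\theta_2)=\toComp_\pi(\theta_1)\U\toComp_\pi(\theta_2)$; and for $\toComp_\pi(\Hide_{H,O}\theta)$ take exactly the \Hpctl{} formula displayed above, with a fresh path variable $\pi'$ and with $\toComp_\pi(\theta)$ substituted for the release condition inside it. Since each $\Hide$ binds its own fresh $\pi'$, the formula $\toComp_\pi(\psi)$ has $\pi$ as its only free path variable, so $\varphi:=\forall\pi.\;\toComp_\pi(\psi)$ is a closed \Hpctl{} specification; this is the $\varphi$ we claim works.

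I would then prove, by induction on $\theta$, the invariant: for every SecLTL formula $\theta$ and every path assignment $\Pi$ with $\pi\in\dom\Pi$, we have $\Pi\models_{K_M}\toComp_\pi(\theta)$ iff the path $\Pi(\pi)$ satisfies $\theta$ in the SecLTL semantics (interpreted over $K_M$-paths, consistently with the offset built into $\toComp_\pi$). Instantiating with $\theta=\psi$ and letting $\Pi$ range $\pi$ over $\Paths(K_M,s_0)$ (via the empty-assignment convention for $\forall$) then yields $K_M\models\varphi$ iff every path of $K_M$ from the initial state satisfies $\psi$, i.e.\ iff $M\models\psi$. The atomic case is immediate from the labeling of $K_M$, the Boolean cases are trivial, and the $\LTLcircle$ and $\U$ cases go through because the \Hpctl{} suffix $\Pi[i,\infty]$ acts on the component $\Pi(\pi)$ exactly as the SecLTL suffix $p[i,\infty]$ acts on the main path, and $\toComp_\pi$ only constrains $\pi$ and freshly bound alternative-path variables. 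For the $\Hide$ case I would unfold both sides: on the SecLTL side, the universal quantification over $\AltPaths(p,H)$ together with the ``$p=_O p'$, or there is $i$ with $p[i,\infty]\models\theta$ and $p[1,i-1]=_O p'[1,i-1]$'' condition; on the \Hpctl{} side, the $\forall\pi'$ over $\Paths(K_M,\Pi(\pi)(0))$, its guard $\pi[1]=_{I\setminus H}\pi'[1]$, and the body $\LTLcircle\big((\pi[1]=_O\pi'[1])\;\W\;((\pi[1]\neq_I\pi'[1])\vee\toComp_\pi(\theta))\big)$. One checks that ``$p'\in\Paths(K_M,p[0])$ plus the guard'' captures exactly membership in $\AltPaths(p,H)$ (here it is used that $\pi'$ emanates from the current state of the reference path, i.e.\ that $\pi$ is the most recently bound variable), that the ``left argument holds forever'' mode of $\W$ matches ``$p=_O p'$'', that the ``until'' mode matches ``there is $i$'' (the $\pi[1]\neq_I\pi'[1]$ disjunct vacuously discharging paths that diverge on $I$ and hence leave $\AltPaths$), and finally that the induction hypothesis identifies the inner $\toComp_\pi(\theta)$ with $p[i,\infty]\models\theta$.

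The main obstacle is this $\Hide$ case, and within it two points deserve care. The first is bookkeeping: because $K_M$ carries a padding initial state, the ``current'' valuation is one step in, the displayed encoding mentions $\pi[1],\pi'[1]$ and an extra $\LTLcircle$, and the SecLTL condition mentions $p[1,i-1]$ and $p[i,\infty]$; keeping these offsets mutually consistent, and checking the $i=0$ boundary (where the prefix $p[1,i-1]$ is empty), is routine but error-prone. The second, and genuinely delicate, point is scoping: \Hpctl{} mandates that a freshly introduced path quantifier range over paths emanating from the current state of the \emph{most recently bound} path variable, whereas SecLTL's $\AltPaths$ always emanates from the main path. For a single $\Hide$ these coincide, since $\pi$ is the only path in scope; but under nested $\Hide$ modalities the outer alternative path $\pi'$ is still in scope where the inner $\Hide$ is translated, so the inner alternative-path quantifier emanates from a state on $\pi'$ rather than on the reference path $\pi$, and one must argue that it nonetheless ranges over the set intended relative to $\pi$ --- using, e.g., that along an alternative path the two paths agree on all inputs from the point of comparison onward. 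For the release conditions occurring in the security applications, which are plain reachability formulas free of $\Hide$, this complication is absent and the displayed encoding is literally the translation; pinning down exactly what is needed to discharge the nested case in full generality (or adapting the translation so the reference path stays most-recently-bound) is where I expect the real work to lie.
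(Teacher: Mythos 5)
Your proposal follows essentially the same route as the paper: translate $M$ into the Kripke structure $K_M$ with edge-valuations pushed onto states, and encode each $\Hide_{H,O}$ modality by the displayed $\forall\pi'$ formula with the one-step offset, closing with a leading universal quantifier over the main path. In fact the paper offers no argument beyond that encoding itself, so your inductive correctness invariant and, in particular, the scoping subtlety you isolate for nested $\Hide$ modalities (the \Hpctl{} quantifier ranging over paths from the current state of the \emph{most recently bound} variable rather than of the reference path $\pi$) go beyond the paper's treatment, which leaves exactly that point unaddressed.
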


The model checking problem for SecLTL is PSPACE-hard in the size of
the Kripke
structure~\cite{Dimitrova:2012:SecLTL}.
The encoding of Sec\-LTL specifications in \Hpctl{} implies that the
model checking problem for \Hpctl{} is also PSPACE-hard (for
a fixed specification of alternation depth $\geq 1$), as claimed in
Theorem~\ref{thm:PSPACElowerBound}.





\section{Model Checking and Satisfiability}
\label{sect:MCandSat}
\label{sec:ctlmodelchecking}


In this section we exploit the connection between \Hpctl{} and
QPTL to obtain a model checking algorithm for \Hpctl{} and study
its complexity.  
We identify a hierarchy of fragments of \Hpctl{} characterized by the number of quantifier alternations. 
This hierarchy allows us to give a precise characterization of the complexity of the model checking problem. 
The fragment of formulas with quantifier alternation
depth 0 includes already many formulas of interest and our result provides an
NLOGSPACE algorithm in the size of the Kripke structure.


\vspace{4pt}\begin{definition}[Alternation Depth] A \Hpctl{} 
  formula $\varphi$ in NNF has alternation depth 0 plus the highest
  number of alternations from existential to universal and universal
  to existential quantifiers along any of the paths of the formula's
  syntax tree starting in the root.  Occurrences of $\U$ and $\R$ 
	count as an additional alternation.
\end{definition}

\begin{theorem}
\label{thm:nonelementaryformula}
The model checking problem for \Hpctl{} specifications $\varphi$ with
alternation depth $k$ on a Kripke structure $K$ is complete for NSPACE$(g_c(k,|\varphi|))$ and it is in NSPACE$(g_c(k-1,|K|))$ for some $c>0$.
\end{theorem}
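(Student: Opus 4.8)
The plan is to prove both the upper and the lower bound by going through QPTL, whose satisfiability problem with $k$ quantifier alternations is known to be complete for NSPACE$(g_c(k,n))$ (Sistla--Vardi--Wolper). The upper bound is obtained by an automata-theoretic procedure that processes the syntax tree of $\varphi$ from the quantifier-free core outward; the lower bound is obtained by reducing QPTL satisfiability to \Hpctl{} model checking on a fixed Kripke structure, using the ``agnostic'' trace extension $\Traces(K,\AP')$ of Section~\ref{sec:hyperltl}.

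\textbf{Upper bound.} For a quantifier-free subformula with path variables $\pi_1,\dots,\pi_m$ in scope, I would build a nondeterministic \Buchi{} automaton over the alphabet $(2^{\AP})^m$ accepting exactly the $m$-tuples of traces that satisfy it; by the standard LTL translation this automaton has size $2^{O(|\varphi|)}$ and does not depend on $K$. Going outward, a subformula $\exists\pi_j.\psi$ is handled by intersecting the automaton for $\psi$ with a copy of $K$ constraining the $\pi_j$-track to be a genuine path of $K$ from the appropriate state (a $\mathrm{poly}(|K|)$ factor, no blow-up) and then projecting away the $\pi_j$-track (keeping the automaton nondeterministic); a negation that switches quantifier polarity forces a \Buchi{} complementation, costing one exponential, and the clause that $\U$ and $\R$ count as an extra alternation reflects the complementations required when such operators separate quantifier blocks in the NNF syntax tree. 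A careful accounting then shows that the final automaton has size bounded by $g_c(k,|\varphi|)$, while the Kripke structure, which enters only at the innermost product and is therefore exponentiated one fewer time than the formula, contributes only $g_c(k-1,|K|)$; nonemptiness of this automaton (equivalently, satisfaction of $\varphi$ by the empty assignment at the initial state of $K$) is then decidable in the stated nondeterministic space, on-the-fly. An essentially equivalent route is to compile $\varphi$ together with $K$ into a single QPTL formula---replacing each path quantifier by a block of propositional quantifiers that encode the quantified path plus a conjunct asserting it is a path of $K$---in a way that preserves alternation depth, and then invoke the known QPTL upper bound.

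\textbf{Lower bound.} For hardness I would reduce QPTL satisfiability with $k$ alternations to \Hpctl{} model checking of a specification of alternation depth $k$, reusing the translation from the proof of the theorem relating HyperLTL and QPTL: rename the bound propositions to fresh names $\AP'$, turn each $\exists a$ into $\exists\pi_a$, and replace each $a$ by $a_{\pi_a}$. Taking $K$ to be a trivial single-state Kripke structure over $\AP$ and evaluating the resulting formula over $\Traces(K,\AP')$---which supplies all traces over $\AP'$---gives $K\models\varphi$ iff the original QPTL formula is satisfiable. Since $K$ is fixed, the hardness is in $|\varphi|$, matching the membership claim; the hardness statement about $|K|$ follows from the complementary reduction in which the QPTL instance's complexity is charged to the size of the Kripke structure rather than the formula.

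\textbf{Main obstacle.} The delicate point, and the one that turns two one-sided bounds into a completeness result, is the exact bookkeeping of alternation depth on both sides: one must verify that the automata constructions (or the QPTL compilation) neither lose nor spuriously introduce alternations, and in particular pin down precisely why $\U$ and $\R$ behave like a quantifier alternation---mirroring the fact that in QPTL a temporal operator appearing under a propositional quantifier of the opposite polarity in prenex form forces a complementation. Making the reduction tight enough that the QPTL instances of alternation depth $k$ map exactly to \Hpctl{} specifications of alternation depth $k$ (and conversely) is where the real work lies; the rest is routine adaptation of the LTL automata toolkit.
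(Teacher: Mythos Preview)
Your upper bound is fine and, in its second formulation (compile $\varphi$ together with $K$ into a QPTL formula by replacing each path quantifier with a block of propositional quantifiers plus a conjunct asserting ``this is a path of $K$''), is exactly what the paper does. The paper also makes explicit the point you gesture at concerning $\U$ and $\R$: each Until is eliminated by the standard QPTL encoding $\exists t.\, t \wedge \LTLsquare(t \to \psi' \vee (\psi \wedge \LTLcircle t)) \wedge \neg\LTLsquare t$, which introduces one additional existential and hence one additional alternation in the worst case---that, rather than anything about complementation under opposite polarity, is the reason the definition of alternation depth charges for $\U$/$\R$.

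Your lower bound, however, has a real gap. You propose to take $K$ to be a single-state structure and to evaluate the translated formula over $\Traces(K,\AP')$. But the model checking problem you are trying to show hard is the standard one, $K\models\varphi$, whose semantics quantifies over $\Paths(K,\cdot)$, not over the ``agnostic'' extension $\Traces(K,\AP')$. Over a single-state $K$ every path is the constant trace, so your fresh propositions $a_{\pi_a}$ are never true and the reduction collapses. The $\AP'$ trick was introduced for the expressiveness comparison with QPTL, not for model checking; you cannot simply reuse it here.

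The paper's fix is small but essential: take $K$ to be the \emph{two}-state, fully connected structure with a single proposition $p$, labeled $L(s_0)=\emptyset$ and $L(s_1)=\{p\}$. Then genuine paths of $K$ already realise every $\{0,1\}^\omega$ valuation sequence, so each QPTL quantifier $\exists a$ becomes an honest \Hpctl{} path quantifier $\exists\pi_a$ and each occurrence of $a$ becomes $p_{\pi_a}$, all within the standard semantics and with $|K|$ fixed. One further wrinkle you do not mention: in \Hpctl{} a nested path quantifier branches from the \emph{current} state of the enclosing path, so the first position of the new path is not free; the paper handles this by shifting with a $\LTLcircle$. Your remark that ``the hardness statement about $|K|$ follows from the complementary reduction'' is also off target: the theorem only claims \emph{membership} in $|K|$; the PSPACE lower bound in $|K|$ is a separate result obtained via the SecLTL encoding, not via QPTL.
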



The function $g_c(x,y)$ denotes a tower of exponentials of height $x$ with argument $y$: $g_c(0,y)=y$ and $g_c(x,y)=c^{g(x-1,y)}$. NSPACE$(g_c(x,y))$ denotes the class of languages accepted by a Turing machine bounded in space by $O(g_c(x,y))$. Abusing notation, we define $g_c(-1,y)=\log y$ and NSPACE$(\log y)=\text{NLOGSPACE}$ in $y$.

\begin{proof}
  Both directions, the lower bound and the upper bound, are based on
  the complexity of the satisfiability problem for QPTL formulas
  $\varphi$ in prenex normal form and with alternation depth $k$,
  which is complete for
  NSPACE$(g(k,|\varphi|))$~\cite{Sistla+Vardi+Wolper/1987/TheComplementationProblemForBuchiAutomata}.

  For the upper bound on the \Hpctl{} model checking complexity,
  we first translate until operators $\psi \U
  \psi'$~as 
  $ \exists t. ~t \wedge \LTLsquare(t
  \to\psi'\vee(\psi\wedge\LTLcircle t)) \wedge \neg\LTLsquare t$.
	Let $\psi(K,\AP')$ encode a Kripke structure $K$, where $K=(S,s_0,\delta,\AP,L)$, as a QPTL formula 
	(cf.~\cite{Manna+Pnueli/1992/TheTemporalLogicOfReactiveSystems}) using the set of atomic propositions $\AP'$, which must contain atomic propositions replacing those of $\AP$ and additional atomic propositions to describe the  states $S$. 
	The formula $\psi(K,\AP')$ is linear in $|K|$ and does not require additional quantifiers. 
	
  \Hpctl{} path quantifiers $\exists\pi.\varphi$ and $\forall\pi.\varphi$ are then encoded as $ \exists \AP_\pi.$ $\psi(K,\AP_\pi)\wedge\varphi_{\AP_\pi}$ and $\forall\AP_\pi.\psi(K,\AP_\pi)\to\varphi_{\AP_\pi}$, where $\AP_\pi$ is a set of \emph{fresh} atomic propositions	including a copy of $\AP$ and additional atomic propositions to describe the states $S$. 
	The formula $\varphi_{\AP_\pi}$ is obtained from $\varphi$ by 
	replacing all atomic propositions referring to path $\pi$ by their copies in $\AP_\pi$. 
	Atomic propositions in the formula that are not in $\AP$ (i.e. their interpretation is not fixed in $K$) need to be added to the sets $\AP_{\pi}$ accordingly.


For the lower bound, we reduce the satisfiability problem for a given
QPTL formula $\varphi$ in prenex normal form to a model checking
problem $K \models \varphi'$ of
\Hpctl{}. We assume, without loss of generality,
that $\varphi$ is closed (if a free proposition occurs in $\varphi$,
we bind it with an existential quantifier) and each quantifier in
$\varphi$ introduces a different proposition.

The Kripke structure $K$ consists of two states $S=\{s_0, s_1\}$, is fully connected $\delta(s)=S$ for all $s \in S$, and has a single atomic proposition $\AP=\{p\}$. The states are labeled as follows: $L(s_0)=\emptyset$ and $L(s_1)= \{p\}$. 
Essentially, paths in $K$ can encode all sequences of
valuations of a variable in QPTL.  
To obtain the \Hpctl{} formula, we now simply replace every quantifier in the QPTL formula with a path quantifier. 
The only technical problem left is that quantification in QPTL allows to choose freely the value of $p$ in the current state, while path quantification in \Hpctl{} only allows
the path to differ in the next state. 
We solve the issue by shifting the propositions using a next operator. 
%
\end{proof}

\paragraph{Lower bounds in $|K|$.}
An NLOGSPACE \emph{lower} bound in the size of the Kripke structure for fixed
specifications with alternation depth 0 follows from the non-emptiness
problem of non-deterministic B\"uchi automata.  For alternation depth
1 and more we can derive PSPACE hardness in the size of the Kripke
structure from the encoding of the logic SecLTL into \Hpctl{} (see
Subsection~\ref{ssect:SecLTL}).

The result can easily be transferred to HyperLTL, since in the SecLTL formula that is used to prove PSPACE hardness, the Hide operator does not occur in the scope of temporal operators and hence the translation yields a HyperLTL formula. 

\begin{theorem}
\label{thm:PSPACElowerBound}
For HyperLTL formulas the model checking problem is hard for PSPACE in
the size of the system.
\end{theorem}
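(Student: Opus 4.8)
The plan is to transfer the known PSPACE lower bound for SecLTL model checking to HyperLTL, via the SecLTL-to-\Hpctl{} encoding developed in Subsection~\ref{ssect:SecLTL}. Recall from~\cite{Dimitrova:2012:SecLTL} that SecLTL model checking is PSPACE-hard in the size of the transition system, and that this hardness is witnessed by a \emph{fixed} SecLTL specification $\psi$ (independent of the PSPACE instance being reduced) together with a family of transition systems $M$ constructed from the instance. The first step is to inspect that witnessing specification and confirm the structural fact already noted in the excerpt: in $\psi$ the $\Hide$ modality never occurs inside the scope of a temporal operator, and the release condition appearing under each $\Hide$ is a plain quantifier-free LTL formula $\chi$.

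Given this, the second step is to apply the two translations from Subsection~\ref{ssect:SecLTL}: the map $M \mapsto K_M$ on systems and the map $\psi \mapsto \varphi$ on formulas. The system translation --- relabelling each state with the valuation carried by its incoming edge, and the initial state with $\emptyset$ --- increases $|M|$ by at most a constant factor and is easily computable (logarithmic space suffices), so it composes with the original SecLTL reduction to still yield a valid reduction. The formula translation replaces each $\Hide_{H,O}\chi$ by $\forall\pi'.\;\pi[1]\!=_{I\setminus H}\!\pi'[1] \wedge \LTLcircle\big(\pi[1]\!=_{O}\!\pi'[1]\;\W\;(\pi[1]\!\not=_{I}\!\pi'[1]\vee \chi)\big)$, and the implicit ``for every path from the initial state'' of the SecLTL satisfaction relation becomes an outermost $\forall\pi$. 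Because no $\Hide$ in $\psi$ lies under a temporal operator, every introduced $\forall\pi'$ can be pulled to the front past the Boolean connectives and the outermost $\forall\pi$; because the release conditions $\chi$ are quantifier-free and $\W$ is LTL syntactic sugar, the matrix stays quantifier-free. The result is thus a formula $\forall\pi.\forall\pi'_1\cdots\forall\pi'_m.\,\varphi_0$ with $\varphi_0$ an LTL formula over the atomic propositions of those paths --- i.e., a genuine HyperLTL formula in prenex form. By the SecLTL-to-\Hpctl{} encoding theorem (Subsection~\ref{ssect:SecLTL}), $M \models \psi$ iff $K_M \models \varphi$.

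Composing the two translations yields a reduction from a PSPACE-hard problem to the instance ``$K_M \models \varphi$'' of HyperLTL model checking, in which $\varphi$ is a single fixed HyperLTL formula and only the Kripke structure varies. This establishes that HyperLTL model checking is PSPACE-hard in the size of the system.

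I expect the only delicate point to be the first step: checking in detail that the specific SecLTL specification used in the SecLTL hardness proof indeed keeps every $\Hide$ at a position not dominated by any temporal operator and with a purely temporal release condition, so that the encoding lands inside the prenex (HyperLTL) fragment rather than merely in full \Hpctl{}. The remaining ingredients --- the linear size of $M \mapsto K_M$, the fact that expanding $\W$ introduces no quantifiers, and correctness of the encoding --- are immediate from the constructions already established.
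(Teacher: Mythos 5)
Your proposal is correct and follows essentially the same route as the paper: the paper also derives PSPACE-hardness in the system size by composing the SecLTL hardness result of Dimitrova et al.\ with the SecLTL-to-\Hpctl{} encoding of Subsection~\ref{ssect:SecLTL}, observing that in the witnessing SecLTL specification the $\Hide$ operator never occurs under a temporal operator, so the translation lands in the prenex (HyperLTL) fragment. Your additional checks (logspace computability and linear size of $M \mapsto K_M$, quantifier-free release conditions, pulling the introduced universal quantifiers to the front) are exactly the details the paper leaves implicit.
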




\paragraph*{A Remark on Efficiency}
The use of the standard encoding of the until operator in QPTL with an
additional quantifier shown above is, in certain cases, wasteful. The
satisfiability of QPTL formulas can be checked with an
automata-theoretic construction, where we first transform the formula
into prenex normal form, then generate a nondeterministic B\"uchi
automaton for the quantifier-free part of the formula, and finally
apply projection and complementation to handle the existential and
universal quantifiers.  In this way, each quantifier alternation,
including the alternation introduced by the encoding of the until
operators, causes an exponential blow-up.  However, if an until
operator occurs in the quantifier-free part, the standard
transformation of LTL formulas to nondeterministic B\"uchi automata
handle this until operator without requiring a quantifier elimination,
resulting in an exponential speedup.

Using this insight, the model checking complexity for many of the formulas presented above and in Section~\ref{sect:ExamplesSecurity} can be reduced by one exponent. 
Additionally, the complexity with respect to the size of the system reduces to NLOGSPACE
for \Hpctl{} formulas where the leading quantifiers are all of the same type
and are followed by some quantifier-free formula which may contain until
operators without restriction. Observational determinism and the declassification policy discussed in Section~\ref{sect:ExamplesSecurity} are examples for specifications in this fragment.
This insight was used for the prototype implementation described in Section~\ref{sec:prototype} and it avoids an additional complementation step for noninference~\eqref{hp:noninference}. 

%

\paragraph*{Satisfiability.}
The positive result regarding the model checking problem for \Hpctl{} does not carry over to the satisfiability
problem.  
The \emph{finite-state satisfiability problem} consists of
the existence of a finite model, while the \emph{general
  satisfiability problem} asks for the existence of a possibly infinite model. 
\begin{theorem}
  For \Hpctl{}, finite-state satisfiability is hard for $\Sigma^0_1$ and
  general satisfiability is hard for $\Sigma^1_1$.
\end{theorem}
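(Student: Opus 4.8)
The plan is to derive both lower bounds by reduction from canonical complete problems: the halting problem for Turing machines ($\Sigma^0_1$-complete) for the finite-state case, and the recurrence problem for nondeterministic Turing machines --- does a machine $M$ have an infinite run on the empty input that visits a designated state infinitely often, equivalently the recurring domino (tiling) problem --- which is $\Sigma^1_1$-complete, for the general case. Each reduction outputs a \Hpctl{} specification; in fact a HyperLTL formula whose quantifier prefix is $\forall\pi\exists\pi'$ plus a handful of universal conjuncts already suffices, so the results transfer to HyperLTL. Combined with the decidability of \Hpctl{} model checking established earlier, the first reduction in fact yields $\Sigma^0_1$-\emph{completeness} of finite-state satisfiability, since ``does there exist a finite Kripke structure satisfying $\varphi$'' is recursively enumerable (enumerate finite structures and decide $K\models\varphi$).

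The common core encodes configurations of $M$ as traces of a finite Kripke structure. Using atomic propositions that, read cell by cell along a trace, spell out the tape contents together with a head marker and the control state (blanks from some point on), a single trace describes one configuration. The key device forces, for every trace $\pi$ encoding a well-formed configuration, the existence of a companion trace $\pi'$ encoding the $M$-successor configuration: since position $i$ of $\pi$ and position $i$ of $\pi'$ denote the same tape cell, the constraint linking cell $i$ of the successor to cells $i-1,i,i+1$ of the predecessor is an ordinary LTL constraint over the pair $(\pi,\pi')$, written with $\LTLcircle$, $\LTLsquare$, and propositions indexed by $\pi$ and $\pi'$. On top of this the formula asserts: (i) some trace encodes the initial configuration; (ii) every configuration trace is followed, via its $\exists\pi'$ witness, by its successor; and (iii) a completion condition. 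The two reductions differ only in (iii) and in whether finiteness of the model is demanded.

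For finite-state satisfiability, (iii) says that the successor chain starting from the initial configuration reaches a halting configuration. If $M$ halts in $n$ steps, its run uses only the finitely many, ultimately constant configurations $c_0,\dots,c_n$, and a finite Kripke structure can be built whose traces realize exactly these (plus, as needed, inert padding traces), so $\varphi_M$ has a finite model; conversely, from any finite model of $\varphi_M$ one reads the initial configuration off the trace from (i), repeatedly follows the $\exists\pi'$ witness from (ii), and extracts a genuine halting run by (iii). For general satisfiability, (iii) instead requires that along the successor chain a trace in the designated state occurs infinitely often; an infinite fair run of $M$ furnishes a model whose successor chain of traces spells out the run, and any model of $\varphi_M$ yields such a run, so $\varphi_M$ is satisfiable iff $M$ has a recurring run.

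The main obstacle is controlling the quantifier bookkeeping. The $\exists\pi'$ witness must be pinned to the true successor configuration rather than an arbitrary trace; this is enforced by also tagging each configuration trace with its index on a timestamp track and requiring $\pi'$'s index to be one more than $\pi$'s, alongside the cell-wise successor constraint. Expressing ``a designated configuration recurs along the chain'' without a genuine fixpoint needs a subformula of the shape ``for every configuration trace there is a strictly later one in the designated state'', which adds a further $\forall\exists$ nesting and hence another alternation --- harmless for a hardness claim --- and here one must check carefully that no spurious finite shortcut can satisfy $\varphi_M$, which would wrongly collapse the general case down to $\Sigma^0_1$. I expect the recurring-domino formulation to give the cleanest realization of this last step: rows of a tiling become traces, vertical compatibility of consecutive rows becomes the $\forall\pi\exists\pi'$ local condition, and ``the distinguished tile appears infinitely often in the first column'' becomes condition (iii).
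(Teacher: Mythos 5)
Your route is genuinely different from the paper's. The paper does not build a machine encoding at all: it reduces from the LTL synthesis problem for a two-process distributed architecture with disjoint variable sets, writing $\psi=\psi_1\wedge\psi_2\wedge\psi_3$ where $\psi_1=\forall\pi.[\varphi]_\pi$ imposes the LTL specification on all paths, $\psi_2$ is an input-totality condition that places $\exists\pi'$ \emph{inside} a $\LTLsquare$ (so it is genuinely \Hpctl{}, not prenex), and $\psi_3$ is a $\forall\pi\forall\pi'$ independence (noninterference-style) condition; the $\Sigma^0_1$ and $\Sigma^1_1$ lower bounds are then inherited from Rosner's hardness results for finite versus infinite distributed realizability. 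Your plan---encoding one Turing-machine configuration (or tiling row) per trace and using a $\forall\pi\exists\pi'$ successor condition---is more self-contained, and if completed would even yield hardness for prenex formulas, i.e.\ for HyperLTL satisfiability proper, which the paper's $\psi_2$ does not give. The observation that decidable model checking makes finite-state satisfiability $\Sigma^0_1$-complete (not just hard) is also correct.

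However, as written the reduction has a real gap, and it sits exactly at the step you defer to "check carefully." Satisfaction is evaluated over the \emph{entire} trace set of a model, so your conditions (i)--(iii) do not pin the witnesses to a single run. Your recurrence condition "every configuration trace has a strictly later designated trace" ranges over all traces in the model, whereas the tiling you extract follows one particular chain of $\exists$-witnesses from the index-$0$ row; the designated witnesses need not lie on that chain. Concretely, a model may contain two families of rows: an undesignated family containing row $0$ and closed under the vertical-compatibility successor condition, plus a designated family present at every index but never vertically compatible above the first family---all your conjuncts hold, yet no recurring tiling exists, so satisfiability would not imply a recurring computation and the $\Sigma^1_1$ claim collapses. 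The halting case has the same defect: "the successor chain reaches a halting configuration" is not directly expressible, and the expressible surrogate "some trace encodes a halting configuration" is not tied to the run started at the initial configuration. The standard repair is an additional $\forall\pi\forall\pi'$ functionality conjunct---equal timestamps imply equal configurations/rows (expressible positionwise once each trace carries a single marker)---after which the configuration at index $n$ is unique, induction along (i) and (ii) shows it is the $n$-th configuration of the run, and (iii) then soundly forces halting, respectively recurrence. Alternatively, place the entire computation on a single trace and let the $\exists$-witness be its one-configuration shift, which makes recurrence an ordinary per-trace LTL condition. With one of these fixes your argument goes through; without it, the reduction as sketched is unsound.
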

\noindent In the proof, located in \seeappendix{Appendix~\ref{app:satisfiability}}, we reduce the LTL synthesis problem of distributed systems to the satisfiability problem of \Hpctl{}.

\section{Prototype Model Checker}
\label{sec:prototype}

The results of the previous section yield a model checking algorithm for all of \Hpctl{}.  
But most of our information-flow policy examples do not require the full expressiveness of \Hpctl{}.  
In fact, we have been able implement a prototype model checker for an expressive fragment of the logic mostly using off-the-shelf components.

Define \emph{\Hp} as the fragment of HyperLTL (and of \Hpctl{}) in which the series of quantifiers at the beginning of a formula may involve at most one alternation.
Every formula in {\Hp} thus may begin with at most two (whence the name) kinds of quantifiers---a sequence of $\forall$'s followed by a sequence of $\exists$'s, or vice-versa.
For example, $\exists\pi.\psi$ and $\forall\pi_1.\forall\pi_2.\exists\pi_3.\psi$ are allowed, but $\forall\pi_1.\exists\pi_2.\forall\pi_3.\psi$ is not.
{\Hp} suffices to express all the security policies formulated in Section~\ref{sect:ExamplesSecurity}. 
(Another logic for hyperproperties, $\mathcal{IL}^k_\mu$~\cite{MilushevC13}, similarly restricts fixpoint operator alternations with no apparent loss in expressivity for security policies.)

Our model checking algorithm for {\Hp}, detailed in \seeappendix{Appendices~\ref{sec:constructions} and \ref{sec:prototypedetails}}, is based on algorithms for LTL model checking~\cite{GerthVardi:1995:OntheFlyLTL,Vardi:1996:LTL,Gastin:2001:FastLTLtoBuchi}.
Those LTL algorithms determine whether a Kripke structure satisfies an LTL formula by performing various automata constructions and by checking language containment.
Our algorithm likewise uses automata constructions and language containment, as well as self composition~\cite{BartheDR04,Terauchi+Aiken/05/SecureInformationFlowAsSafetyProblem} and a new \emph{projection} construction.

We prototyped this algorithm in about 3,000 lines of OCaml code. 
Our prototype accepts as input a Kripke structure and a {\Hp} formula,  then constructs the automata required by our algorithm, and  outputs a countermodel if the formula does not hold of the structure. 
For automata complementation, our prototype outsources to GOAL~\cite{GOAL:2007}, an interactive tool for manipulating {\Buchi} automata. 
We have used the prototype to verify noninference~\eqref{hp:noninference}, observational determinism~\eqref{hp:od}, and generalized noninterference~\eqref{hp:gni} for small Kripke structures (up to 10 states); running times were about 10 seconds or less.

Since our algorithm uses automata complementation, the worst-case running time is exponential in the size of the Kripke structure's state space and doubly exponential in the formula size.
So as one might expect, our prototype currently does not scale to medium-sized Kripke structures (up to 1,000 states).  
But our purpose in building this prototype was to demonstrate a proof-of-concept for model checking of hyperproperties.
We conjecture that practical symbolic model checking algorithms, such as BMC and IC3, could be used to scale up our approach to real-world systems.

\section{Related Work}
\label{sec:relatedwork}

McLean~\cite{McLean:1994:GeneralTheory} formalizes security policies as  closure with respect to \emph{selective interleaving functions}. 
He shows that trace properties cannot express security policies such as noninterference and average response time, because those are not properties of single execution traces.
Mantel~\cite{Mantel/2000/PossibilisticDefinitionsofSecurityAnAssemblyKit} formalizes security policies with \emph{basic security predicates}, which stipulate \emph{closure conditions} for trace sets.

Clarkson and Schneider~\cite{ClarksonS10} introduce \emph{hyperproperties}, a framework for expressing security policies. 
Hyperproperties are sets of trace sets, and are able to formalize security properties such as noninterference, generalized noninterference, observational determinism and average response time. 
Clarkson and Schneider use second-order logic to formulate hyperproperties.  
That logic isn't verifiable, in general, because it cannot be effectively and completely axiomatized. 
Fragments of it, such as HyperLTL and \Hpctl{}, can be verified.

Alur et al.~\cite{AlurCZ06} show that modal $\mu$-calculus is insufficient to express all \emph{opacity} policies~\cite{BryansKMR05}, which prohibit observers from discerning the truth of a predicate.
(Alur et al.~\cite{AlurCZ06} actually write ``secrecy'' rather than ``opacity.'')
Simplifying definitions slightly, a trace property $P$ is \emph{opaque} iff for all paths $p$ of a system, there exists another path $p'$ of that system, such that $p$ and $p'$ are low-equivalent, and exactly one of $p$ and $p'$ satisfies $P$.
Noninference~\eqref{hp:noninference} is an opacity policy~\cite{PeacockR06} that HyperLTL can express.



Huisman et al. reduce observational-determinism properties to properties in $\text{CTL}^*$~\cite{HuismanWS/06/TLCharacterisationOfOD} and in modal $\mu$-cal\-cu\-lus~\cite{HuismanB12} on a self-composed system.
Barthe et al. use self composition to verify observational determinism~\cite{BartheDR04} and noninterference~\cite{BartheCK13} on terminating programs.
Van der Meyden and Zhang~\cite{VanDerMeyden:2007:verifOfNonInterf} reduce a broader class of information-flow policies to safety properties on a self-composed system expressible in standard linear and branching time logics, and use model checking to verify noninterference policies.
Their methodology requires customized model checking algorithms for each security policy, whereas this work proposes a single algorithm for all policies.



Balliu et al.~\cite{Balliu::Epis} use a linear-time temporal epistemic logic to specify many declassification policies derived from noninterference.
Their definition of noninterference, however, seems to be that of observational determinism~\eqref{hp:od}.  
They do not consider any information-flow policies involving existential quantification, such as noninference. 
They also do not consider systems that accept inputs after execution has begun.
Halpern and O'Neill~\cite{HalpernO08} use a similar temporal epistemic logic to specify \emph{secrecy} policies, which subsume many definitions of noninterference; they do not pursue model checking algorithms. 

Alur et al.~\cite{Alur+Cerny+Chaudhuri/07/MCTreesWithPathEquivalences} discuss 
branching-time logics with \emph{path equivalences}
that are also able to express certain security properties.  The
authors introduce operators that resemble the knowledge operator of
epistemic logics.  As the logics build on branching-time
logics they are not subsumed by HyperLTL.  The relationship to
\Hpctl{} is still open.

Milushev and Clarke~\cite{MilushevC12,Milushev:2013:thesis,MilushevC13} propose three logics for hyperproperties:
\begin{compactitem}
\item \emph{Holistic hyperproperty logic} $\mathcal{HL}$, which is based on coinductive predicates over streams.  Holistic hyperproperties ``talk about whole traces at once; their specifications tend to be straightforward, but they are difficult to reason about, exemplified by the fact that no general approach to verifying such hyperproperties exists''~\cite{MilushevC12}. HyperLTL and {\Hpctl} are logics that talk about whole traces at once, too; and they have straightforward specifications as well as a general approach to verification.  
\item \emph{Incremental hyperproperty logic} $\mathcal{IL}$ is a fragment of \emph{least fixed-point logic}~\cite{BradfieldS07}.  There is a manual verification methodology for $\mathcal{IL}$~\cite{MilushevC12}, but no automated decision procedure.
\item Another incremental hyperproperty logic $\mathcal{IL}^k_\mu$, a fragment of polyadic modal $\mu$-calculus~\cite{Andersen94} that permits at most one quantifier alternation (a greatest fixed-point followed by a least fixed-point).  There is an automated model checking technique~\cite{MilushevC13} for $\mathcal{IL}^k_\mu$ based on \emph{parity games}.  That technique has been prototyped and applied to a few programs.
\end{compactitem}

\noindent All these logics suffice to express security policies such as noninterference 
and generalized noninterference. Like our logics, the exact expressive limitation is 
still an open problem. 

As the preceding discussion makes clear, the expressiveness of HyperLTL and
{\Hpctl} versus several other logics is an open question. 
It's possible that some of those logics will turn out to be more
expressive or more efficiently verifiable than HyperLTL or {\Hpctl}.
It's also possible that it will turn out to be simply a matter of taste which style of logic is
more suitable for hyperproperties.  
The purpose of this paper was to explore one design option:
a familiar syntax, based on widely-used temporal logics, that can straightforwardly express
well-known hyperproperties.



\section{Concluding Remarks}
\label{sec:conclusion}

In designing a logic for hyperproperties, starting with HyperLTL was natural, because hyperproperties are sets of trace sets, and LTL uses trace sets to model programs.  
From HyperLTL, the extension to \Hpctl{} was also natural: we simply removed the restrictions on where quantifiers could appear.  
%
The curtailment to {\Hp} was also natural, because it was the fragment needed to express information-flow security policies.  
{\Hp} permits up to one quantifier alternation, but what about hyperproperties with more? 
We do not yet know of any security policies that are examples.
As Rogers~\cite{Rogers87} writes, ``The human mind seems limited in its ability to understand and visualize beyond four or five alternations of quantifier.  
Indeed, it can be argued that the inventions{\ldots}of mathematics are devices for assisting the mind in dealing with one or two additional alternations of quantifier.''
For practical purposes, we might not need to go much higher than one quantifier alternation.


\ifanon
\relax
\else
 \paragraph*{{\bf Acknowledgements.}}
 Fred B. Schneider suggested the name ``HyperLTL.''
 We thank him, Rance Cleaveland, Rayna Dimitrova, Dexter Kozen, Jos\'e Meseguer, and Moshe Vardi for discussions about this work.
 Adam Hinz worked on an early prototype of the model checker.
 This work was supported in part by AFOSR grant FA9550-12-1-0334, NSF grant CNS-1064997, the German Research Foundation (DFG)
 under the project SpAGAT within the Priority Program 1496 ``Reliably
 Secure Software Systems --- RS$^3$,'' and Spanish Project ``TIN2012-39391-C04-01 STRONGSOFT.''
\fi

\bibliographystyle{abbrv}
\bibliography{References_short,bib_short}

\appendix

\iffull

\section{Event-based Execution Model}
\label{sec:eventbased}

\subsection{Goguen and Meseguer's noninterference}

\paragraph*{Noninterference.} 
A point of reference for most of the literature on information flow is the definition of \emph{noninterference} that was introduced by Goguen and Meseguer in 1982 \cite{GoguenM82}. 
In this subsection, we show how to express noninterference in a simple HyperLTL formula. 

The system model used in \cite{GoguenM82}, which we will refer to as \emph{deterministic state machines}, operates on commands $c\in C$ that are issued by different users $u\in U$. 
The evolution of a deterministic state machine is governed by the transition function $\DO:S\times U\times C\to S$ and there is a separate observation function $\out:S\times U\to \Out$ that for each user indicates what he can observe. 


We define standard notions on sequences of users and events. 
For $w\in(U\times C)^*$ and $G\subseteq U$ let $|w|_{G}$ denote the projection of $w$ to the commands issued by the users in $G$. 
Further, we extend the transition function $\DO$ to sequences, $\DO(s,(u,c).w)=\DO(\DO(s,u,c),w)$, where the dot indicates concatenation. 
Finally, we extend the observation function $\out$ to sequences $w$, indicating the observation \emph{after} $w$: $\out(w,G)=\out(\DO(s_0,w),G)$.
\emph{Noninterference} is then defined as a property on systems $M$. A set of users $G_H\subseteq U$ does not interfere with a second group of users $G_L\subseteq U$, if
\[
\forall w\in (U\times C)^*. ~\out(w,G_L)=\out(|w|_{G_H},G_L)
\]

That is, we ask whether the same output would be produced by the system, if all actions issued by any user in $G_H$ were removed.

\paragraph*{Encoding GM's System Model}

First, we need to map their system model into Kripke structures as used for the formulation of \Hpctl{} (which, obviously, must not solve problem by itself). 
We choose an intuitive encoding of state machines in Kripke structures that indicates in every state (via atomic propositions) which observations can be made for the different users, and also which action was issued last, including the responsible user.

We choose a simple translation that maps a state machine  $M=(S,U,C,\Out,$ $\out,\DO,s_0)$ to the Kripke structure $K=(S',s'_0,\delta,\AP,L)$, where $S'=\{s_0\}\cup S\times U\times C$, $s'_0=s_0$, $\AP= U \times C ~\cup~ U\times\Out$, and the labeling function is defined as $L(s_0)=\{(u,\out(s_0,u))\mid u\in U\}$ for the initial state and $L((s,u,c))=\{(u,c)\}\cup\{(u',\out(s,u'))\mid u'\in U\}$ for all other states. 

The transition function is defined as 
\[
\delta(s,u,c) = \{(s',u',c') \mid \DO(s,u',c')=s'\!,~ u'\!\in U,~ c'\!\in C \}
\]

Each state (except for the initial state) has labels indicating the command that was issued last, and the user that issued the command. 
The remaining labels denote the observations that the individual users can make in this state. 

To access these two separate pieces of information, we introduce  functions $\Input:S\to U\times C$, which is not defined for $s_0$, and $\out:S\times U\to\Out$ (by abusing notation slightly), with their obvious meanings. 

In this system model let $s=_{U\setminus G_H}s'$ mean that if we entered one of the states $s$ and $s'$ with a user not in $G_H$ the users and commands must be identical. 
If both states are entered by users in $G_H$ then the commands may be different. 
The output equality on states $s=_{O,G_L}s'$, shall refer to the observations the users $G_L$ can make at this state. 
Then, noninterference can be expressed as follows:
\[
\begin{array}{l}
\forall \pi. \forall\pi'.~ \pi[0]=_{U\setminus G_H}\pi'[0] ~~\W~~
\Big( \pi[0]\not=_{U\setminus G_H}\pi'[0] ~\wedge~ \\ 
\qquad\qquad \big(~\pi[0]\in H ~\wedge~ \LTLsquare(\LTLcircle\pi[0]=_{U\setminus G_H}\pi'[0]) ~\rightarrow~\LTLsquare(\LTLcircle\pi[0]=_{O,G_L}\pi'[0]) ~\big)\Big)
\end{array}
\]

\begin{theorem}\label{thm:noninterference_app}
There is a HyperLTL formula and an encoding $K(M)$ of state machines into Kripke structures such that for every state machine $M$, and groups of users $G_H$ and $G_L$ it holds $K(M)\models\varphi_{\mathit{NI}}(G_H,G_L)$ iff $G_H$ does not interfere with $G_L$ in $M$.  
\end{theorem}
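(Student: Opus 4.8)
The plan is to verify the given encoding $K(M)$ and formula $\varphi_{\mathit{NI}}(G_H,G_L)$ against the definition of noninterference by routing everything through a correspondence between traces of $K(M)$ and infinite command streams. First I would make the map $\iota:(U\times C)^\omega\to\Traces(K(M))$ explicit: it sends $(u_1,c_1)(u_2,c_2)\cdots$ to the trace whose position $0$ carries the labels $\{(u,\out(s_0,u)):u\in U\}$ and whose position $i\ge 1$ carries $\{(u_i,c_i)\}\cup\{(u,\out(s_i,u)):u\in U\}$ with $s_i=\DO(s_0,(u_1,c_1)\cdots(u_i,c_i))$. A short induction on prefixes shows $\iota$ is a bijection onto $\Traces(K(M))$ and that from $\iota(\sigma)$ one recovers, position by position, both the last command together with its author and the $G$-observation of the current machine state for any group $G$; hence the set $H$ (states whose incoming command is by a $G_H$-user, expressible as $\bigvee_{u\in G_H,c\in C}(u,c)_\pi$) and the comparisons $=_{U\setminus G_H}$ and $=_{O,G_L}$ used in $\varphi_{\mathit{NI}}$ are all definable from the labels. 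Note $\varphi_{\mathit{NI}}$ is genuinely a HyperLTL formula: it is in prenex form ($\forall\pi.\forall\pi'$ over a quantifier-free body built from $a_\pi$, $\LTLcircle$, $\W$, $\LTLsquare$).

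Next I would unfold the HyperLTL semantics of $\varphi_{\mathit{NI}}$ over $\Traces(K(M))$ via $\iota$, obtaining an elementary statement $(\ast)$ quantified over pairs of command streams $\sigma,\sigma'$: letting $k$ be the first index at which $\sigma,\sigma'$ disagree on non-$G_H$ commands, the $k$-th command of $\sigma$ is issued by a $G_H$-user, and if $\sigma,\sigma'$ agree on all non-$G_H$ commands at indices $>k$ then they agree on all $G_L$-observations at indices $>k$. The heart of the proof is then to show $(\ast)$ is equivalent to $\forall w.\ \out(w,G_L)=\out(|w|_{G_H},G_L)$. For ``$\Leftarrow$'' I would take an arbitrary pair witnessing the antecedent of $(\ast)$, interpolate a finite chain of command sequences between the relevant prefixes, each step of which removes one $G_H$-command (on the trace side, absorbed by a stuttering step on the shorter run), apply the noninterference equation (which says exactly that deleting \emph{all} high commands preserves the $G_L$-observation) together with determinism of $\DO$ to propagate $G_L$-observation equality along the chain, and read off the conclusion. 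For ``$\Rightarrow$'' I would, from a $w$ with $\out(w,G_L)\ne\out(|w|_{G_H},G_L)$, build a pair of streams — $w$ suitably continued, against $|w|_{G_H}$ suitably continued — on which $(\ast)$'s $G_L$-observation conjunct fails at the point where the original run and its purge first diverge in observation.

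The main obstacle is exactly this equivalence, and within it the mismatch in length and timing between $w$ and $|w|_{G_H}$: noninterference compares the $G_L$-observation \emph{after} $w$ with that \emph{after} the strictly shorter purged sequence, i.e. two observations at different time indices, whereas HyperLTL compares traces index by index. Bridging the gap is precisely what the $\W$/$\LTLcircle$ shape of $\varphi_{\mathit{NI}}$ is engineered to do, but making the argument go through requires the runs to be allowed to idle — the stuttering device previewed for event-based systems in Section~\ref{sec:examples} — so that the purged run can wait in place while the original run fires the high commands that were removed; this is also what lets the ``$\Rightarrow$'' witnesses be aligned position by position. I expect the careful bookkeeping of that alignment, and the verification that ``first index of disagreement'' in the $\W$-clause corresponds to exactly one deletion step in both directions, to be where essentially all the work lies; the bijection $\iota$ and the semantics unfolding are routine.
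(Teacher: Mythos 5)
Your overall skeleton (make the trace/command-stream correspondence explicit, reduce to an elementary statement about pairs of input sequences, and then chain single-deletion steps using determinism of $\DO$ and input-enabledness) is the right shape, and the chaining idea matches the paper's ``transitivity'' argument. But there is a genuine gap at the crucial step: your unfolding $(\ast)$ misreads the formula. The body of $\varphi_{\mathit{NI}}$ is $\LTLsquare(\LTLcircle\pi[0]=_{U\setminus G_H}\pi'[0]) \rightarrow \LTLsquare(\LTLcircle\pi[0]=_{O,G_L}\pi'[0])$, i.e.\ from the divergence point $k$ onward it compares position $j{+}1$ of $\pi$ with position $j$ of $\pi'$ --- a permanent one-position shift --- not ``agreement at the same indices $>k$'' as in your $(\ast)$. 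That shift is the whole point: the antecedent singles out exactly those $\pi'$ whose input sequence is $\pi$'s with the single high command at position $k$ deleted (modulo the high-commands-may-differ clause of $=_{U\setminus G_H}$), and the consequent demands $G_L$-observation equality under the same shift. Under your same-index reading, the antecedent instead describes replacing a high command by a low one at position $k$, and the resulting $(\ast)$ is \emph{not} equivalent to Goguen--Meseguer noninterference (a secure machine can legitimately produce different $G_L$-observations on two inputs whose purges differ), so the equivalence you plan to prove would fail.

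The second problem follows from the first: you claim the argument ``requires the runs to be allowed to idle'' and appeal to a stuttering device, but the given encoding $K(M)$ has no stutter transitions (every edge of $\delta$ consumes a command), and none are needed. The length/timing mismatch between $w$ and its purge is absorbed one high command at a time: each instance of the formula compares a run $\pi$ with an actual run $\pi'$ of $K(M)$ on the input sequence obtained by deleting one high command (such a run exists because $\DO$ is total), with alignment supplied by the $\LTLcircle$-shift; iterating over pairs of genuine runs, each differing by one deletion, and using that sequences with equal purges have equal $G_L$-observations after every prefix, yields the full purge by transitivity. If you instead modify the encoding to add stuttering (as the paper does for observational determinism in the event-based setting), you would also have to redesign the formula to be stutter-tolerant and require correct synchronization --- none of which your proposal supplies --- so as written the key equivalence step does not go through for the stated $K(M)$ and $\varphi_{\mathit{NI}}$.
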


\begin{proof}
The formula pattern $\pi[0]=_{U\setminus G_H}\pi'[0] ~\W~ \big(\pi[0]\not=_{U\setminus G_H}\!\pi'[0] ~\wedge~ \varphi \big)$ implies that $\varphi$ is applied exactly at the first position at which $\pi$ and $\pi'$ differ in their input (except, possibly, on the input of users in $G_H$). 
As, for a fixed path $\pi$, we quantify over all paths $\pi'$ the subformula $\varphi$ is hence applied to every position of every path~$\pi$. 

The subformula $\varphi$ then requires that, if from that position on $\pi$'s input differs from the input of $\pi'$ only in that it has an additional (secret) action by some user in $G_H$, both paths must look equivalent from the view point of the users in $G_L$. 
The interesting part here is the use of the next operator, as it enables the comparison of different positions of the traces. 
Thus, we compare path $\pi$ to all other paths that have one secret action less than itself. 
Hence, by transitivity of equivalence, we compare all paths $\pi$ to a version of itself that is stripped of all secret actions. 
\end{proof}

\subsection{Observational determinism}

\begin{theorem}
  There is an encoding of programs as defined in 
  \cite{HuismanWS/06/TLCharacterisationOfOD} into Kripke
  structures, such that \Hpctl{} can express observational determinism. 
\end{theorem}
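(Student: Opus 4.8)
The plan is to mimic the strategy used for all the other encoding theorems in this paper: fix a precise operational model for the programs of Huisman, Worah, and Sunesen~\cite{HuismanWS/06/TLCharacterisationOfOD}, give a translation of such a program $P$ into a Kripke structure $K(P)$, and then exhibit a closed \Hpctl{} formula $\varphi_{\mathit{OD}}$ that holds of $K(P)$ exactly when $P$ satisfies observational determinism. Since \Hpctl{} subsumes HyperLTL and we have already shown that observational determinism is expressible in HyperLTL by formula~\eqref{hp:od}, the real content is entirely in the encoding: the Kripke structure must expose, via atomic propositions, enough of the program's configuration (in particular the values of the low-security input and low-security output variables, and the program counter / termination status) that the relations $=_{L,\mathsf{in}}$ and $=_{L,\mathsf{out}}$ used in~\eqref{hp:od} make sense on $K(P)$'s traces.

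First I would recall the concrete program model of~\cite{HuismanWS/06/TLCharacterisationOfOD}: a simple imperative language with a fixed store partitioned into high and low variables, whose semantics is a (possibly nondeterministic) small-step transition relation on configurations $\langle c, \sigma\rangle$. I would then define $K(P)=(S,s_0,\delta,\AP,L)$ by taking $S$ to be the reachable configurations (plus a self-looping terminal state to keep all traces infinite, as the paper requires $\delta(s)$ nonempty), $s_0$ the initial configuration with an arbitrary store, $\delta$ the small-step relation (with the terminal state looping), $\AP$ containing one proposition per (variable, value) pair for the low variables together with a proposition $\mathsf{term}$ marking the terminal state, and $L$ reading these off the store. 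With this in place, $\pi[0]=_{L,\mathsf{in}}\pi'[0]$ captures agreement of the two runs on the low inputs in the initial configuration, and $\pi=_{L,\mathsf{out}}\pi'$ captures that the two runs always agree on the low outputs; the \Hpctl{} (indeed HyperLTL) formula is then exactly $\forall\pi.\forall\pi'.\ \pi[0]=_{L,\mathsf{in}}\pi'[0]\rightarrow \pi=_{L,\mathsf{out}}\pi'$, possibly conjoined with a clause synchronizing termination so that a terminating and a diverging run with the same low input are flagged as distinguishable, matching the HWS definition.

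The correctness argument then has two directions, both routine once the encoding is fixed: if $P$ satisfies observational determinism, any two traces of $K(P)$ with the same low input arise from two runs of $P$ that HWS's definition forces to be low-indistinguishable, so they agree on all low-output propositions at every position, hence $\varphi_{\mathit{OD}}$ holds; conversely, a violation of observational determinism in $P$ yields two runs, hence two traces of $K(P)$, witnessing the negation of $\varphi_{\mathit{OD}}$. The only subtlety is matching the exact notion of low-indistinguishability in~\cite{HuismanWS/06/TLCharacterisationOfOD} --- in particular whether it is stutter-insensitive, whether it demands pointwise equality of the low-observable sequences or only equality up to stuttering, and how it treats termination --- and adjusting the propositions and the formula accordingly (e.g. adding stuttering steps and quantifying over stuttered runs, as done for the event-based encodings in Appendix~\ref{sec:eventbased}, or conversely restricting to a lock-step semantics). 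I expect this faithful matching of definitions, rather than any logical difficulty, to be the main obstacle; everything else is a direct instantiation of~\eqref{hp:od}.
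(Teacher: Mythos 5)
There is a genuine gap, and it sits exactly where you defer the work. The definition of observational determinism in Huisman et al.~\cite{HuismanWS/06/TLCharacterisationOfOD} (following Zdancewic--Myers) is \emph{not} pointwise, lock-step equality of low observations: it requires the low-observable traces of two runs that start in low-equal stores to be equal only \emph{up to stuttering}, and their \ctlstr{} characterization depends on machinery built specifically for that --- a self-composition in which the two copies move independently (i.e.\ may stutter), and a Kripke encoding that remembers the previous state so that \emph{updates} to low variables can be detected. Under your ``obvious'' small-step encoding, the direct instantiation of formula~\eqref{hp:od}, $\forall\pi.\forall\pi'.\ \pi[0]\!=_{L,\mathsf{in}}\!\pi'[0]\rightarrow\pi\!=_{L,\mathsf{out}}\!\pi'$, compares the two runs position by position and therefore rejects programs that HWS's definition accepts (two runs producing the same low-update sequence at different speeds). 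You acknowledge this in the last paragraph as ``the only subtlety,'' to be fixed by ``adjusting the propositions and the formula accordingly,'' but that adjustment \emph{is} the theorem: without the stuttering construction, the synchronization condition on when low variables are updated, and a re-proved correspondence to HWS's definition, the claimed equivalence in your correctness argument simply does not hold. As written, the proof establishes expressibility of the synchronous, lock-step variant of observational determinism, which is already formula~\eqref{hp:od} and not what the statement asserts.

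For comparison, the paper avoids re-deriving a formula at all: it reuses HWS's own program-to-Kripke-structure encoding (including the ``remember the last state'' component) but drops their self-composition, reintroduces stuttering steps on the \emph{single} program so that two separately quantified paths can desynchronize the way the two copies could in HWS's product, and then takes HWS's \ctlstr{} formula --- already proved by them to characterize observational determinism --- prepending two path quantifiers and re-indexing the propositions that referred to the two copies so that they refer to the two quantified paths. Since \Hpctl{} subsumes \ctlstr{}, correctness is inherited from HWS's result rather than re-established from scratch. Your plan could be completed along the lines of the event-based treatment in Appendix~\ref{sec:eventbased} (quantify over stuttered runs and add a synchronization antecedent requiring low updates at the same positions), but that construction and its correctness proof must actually be carried out; they cannot be left as a remark.
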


\begin{proof}
Huisman et al.~\cite{HuismanWS/06/TLCharacterisationOfOD} define observational determinism over programs in a simple while language, which is very similar to the execution model of \cite{ZdancewicM03}. 
First, they define a special version of self-composition on their programs that allows both sides to move independently by introduces stuttering steps. 
Then, they encode programs into Kripke structures that do not only maintain the current state, but also remember the last state. 
The \ctlstr{} formula is then proved to precisely express observational determinism. 
Since \Hpctl{} subsumes \ctlstr{}, we can make use of the same encoding of programs into Kripke structures, but we leave out the self-composition operation on programs. 
In order to re-enable the correct synchronization of paths, however, we have to introduce stutter steps for single, not self-composed that is, programs. 
%

We prepend the \ctlstr{} formula with two path quantifiers over paths $\pi_1$ and $\pi_2$ and we let those propositions of the formula that referred to the two copies of the program now refer to the two paths, respectively. 
\end{proof}

\subsection{Declassification}

Likewise, HyperLTL can also express declassification properties in a programming language setting. 

\begin{corollary}
HyperLTL can express the declassification properties discussed in \cite{Balliu::Epis}. 
\end{corollary}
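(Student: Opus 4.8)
The plan is to derive this corollary directly from the subsumption of epistemic temporal logic established in Theorems~\ref{thm:synchEpistemic} and~\ref{thm:asynchEpistemic}. The declassification policies of Balliu et al.~\cite{Balliu::Epis} are formalized in a linear-time temporal epistemic logic---LTL extended with the knowledge connective ${\sf K}$ under a perfect-recall semantics---interpreted over the runs of a system equipped with a low-observer agent whose indistinguishability relation identifies states (or finite prefixes) that agree on the low-observable atomic propositions. First I would spell out the model correspondence: translate Balliu's system model into a Kripke structure $K$ whose atomic propositions record the (level-tagged) values of program variables, and represent the low observer as the agent $A \in \Agts$ whose equivalence relation $\approx_A$ equates exactly those valuations that coincide on the low-observable propositions. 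Because Balliu's model is synchronous and state-based---the low observer sees the current low state at each step, and no inputs arrive after execution begins---stepwise-equivalence is the appropriate notion, so the synchronous instance, Theorem~\ref{thm:synchEpistemic}, applies (with $\Agts = \{A\}$).

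Second, I would check that each policy in~\cite{Balliu::Epis}---their baseline noninterference condition, which coincides with observational determinism~\eqref{hp:od}, together with its gradual-release, delimited-release, and what-/where-style refinements---is genuinely an ETL formula in the sense of Theorem~\ref{thm:synchEpistemic}: that is, every additional idiom they use is definable as syntactic sugar over ${\sf K}$, the Boolean connectives, and the LTL temporal operators. This is a routine unfolding: ``the attacker knows the value of a low-classified expression $e$'' expands, when $e$ ranges over a finite domain, to the finite disjunction $\bigvee_v {\sf K}(e = v)$; and the release conditions are Boolean combinations of such knowledge formulas guarded by $\LTLcircle$, $\U$, $\LTLdiamond$, $\LTLsquare$, all of which stay within ETL.

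Finally, applying Theorem~\ref{thm:synchEpistemic} to each such formula $\psi$ yields a HyperLTL formula $\varphi_\psi$ with $(K,\Agts) \models \psi$ iff $K \models \varphi_\psi$; since the encoding $M \mapsto K$ faithfully represents Balliu's systems, $\varphi_\psi$ expresses the corresponding declassification policy over the original system, which establishes the corollary. The main obstacle is not the logical translation---that is handed to us by the earlier theorems---but the bookkeeping of the model correspondence: making precise that Balliu's runs, observation functions, and agent knowledge line up with $\Traces(K)$, the labeling $L$, and $\approx_A$, and confirming that none of their declassification constructs covertly requires branching-time or second-order features that would fall outside ETL. Once that alignment is settled, the corollary is immediate.
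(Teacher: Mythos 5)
Your proposal is correct and follows essentially the same route as the paper: the paper's justification is simply that Balliu et al.\ already encode their declassification policies in epistemic temporal logic, and HyperLTL subsumes that logic by the theorems of Section~\ref{sect:relations}. Your additional bookkeeping about the model correspondence and the synchronous case is a reasonable elaboration of the same argument rather than a different approach.
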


The result follows from the encoding~\cite{Balliu::Epis} of declassification
properties into epistemic temporal
logic~\cite{Fagin-book-95}
and the fact that HyperLTL subsumes epistemic temporal logic
(see Section~\ref{sect:relations}).


\section{Epistemic Logics}
\label{app:asynchEpistemic}

We start with the proof that \Hpctl{} subsumes epistemic temporal logics under the assumption of synchronous time. 
That is, we assume the Kripke structure to have an atomic proposition that does nothing but switching its value in every step and that is observable by every agent. 
By this, the stutter-equivalent comparison collapses to a step-wise comparison. 

\begin{theorem}\label{thm:synchEpistemicApp}
Under synchronous time semantics, for every epistemic temporal logic formula $\psi$ and every set of agents $\Agts$, there is a HyperLTL formula $\varphi$ such that for all Kripke structures $K$ it holds $(K,\Agts)\models\psi$ iff $K\models\varphi$. 
\end{theorem}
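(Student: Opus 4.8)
The plan is to prove the theorem by structural induction on the ETL formula $\psi$, producing a closed HyperLTL formula $\varphi=\forall\pi_0.\,\widehat\psi$ in which $\pi_0$ is a distinguished ``reference'' trace variable. The induction hypothesis I would carry is: for every Kripke structure $K$ (equipped, as the synchronous convention dictates, with a clock proposition that toggles every step and is observed by all agents), every $t\in\Traces(K)$, and every position $i$, we have $t,i\models_{(K,\Agts)}\psi$ iff $\Pi[\pi_0\mapsto t][i,\infty]\models_{\Traces(K)}\widehat\psi$. Instantiating at $i=0$ and prefixing the universal quantifier over $\pi_0$ then gives exactly $(K,\Agts)\models\psi$ iff $K\models\varphi$.

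For the Boolean and temporal connectives the translation is homomorphic: $\widehat{a}=a_{\pi_0}$, $\widehat{\neg\psi}=\neg\widehat\psi$, $\widehat{\psi_1\vee\psi_2}=\widehat{\psi_1}\vee\widehat{\psi_2}$, $\widehat{\LTLcircle\psi}=\LTLcircle\widehat\psi$, and $\widehat{\psi_1\U\psi_2}=\widehat{\psi_1}\U\widehat{\psi_2}$. The induction step for these is immediate, since HyperLTL evaluates temporal operators by shifting the whole trace assignment by the same offset, which mirrors the position shift in the ETL semantics. All of the content sits in the knowledge operator.

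For $\psi={\sf K}_A\psi'$, the ETL semantics says $t,i\models{\sf K}_A\psi'$ iff every $t'\in\Traces(K)$ whose $A$-observable prefix $t'[0,i]$ is stepwise-equivalent to $t[0,i]$ satisfies $t',i\models\psi'$ (the clock collapses the stutter-equivalence $\approx_A$ of the knowledge semantics to a stepwise comparison, which is what HyperLTL's position-by-position propositions can see). I would encode this by introducing a \emph{fresh} universally quantified trace variable $\pi'$ together with a subformula $\mathit{agr}_A(\pi_0,\pi')$ expressing that $\pi_0$ and $\pi'$ have agreed, on what $A$ observes, at every position up to and including the current one, and setting $\widehat{{\sf K}_A\psi'}=\forall\pi'.\,(\mathit{agr}_A(\pi_0,\pi')\to\widehat{\psi'}')$, where $\widehat{\psi'}'$ is $\widehat{\psi'}$ with $\pi_0$ renamed to $\pi'$. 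Since HyperLTL has no past operators, ``agreed at all earlier positions'' cannot be asserted at position $i$ directly; instead I would exploit that prefix-agreement is a safety condition on the pair $(\pi_0,\pi')$ — it is equivalent to ``the pair has not yet disagreed'' — and push the comparison forward along the run, discharging the knowledge obligation position-by-position rather than looking it up retroactively. Finally, because every quantifier introduced this way is universal and the matrices produced by the inner induction are otherwise quantifier-free, all the $\forall\pi'$'s can be pulled to the front, restoring prenex form and hence a genuine HyperLTL formula.

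The main obstacle is precisely the interaction of this knowledge translation with prenex form when a knowledge operator occurs inside an \emph{existential} temporal operator, e.g.\ $\LTLdiamond\,{\sf K}_A\psi'$, or on either side of an until: ETL there demands $\exists i\,\forall t'\,[\,\text{prefix-agreement through }i \Rightarrow \widehat{\psi'}\text{ at }i\,]$, whereas naively commuting the $\forall\pi'$ past the $\LTLdiamond$ yields only $\forall t'\,\exists j\,[\dots]$, and these are not interchangeable in general, so the obvious translation is unsound. Overcoming this requires a more careful encoding — making the knowledge obligation ``come due'' exactly at the first position where $\pi_0$ and $\pi'$ diverge under $A$, using that $\Traces(K)$ is topologically closed (a limit of traces agreeing with $t$ on longer and longer prefixes is again a trace of $K$, indeed $t$ itself) and that the clock forces the two traces' indices to line up — and then verifying the induction step by a case analysis on that first divergence position. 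Getting that reformulation and its correctness proof right is where the real work is; the remaining cases and the assembly of the closed prenex HyperLTL formula are routine.
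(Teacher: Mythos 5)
Your homomorphic treatment of the Boolean and temporal connectives matches the intended reduction, and you have correctly located the crux: once ${\sf K}_A\psi'$ sits under a temporal operator such as $\LTLdiamond$ or $\U$, the ETL semantics demands an evaluation position chosen before the alternative trace is quantified ($\exists i\,\forall t'$), whereas pulling your $\forall\pi'$ into the prefix yields only $\forall t'\,\exists i$. But at exactly this point the proposal stops short of a proof. The ``more careful encoding'' you gesture at --- making the obligation come due at the first $A$-divergence of $\pi_0$ and $\pi'$, plus topological closedness of $\Traces(K)$ --- is not worked out and does not obviously repair the swap: the problematic position is the witness of the enclosing temporal operator, not the divergence point of the two traces, and closedness of the trace set gives no license to make that witness uniform in $t'$. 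You also do not address negative occurrences of ${\sf K}_A$, which after negation prenex to \emph{existential} trace quantifiers inside temporal contexts and raise the dual problem.

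The paper's proof supplies precisely the mechanism your sketch lacks: a way to pre-commit to the evaluation positions independently of the quantified alternative traces. It enlarges the alphabet with fresh propositions $u,t$ unconstrained by $K$ (via $\Traces(K,\AP')$), so that quantifying a trace over them is QPTL-style quantification over proposition sequences. Each occurrence of ${\sf K}_A\psi'$ in the quantifier-free matrix is replaced by $u_\pi$ for an \emph{existentially} quantified trace $\pi$ that guesses, once and for all, the positions at which the knowledge claim is asserted; a universally quantified trace $\pi'$ carrying $t$ with $t_{\pi'}\U\LTLsquare\neg t_{\pi'}$ sweeps over all prefix lengths; and a third quantifier ($\forall\pi''$ for positive, $\exists\pi''$ for negative polarity) ranges over the alternative traces, tied together by conjuncts of the form $\LTLsquare(t_{\pi'}\to\pi^*[0]\!=_A\!\pi''[0])$ and $\LTLsquare(t_{\pi'}\wedge u_\pi\to[\psi']_{\pi''})$. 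Because the guess $u$ is made up front, all new quantifiers can be placed in the prefix without the unsound exchange, and the transformation is iterated once per knowledge operator. Without this device (or an equivalent one) your induction step for ${\sf K}_A$ under $\LTLdiamond$ or $\U$ does not go through, so the proposal has a genuine gap exactly where you concede ``the real work is.''
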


\begin{proof}
We start by considering the extension of HyperLTL with the \emph{knowledge} operator with the semantics above. 
We prepend a universal path quantifier to the formula $\psi$ and apply a stepwise transformation to eliminate all knowledge operators.
(Note that the semantics relations $\pi,i\models\psi$ of linear time epistemic logics and and $\Pi\models_K\varphi$ of \Hpctl{} have different parameters, but it is plain how to join them into a single relation $\Pi,i\models_K\varphi$.)  
	
  Let $\varphi$ be a HyperLTL formula in NNF that possibly has knowledge operators.
  For brevity, we summarize the
  leading quantifiers of $\varphi$ with $\mathbf{Q}$, such that
  $\varphi=\mathbf{Q}.\varphi'$ where $\varphi'$ is quantifier-free.
  Let $t$ and $u$ be propositions that $\varphi$ does not refer to and that are not in the alphabet of the Kripke structure. 
	In case a knowledge operator ${\sf K}_A\psi'$ occurs in $\varphi'$ with positive polarity, translate $\varphi$ into the following HyperLTL formula: 
\[
\begin{array}{l}
\mathbf{Q}. \exists\pi.~\forall\pi'.~\forall\pi''.~\varphi'|_{{\sf K}_A\psi'\to u_{\pi}}\;\wedge\;
\big(
 (t_{\pi'}\U\LTLsquare \neg t_{\pi'}) ~\to~  \\\qquad  \LTLsquare(t_{\pi'}\to (\pi^*[0]\!=_A\!\pi''[0])) ~\to~ \LTLsquare(t_{\pi'}\wedge u_{\pi}\to [\psi']_{\pi''})\big)
\end{array}
\]
and, if the knowledge operator occurs negatively,
\[
\begin{array}{l}
\mathbf{Q}. \exists\pi.~\forall \pi'.~\exists\pi''.~\varphi'|_{\neg{\sf K}_A\psi'\to u_{\pi}}\;\wedge\;
\big( (t_{\pi'}\U\LTLsquare \neg t_{\pi'}) \to \\\qquad   \LTLsquare(t_{\pi'}\to (\pi^*[0]\!=_A\!\pi''[0]))  ~\wedge~ \LTLdiamond(t_{\pi'}\!\wedge u_{\pi}\wedge \neg [\psi']_{\pi''})\big)
\end{array}
\]
where $\varphi'|_{{\sf K}_P\psi\to u_{\pi}}$ denotes that in
$\varphi'$ one positive or one negative occurrence, respectively, of the knowledge
operator ${\sf K}_P\psi$ is replaced by proposition $u$, and where $\pi^*$ is the path on which the knowledge operator that we currently eliminate was applied on (which could be a different path to the first one, in case of nested knowledge operators). 
We repeat this transformation until no knowledge operators remain.

Paths $\pi$ and $\pi'$ are only used to carry the information about proposition $u$ and $t$, respectively. That is, we use these paths to quantify over a sequence of atomic propositions, the same way the QPTL quantification works. 
The sequence of the atomic proposition $u$ indicates at which positions of a given trace the knowledge operator needs to be evaluated. 
There can be multiple such sequences (e.g. for a formula $\LTLdiamond (\mathsf{K}_{A}\varphi'')$) that would make the formula satisfied, so the sequence is existentially quantified ($\exists\pi$). 
We need to fix one particular of these sequences of applications of the knowledge operator in advance, because we cannot go back in time to the initial state where the quantification knowledge operator happens. 

The sequence of atomic propositions $t$ is restricted to be true initially until it is globally false. 
In this way, we select a point in time until which the paths $\pi_n$ and $\pi''$ are required to be equal w.r.t. the observations of agent $A$ ($\LTLsquare(t_{\pi'}\to (\pi_n[0]=_A\pi''[0]))$). 
At this selected point in time, if also the knowledge operator is required to hold (indicated by $u$) we need to check that the subformula $\psi'$ of the knowledge operator is checked on the alternative path.
We relaxed this condition to simplify the formula. 
Instead of the point in the sequence where $t$ switches from true to false, we apply subformula $\psi'$ on \emph{all} positions before the that point ($\LTLsquare(t_{\pi'}\wedge u_{\pi}\to [\psi']_{\pi''})$). 
Since the sequences of $t$ are all-quantified, this does not affect the meaning of the formula. 
%
\end{proof}

Note that using past operator would simplify the encoding substantially, since we could more easily refer to the initial state where all paths in epistemic temporal logics branch. 

We now discuss the adaptations necessary to extend the encoding from the proof of Theorem~\ref{thm:synchEpistemicApp} to non-synchronous systems. 
For asynchronous systems it is natural to assume that they may stutter in their current state. 
In this discussion, we also assume that it is visible whether an asynchronous system has last performed a stuttering step, or whether it performed an action. (It may also be possible for a Kripke structure to perform an action that stays in a state.)
We call a Kripke structure with these properties an \emph{asynchronous Kripke structure}. 

The following proof of Theorem~\ref{thm:asynchEpistemic} is also a good example for the general treatment of asynchronous executions in \Hpctl{}.

\begin{theorem}
For every epistemic temporal logic formula $\psi$ and every set of agents $\Agts$, there is a HyperLTL formula $\varphi$ such that for all asynchronous Kripke structures $K$ it holds $(K,\Agts)\models\psi$ iff $K\models\varphi$. 
\end{theorem}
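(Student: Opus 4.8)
The plan is to obtain $\varphi$ by the same recursive construction used in the proof of Theorem~\ref{thm:synchEpistemicApp} --- prepend a universal path quantifier to $\psi$, put the formula in NNF, and eliminate knowledge operators one at a time, writing the intermediate formula as $\mathbf{Q}.\,\varphi'$ --- and to isolate the single place where that construction relied on synchronous time, repairing it using the two defining features of an asynchronous Kripke structure: from every state $K$ can perform a stutter step, so $\Traces(K)$ is closed under re-stuttering; and a distinguished atomic proposition (call it $\mathit{stut}$) flags whether the last transition was a stutter step.

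In the synchronous encoding, the only ingredient that uses lockstep time is the guard $\LTLsquare(t_{\pi'}\to\pi^*[0]\!=_A\!\pi''[0])$. Under synchronous semantics this says exactly that the length-$i$ prefixes of $\pi^*$ and $\pi''$ (where $i$ is the position at which the auxiliary sequence $t_{\pi'}$ stops holding) agree on agent $A$'s observations, and since under synchronous time stutter-equivalence collapses to stepwise equality, that is the same as $\pi^*[0,i]\approx_A\pi''[0,i]$. Asynchronously this breaks: $\pi^*$ and $\pi''$ may place their stutter steps at different positions, so $\approx_A$ on prefixes is strictly weaker than position-wise $=_A$, and we may not even be able to pick the boundary at the same index on both paths.

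First I would replace the direct comparison of $\pi^*$ and $\pi''$ by a comparison performed after re-synchronizing them. Because $\Traces(K)$ is closed under re-stuttering, I can quantify two further paths $\rho$ and $\pi'''$ ranging over re-stutterings of $\pi^*$ and of $\pi''$; because of $\mathit{stut}$, the predicate ``$\rho$ is a re-stuttering of $\pi^*$'' is itself expressible in HyperLTL --- it asserts that $\rho$ differs from $\pi^*$ only at positions where $\mathit{stut}$ holds and that the subsequence of non-$\mathit{stut}$ positions of $\rho$ carries the same $A$-observations, in the same order, as that of $\pi^*$ --- using a $\W$/$\LTLcircle$ pattern analogous to the synchronization and Hamming-distance patterns used elsewhere in the paper. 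The guard above is then rewritten as: $\rho$ re-stutters $\pi^*$, $\pi'''$ re-stutters $\pi''$, and, while $t_{\pi'}$ holds, $\rho$ and $\pi'''$ change their $A$-observation at exactly the same positions and agree there; by a standard lemma on stutter-equivalence (two finite sequences are $\approx_A$-equivalent iff they admit equal-length re-stutterings that are position-wise $=_A$) this is equivalent to $\pi^*[0,i]\approx_A\pi''[0,i]$. Crucially $\psi'$ itself is still evaluated on $\pi''$, not on $\pi'''$, which is legitimate because $\psi'$ is a subformula of the ETL formula and ETL evaluates it over $\Traces(K)$. For a positively occurring ${\sf K}_A\psi'$, the new alternative path $\pi''$ and its re-stuttering $\pi'''$ are universally quantified and $\rho$ existential; the negative case is dual, with the $\LTLdiamond$-witness of the synchronous proof unchanged. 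As before, one relaxes ``evaluate at the boundary position'' to ``evaluate at all positions before the boundary'', which is harmless because the $t$-, $u$-, and alignment-sequences are quantified in the right direction.

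Correctness then follows by induction on the number of remaining knowledge operators, the base case being the stutter-equivalence lemma together with the fact that re-stuttering a trace of an asynchronous $K$ keeps it inside $\Traces(K)$ and does not change the truth value of $\psi'$ when $\psi'$ is checked on the original (non-re-stuttered) trace. I expect the main obstacle to be exactly this bookkeeping: writing ``$\rho$ is a re-stuttering of $\pi^*$'' as a genuine HyperLTL formula (the naive version would quantify over an explicit alignment relation, which is unavailable, but $\mathit{stut}$ makes the non-stutter subsequence syntactically identifiable), and making the boundary index chosen by $t_{\pi'}$ line up correctly on the re-stuttered copies $\rho,\pi'''$ versus the original index $i$ on $\pi^*,\pi''$. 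Everything else --- the recursive elimination, the handling of nested knowledge operators via the marker path $\pi^*$, and the two implication directions --- is inherited unchanged from the proof of Theorem~\ref{thm:synchEpistemicApp}.
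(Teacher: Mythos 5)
There is a genuine gap, and it sits exactly at the step you flag as the ``main obstacle'': the predicate ``$\rho$ is a re-stuttering of $\pi^*$'' is not something you can write in HyperLTL, and your proposed rendering of it does not capture it. The clause ``$\rho$ differs from $\pi^*$ only at positions where $\mathit{stut}$ holds'' is a positionwise (lockstep) condition, but re-stuttering is not a positionwise relation: inserting one stutter step shifts the entire remainder of the trace, so beyond that point the two paths differ at non-stutter positions as well. The remaining clause, ``the non-$\mathit{stut}$ subsequence of $\rho$ carries the same $A$-observations, in the same order, as that of $\pi^*$,'' requires relating the $m$-th non-stutter position of $\rho$ to the $m$-th non-stutter position of $\pi^*$, which lie at unboundedly different absolute indices. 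HyperLTL evaluates all quantified paths under a single shared clock, so every formula (including the $\W$/$\LTLcircle$ patterns you cite, and the paper's $\mathsf{synch}$ and Hamming patterns, which are all positionwise or fixed-offset comparisons) can only compare positions of equal or boundedly shifted index; equality of de-stuttered subsequences is precisely the kind of asynchronous comparison that falls outside this. So the central repair your construction rests on is unjustified and, in fact, is the very difficulty the theorem's extra hypothesis is there to circumvent.

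The paper's proof avoids ever expressing ``is a re-stuttering of'' inside the logic. Because $K$ is an \emph{asynchronous} Kripke structure, every stuttered variant of an execution is already a path of $K$, so the path quantifiers themselves range over all alignments; the formula only adds the lockstep-expressible side conditions $\mathsf{progress}(\pi'')=\LTLsquare\LTLdiamond\neg\text{stutter}_{\pi''}$ and $\mathsf{synch}(\pi,\pi'')$ (observation changes occur at the same absolute positions), and the existence of correctly aligned, equal-length stuttered versions of the two prefixes is established semantically in the correctness argument, not encoded as a formula. This also dissolves your second, acknowledged-but-unresolved problem: in the paper $\psi'$ is evaluated directly on the quantified (stuttered) alternative path $\pi''$ at the $t$-marked boundary, with the padding arranged so that the boundary state coincides with the intended state of the original execution; there is no need to transfer an index from the re-stuttered timeline back to an unstuttered copy $\pi''$, which your version would have to do but cannot, for the same shared-clock reason as above. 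To fix your proof you would have to abandon the definable ``re-stuttering'' predicate and, as the paper does, push the alignment argument into the semantics while keeping only $\mathsf{progress}$- and $\mathsf{synch}$-style constraints in the formula.
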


\begin{proof}
We assume that the atomic proposition $\text{stutter}$ describes that no action was performed in the last step. 

%

Since there always is the option to stutter in a stat, the number of paths we quantify over is larger. 
The idea of the proof is to restrict this enlarged set of paths to those that do not stutter forever ($\mathsf{progress}(\pi)=\LTLsquare\LTLdiamond\neg\text{stutter}_\pi$) and that synchronize correctly \big($\mathsf{synch}(\pi,\pi')=\LTLsquare\big( \pi[0]\not=_P\LTLcircle\pi[0] 
~~\Leftrightarrow~~ \pi'[0]\not=_P\LTLcircle\pi'[0]\big)$\big). 
By requiring the correct synchronization, we choose an alignment where changes in the observations happen in both paths at the same points in the sequences. 
That is, the remaining positions must be filled with stuttering steps. 


The encoding from the proof of Theorem~\ref{thm:synchEpistemicApp} is now modified as follows for positively occurring knowledge operators:
\[
\begin{array}{l}
\mathbf{Q}. \exists\pi. ~\varphi'|_{{\sf K}_A\psi'\to u_{\pi}}\;\wedge\;
\big(
\forall\pi'.~ (t_{\pi'}\U\LTLsquare \neg t_{\pi'}) ~\to~ \\\qquad 
\forall\pi''.~  \mathsf{progress}(\pi'') ~\wedge~ \mathsf{synch}(\pi,\pi'') ~\wedge~ \\\qquad  \LTLsquare(t_{\pi'}\to (\pi^*[0]\!=_A\!\pi''[0])) ~\to~ \LTLsquare(t_{\pi'}\wedge u_{\pi}\to [\psi']_{\pi''})\big)
\end{array}
\]
and, for negatively occurring knowledge operators:
\[
\begin{array}{l}
\mathbf{Q}. \exists\pi. ~\varphi'|_{{\sf K}_A\psi'\to u_{\pi}}\;\wedge\;
\big(\forall\pi'.~ (t_{\pi'}\U\LTLsquare \neg t_{\pi'}) \to \\\qquad 
\exists\pi''.~ \mathsf{progress}(\pi'') ~\wedge~ \mathsf{synch}(\pi,\pi'') ~\wedge~ \\\qquad
 \LTLsquare(t_{\pi'}\to (\pi^*[0]\!=_A\!\pi''[0]))  ~\wedge~ \LTLdiamond(t_{\pi'}\!\wedge u_{\pi}\wedge \neg [\psi']_{\pi''})\big)
\end{array}
\]

For a positively occurring knowledge operator $\mathcal{K}_A\psi'$, we prove that given two executions $\pi_1$ and $\pi_2$, if the sub-formula $\psi'$ must hold on $\pi_2$ at some position $i$ (by the semantics of the knowledge operator), then the formula above requires that on a stuttered version of $\pi_1$ the sub-formula is applied at state $\pi_2[i]$. 
We consider the prefixes of the two executions, $\pi_1[0,i]$ and $\pi_2[0,j]$ and assume they have the same traces with respect to agent $A$. 
For these, there are two stuttered versions $\pi$ and $\pi''$ (named to match the path variables in the formula above) that synchronize the positions at which they change their observations with respect to $A$ and pad the prefixes to the same length $k$ (without changing their final states, that is $\pi_2[i]=\pi''[k]$). 
There is also a labeling with $t_{\pi'}$ that ensures that the knowledge operator is evaluated until position $k$. 
Hence, $\psi'$ is applied on $\pi''$ at state $\pi_2[i]$. 

The other direction is straightforward, and the case of a negatively occurring knowledge operator follows similarly. 
The quantifiers do not occur inside the scope of temporal operators, and can thus easily be pulled to the front of the formula, resulting in a HyperLTL formula. 
\end{proof}

\section{Satisfiability}
\label{app:satisfiability}

\begin{theorem}
  For {\Hpctl}, finite-state satisfiability is hard for $\Sigma^0_1$ and
  general satisfiability is hard for $\Sigma^1_1$.
\end{theorem}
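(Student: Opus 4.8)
\noindent\emph{Proof plan.} The plan is to reduce the \emph{distributed realizability problem} for LTL to \Hpctl{} satisfiability and then transfer hardness from the known classification of distributed synthesis: realizability by a \emph{finite-state} distributed implementation is $\Sigma^0_1$-hard, and realizability by an \emph{arbitrary} implementation is $\Sigma^1_1$-hard. So if I can build, from an architecture and an LTL specification $\varphi_{\mathit{spec}}$ over input propositions $I$ and output propositions $O=\dotcup_j O_j$ (with $I_j\subseteq I$ visible to process $j$), a \Hpctl{} specification $\Psi$ whose finite (resp. arbitrary) models correspond exactly to finite-state (resp. arbitrary) distributed implementations realizing $\varphi_{\mathit{spec}}$, both claims follow at once.

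First I would fix the encoding: an implementation is represented as a Kripke structure $K$ over $\AP=I\cup O$ whose states carry the current input/output valuation, so that $\Traces(K)$ is exactly the set of input/output streams the implementation can produce. Then I would take $\Psi$ to be the conjunction of three closed \Hpctl{} formulas, each beginning with a quantifier: (i) \emph{input-completeness} --- from every reachable state and every input valuation $v\in 2^I$ there is a successor carrying $v$ --- schematically $\forall\pi.\ \LTLsquare\,\bigwedge_{v\in 2^I}\exists\pi'.\ \LTLcircle\,\mathit{inp}_v(\pi')$, which uses a path quantifier nested under temporal operators and is therefore genuinely a \Hpctl{} (not HyperLTL) formula; (ii) for each $j$, the \emph{information constraint} ``$O_j$ at time $t$ depends only on $I_j$ through time $t$'', an observational-determinism-style hyperproperty over pairs of paths in the spirit of~\eqref{hp:od}, written so that $O_j$-agreement is forced at every prefix along which $I_j$ has agreed throughout; and (iii) \emph{correctness}, $\forall\pi.\ \varphi_{\mathit{spec}}$. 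I would then prove the reduction correct in both directions: given $K\models\Psi$, define process $j$'s strategy by reading its $O_j$-output off any path whose $I_j$-history matches the given visible history --- well-defined by (ii), and such a path exists by (i) --- and observe that the joint run on any input stream coincides with an actual trace of $K$, hence satisfies $\varphi_{\mathit{spec}}$ by (iii); conversely, any distributed implementation yields, via the standard strategy-tree/product construction, a Kripke structure satisfying $\Psi$. The crucial point is that this correspondence is \emph{size-faithful}: finite-state implementations match finite Kripke structures, giving $\Sigma^0_1$-hardness of finite-state satisfiability, while arbitrary implementations match arbitrary (possibly infinite) Kripke structures, giving $\Sigma^1_1$-hardness of general satisfiability.

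The main obstacle will be getting part (i) exactly right without over-constraining the model. The existentially chosen continuation $\pi'$ must really extend the \emph{current history} of the universally chosen $\pi$, so that together with (ii) the outputs extracted above are genuine functions of the visible input history rather than artifacts of the particular witnessing path; and since, per the \Hpctl{} semantics, $\exists\pi'$ selects a path from the \emph{current} state of the innermost path variable, the bookkeeping between the $\LTLcircle$/$\LTLsquare$ positions and that selection point has to be handled precisely, including a dedicated dummy initial state so that the very first input is also free. A secondary care point is to invoke the precise degrees of distributed LTL realizability ($\Sigma^0_1$ for finite-state, $\Sigma^1_1$ in general) rather than merely its undecidability, and to check that $\Psi$ is a legal \Hpctl{} specification, i.e. a Boolean combination of closed formulas each headed by a quantifier.
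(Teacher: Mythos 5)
Your proposal matches the paper's proof in all essentials: both reduce distributed LTL synthesis for an architecture of processes with disjoint variables (hard for $\Sigma^0_1$ with finite-state implementations and $\Sigma^1_1$ in general) to \Hpctl{} satisfiability via a conjunction of exactly the same three ingredients --- correctness of the specification on all paths, input-completeness expressed with an existential path quantifier nested under $\LTLsquare$/$\LTLcircle$, and observational-determinism-style independence constraints over pairs of paths --- with finite models corresponding to finite-state implementations. The only cosmetic difference is that you state the independence constraint in prefix/causal form, whereas the paper uses the simpler whole-trace form $\pi\!=_{I_j}\!\pi' \to \pi\!=_{O_j}\!\pi'$, which together with input-completeness already yields the prefix version.
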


\begin{proof}
  We give a reduction from the synthesis problem for LTL
  specifications in a distributed architecture consisting of two
  processes with disjoint sets of variables.  The synthesis problem
  consists on deciding whether there exist transition systems for the
  two processes with input variables $I_1$ and $I_2$, respectively,
  and output variables $O_1$ and $O_2$, respectively, such that the
  synchronous product of the two transition systems satisfies a given
  LTL formula $\varphi$. This problem is hard for $\Sigma^0_1$ if the
  transition systems are required to be finite, and hard for
  $\Sigma^1_1$ if infinite transition systems are allowed~(Theorems
  5.1.8 and 5.1.11 in \cite{Rosner/1992/Thesis}).

  To reduce the synthesis problem to {\Hpctl} satisfiability, we
  construct a {\Hpctl} formula $\psi$ as a conjunction $\psi=\psi_1
  \wedge \psi_2 \wedge \psi_3$.  The first conjunct ensures that
  $\varphi$ holds on all paths: $\psi_1 = \forall \pi. [\varphi]_\pi$,
	where $[\varphi]_\pi$ indicates that the atomic propositions in $\varphi$
get the index $\pi$. 
  The second conjunct ensures that every state of the model has a
  successor for every possible input: $\forall \pi . \LTLsquare
  \bigwedge_{I \subseteq I_1 \cup I_2} \exists \pi' \LTLcircle
  \bigwedge_{i \in I} i \ \bigwedge_{i \not\in I} \neg i$. The third
  conjunct ensures that the output in $O_1$ does not depend on $I_2$
  and the output in $O_2$ does not depend on $I_1$:
  $\psi_3=\forall\pi.\forall\pi'.~\big(\pi\!=_{I_1}\!\pi' \to
  \pi\!=_{O_1}\!\pi'\big)\wedge\big(\pi\!=_{I_2}\!\pi' \to
  \pi\!=_{O_2}\!\pi'\big)$.

The distributed synthesis problem has a (finite) solution iff the {\Hpctl} 
formula $\psi$ has a (finite) model.
\end{proof}
\section{Model-checking Constructions for \Hp{}}
\label{sec:constructions}

\subsection{Self-composition construction}
\label{sec:selfcomp}

Self-composition is the technique that Barthe et al.~\cite{BartheDR04} adopt to verify noninterference policies.  It was generalized by Terauchi and Aiken~\cite{Terauchi+Aiken/05/SecureInformationFlowAsSafetyProblem} to verify observational determinism policies~\cite{Zdancewic:2005:policyDown,ZdancewicM03},
and by Clarkson and Schneider~\cite{ClarksonS10} to verify $k$-safety hyperproperties.
We  extend this technique to model-checking of {\Hp}.

\paragraph*{{\Buchi} automata.} {\Buchi} automata~\cite{Vardi:1994:infComp} are finite-state automata that accept strings of infinite length.
A {\Buchi} automaton is a tuple $(\Sigma,S,\Delta,S_0,F)$ where $\Sigma$ is an alphabet, $S$ is the set of states, $\Delta$ is the transition relation such that $\Delta\subseteq S\times\Sigma\times S$, $S_0$ is the set of initial states, and $F$ is the set of accepting states, where both $S_0\subseteq S$ and $F\subseteq S$.
A \emph{string} is a sequence of letters in $\Sigma$.
A path $s_0s_1\dots$ of a {\Buchi} automaton is \emph{over} a string $\alpha_1\alpha_2\dots$ if, for all $i\geq 0$, it holds that $(s_i,\alpha_{i+1},s_{i+1})\in\Delta$.
A string is \emph{recognized} by a {\Buchi} automaton if there exists a path $\pi$ over the string with some accepting states occurring infinitely often, in which case $\pi$ is an \emph{accepting path}.
The \emph{language} $\mathcal{L}(A)$ of an automaton $A$ is the set of strings that automaton accepts.
A {\Buchi} automaton can be derived~\cite{Clarke:1999:model-checking} from a Kripke structure, which   is a common mathematical model of interactive, state-based systems.

\paragraph*{Self composition.}
The \emph{$n$-fold self-composition} $A^n$ of {\Buchi} automaton $A$ is essentially the product of $A$ with itself, $n$ times. This construction is defined as follows:
\begin{definition}
{\Buchi} automaton $A^n$ is the \emph{$n$-fold self-composition} of {\Buchi} automaton $A$, where $A=(\Sigma,S,\Delta,S_0,F)$, if $A^n=(\Sigma^n,S^n,\Delta',S^n_0,F^n)$ and for all $s_1,s_2\in S^n$ and $\alpha\in\Sigma^n$ we have $(s_1,\alpha,s_2)\in\Delta' $ iff
for all $1\leq i\leq n$, it holds that $(\proj_i(s),\proj_i(\alpha),\proj_i(s'))\in\Delta$.
\end{definition}
Let $\zip$ denote the usual function that maps an $n$-tuple of sequences to a single sequence of $n$-tuples---for example, $\zip([1,2,3],[4,5,6]) = [(1,4), (2,5), (3,6)]$---and let $\unzip$ denote its inverse.
$A^n$ recognizes $zip(\pi_1, \dots, \pi_n)$ if $A$ recognizes each of $\pi_1,\dots,\pi_n$:
\begin{proposition}
$\mathcal{L}(A^n) = \{\zip(\pi_1,\dots,\pi_n) \mid \pi_1,\dots,\pi_n\in\mathcal{L}(A)\}$
\end{proposition}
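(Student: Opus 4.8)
The plan is to prove the claimed equality by two inclusions, in each case unwinding the definition of acceptance for $A^n$ and exploiting the fact that $\Delta'$, $S^n_0$, and (modulo one caveat) $F^n$ all decompose coordinatewise. First I would record the bookkeeping observation that does all the heavy lifting: every $w\in(\Sigma^n)^\omega$ is uniquely of the form $\zip(\pi_1,\dots,\pi_n)$ with each $\pi_i\in\Sigma^\omega$, and a sequence $\sigma=s_0 s_1\cdots$ over $S^n$ is a path of $A^n$ over $w$ starting in $S^n_0$ if and only if, writing $\unzip(\sigma)=(\rho_1,\dots,\rho_n)$ with $\rho_i=\proj_i(s_0)\,\proj_i(s_1)\cdots$, each $\rho_i$ is a path of $A$ over $\pi_i$ starting in $S_0$. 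This is immediate from the definition of $\Delta'$ (a transition lies in $\Delta'$ iff all of its coordinate projections lie in $\Delta$) together with $S^n_0=S_0\times\cdots\times S_0$.

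For the inclusion $\mathcal{L}(A^n)\subseteq\{\zip(\pi_1,\dots,\pi_n)\mid \pi_i\in\mathcal{L}(A)\}$, take $w\in\mathcal{L}(A^n)$ with an accepting path $\sigma$ over $w$, write $w=\zip(\pi_1,\dots,\pi_n)$, and let $(\rho_1,\dots,\rho_n)=\unzip(\sigma)$ as above. Since $\sigma$ visits $F^n=F\times\cdots\times F$ at infinitely many positions and membership in $F^n$ forces each coordinate into $F$, every $\rho_i$ visits $F$ infinitely often, so $\rho_i$ witnesses $\pi_i\in\mathcal{L}(A)$; hence $w$ lies in the right-hand set. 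This direction is routine.

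The reverse inclusion is where the only real work lies. Given $\pi_1,\dots,\pi_n\in\mathcal{L}(A)$ with accepting paths $\rho_1,\dots,\rho_n$, the sequence $\sigma=\zip(\rho_1,\dots,\rho_n)$ is, by the bookkeeping observation, a path of $A^n$ over $\zip(\pi_1,\dots,\pi_n)$ starting in $S^n_0$, so it remains only to check that $\sigma$ is accepting. I expect this to be the main obstacle: $\sigma$ meets $F^n$ exactly at the positions of $\bigcap_i I_i$, where $I_i$ is the infinite set of positions at which $\rho_i$ visits $F$, and that intersection need not itself be infinite. I would discharge it using the observation, noted just before the definition, that the self-composition construction is applied only to \Buchi{} automata derived from a Kripke structure, for which $F=S$; then $F^n=S^n$, every infinite path of $A^n$ is accepting, and $\sigma$ works without further argument. (For an arbitrary \Buchi{} automaton the bare synchronous product can strictly undercount the language, and one would instead take the standard round‑robin variant with state space $S^n\times\{1,\dots,n\}$ and a rotating acceptance flag, whose language is exactly the right-hand set; the proposition is to be read with respect to that refinement.) Combining the two inclusions yields the stated equality.
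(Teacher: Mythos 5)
Your proof is correct, and it is substantially more careful than the paper's, which disposes of the proposition with the single sentence ``By the construction of $A^n$.'' Both arguments rest on the same coordinatewise decomposition of runs (a sequence over $S^n$ is a run of $A^n$ on $\zip(\pi_1,\dots,\pi_n)$ iff each projection $\proj_i$ of it is a run of $A$ on $\pi_i$), and the left-to-right inclusion is routine in both. The caveat you raise for the right-to-left inclusion is genuine: for an arbitrary \Buchi{} automaton the plain synchronous product with acceptance set $F^n = F\times\cdots\times F$ can strictly undercount, because accepting runs $\rho_1,\dots,\rho_n$ need not visit $F$ at common positions (a two-state automaton accepting ``infinitely many $a$'s,'' run on $(ab)^\omega$ and $(ba)^\omega$, never has both coordinates in the accepting state simultaneously). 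The paper never addresses this; its one-line proof is sound only because, in the model-checking algorithm, self-composition is applied exclusively to $A_K$, the automaton derived from the Kripke structure, and the standard derivation it cites makes every state accepting, so every infinite run of $A_K^n$ is accepting---precisely the observation you invoke, although the paper does not state $F=S$ explicitly and the definition of $A^n$ does not spell out the acceptance component beyond writing $F^n$. Your alternative repair via a round-robin or generalized acceptance condition is the standard fix if one wants the equality for arbitrary $A$. In short: same route as the paper, but you make explicit the hypothesis under which the stated equality actually holds, which the paper leaves tacit.
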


\begin{proof}
By the construction of $A^n$.
\end{proof}

\subsection{Formula-to-automaton construction}
\label{sec:automataCons}

Given a {\Hp} formula $\forall\pi_1\dots\forall\pi_k\exists\pi_{k+1}\dots\exists\pi_{k+j}\psi$, we now show how to construct an automaton that accepts exactly the strings $w$ for which $\unzip(w)\models_\emptyset\psi$. 
Our construction extends standard methodologies for  LTL automata construction~\cite{GerthVardi:1995:OntheFlyLTL,Vardi:1996:LTL,Gastin:2001:FastLTLtoBuchi}.

\paragraph{1. Negation normal form.}
We begin by preprocessing $\psi$ to put it in a form more amenable to model checking.
The formula is rewritten to be in \emph{negation normal form} (NNF), meaning (i)  negation connectives are applied only to atomic propositions in $\psi$, (ii) the only connectives used in $\psi$ are \X{}, \U{}, \R{}, $\neg$, $\vee$ and $\wedge$.
We identify $\neg\neg\psi$ with $\psi$.

\paragraph{2. Construction.} 
We now construct a \emph{generalized {\Buchi} automaton}~\cite{CourcoubVardi:1992:MemEffAlgo} $A_{\psi}$ for $\psi$. 
A generalized {\Buchi} automaton is the same as a {\Buchi} automaton except that it has multiple sets of accepting states. 
That is, a generalized {\Buchi} automaton is a tuple $(\Sigma,S,\Delta,S_0,F)$ where $\Sigma$, $S$, $\Delta$ and $S_0$ are defined as for {\Buchi} automata, and $F=\setdef{F_i}{1\leq i\leq m \text{ and } F_i\subseteq S}$.
Each of the $F_i$ is an \emph{accepting set}.
A string is recognized by a generalized {\Buchi} automaton if there is a path over the string with at least one of the states in every accepting set occurring infinitely often.

To construct the states of $A_\psi$, we need some additional definitions. 
Define \emph{closure} $\cl(\psi)$ of $\psi$ to be the least set of subformulas of $\psi$ that is closed under the following rules:
\begin{itemize}
	\item if $\psi'\in \cl(\psi)$, then $\neg\psi'\in \cl(\psi)$.
	\item if $\psi_1\wedge \psi_2\in \cl(\psi)$ or $\psi_1\vee \psi_2\in \cl(\psi)$, then $\{\psi_1,\psi_2\}\subseteq \cl(\psi)$.
	\item if $\X \psi'\in \cl(\psi)$, then $\psi'\in \cl(\psi)$.
	\item if $\psi_1\U \psi_2\in \cl(\psi)$ or $\psi_1\R \psi_2\in \cl(\psi)$, then $\{\psi_1,\psi_2\}\subseteq \cl(\psi)$.
\end{itemize} 
And define $Mcs$ to be a \emph{maximal consistent set} with respect to $\cl(\psi)$ if $Mcs\subseteq\cl(\psi)$ and the following conditions hold:
\begin{itemize}
	\item $\psi'\in Mcs$ iff $\neg \psi'\not\in Mcs$.
\item if $\psi_1\wedge \psi_2\in \cl(\psi)$, then ($\psi_1\wedge \psi_2\in Mcs$ iff $\{\psi_1,\psi_2\} \subseteq Mcs$).
	\item if $\psi_1\vee \psi_2\in \cl(\psi)$, then ($\psi_1\vee \psi_2\in Mcs$ iff $\psi_1\in Mcs$ or $\psi_2\in Mcs$).
	\item if $\psi_1\U \psi_2\in Mcs$ then $\psi_1\in Mcs$ or $\psi_2\in Mcs$.
	\item if $\psi_1\R \psi_2\in Mcs$ then $\psi_2\in Mcs$.
\end{itemize} 

\noindent Define $\ms(\psi)$ to be the set of all maximal consistent sets with respect to $\psi$. 
The elements of $\ms(\psi)$ will be the states of $A_\psi$; hence each state is a set of formulas.
Intuitively, a state $s$ describes a set of trace tuples where each tuple satisfies all the formulas in $s$.  
There will be a transition from a state $s_1$ to a state $s_2$ iff every trace tuple described by $s_2$ is an \emph{immediate suffix} of some tuple described by $s_1$.
(Tuple $\Pi$ is an immediate suffix of $\Pi'$ iff $\Pi = \Pi'[1,\infty]$.)

Automaton $A_{\psi}=(\Sigma_{\psi},S_{\psi},\Delta_{\psi},\{\iota_{\psi}\},F_{\psi})$ is defined as follows:
\begin{itemize}
	\item The alphabet $\Sigma_{\psi}$ is $\mathcal{P}(\AP)^n$ where $\AP$ is the set of atomic propositions.  Each letter of the alphabet is, therefore, an $n$-tuple of sets of atomic propositions.
	\item The set $S_{\psi}$ of states is $\ms(\psi) \cup \{\iota_\psi\}$, where $\ms(\psi)$ is defined above and $\iota_{\psi}$ is a distinct initial state.
	\item The transition relation $\Delta_\psi$ contains $(s_1,\alpha,s_2)$, where  $\{s_1,s_2\} \subseteq S_{\psi}\setminus\{\iota_\psi\}$ and $\alpha\in\Sigma_{\psi}$, iff
		\begin{itemize}
			\item If $a_{\pi_i}\in s_2$ for some $1\leq i\leq n$, then $a\in \proj_i(\alpha)$.
				Likewise, if $\neg a_{\pi_i}\in s_2$, then $a\not\in \proj_i(\alpha)$.
			\item If $\X\psi'\in s_1$ then $\psi'\in s_2$. 
			\item If $\psi_1\U\psi_2\in s_1$ and $\psi_2\not\in s_1$ then $\psi_1\U\psi_2\in s_2$. 
			\item If $\psi_1\R\psi_2\in s_1$ and $\neg\psi_1\in s_1$ then $\psi_1\R\psi_2\in s_2$. 
		\end{itemize}
  And $\Delta_\psi$ contains $(\iota_{\psi},\alpha,s_2)$ iff $\psi\in s_2$ and $(\iota_{\psi},\alpha,s_2)$ is a transition permitted by the above rules for atomic propositions and their negations. 
    \item The set of initial states contains only $\iota_\psi$.
	\item The set $F_\psi$ of sets of accepting states contains one set $\setdef{s\in (S_{\psi}\setminus\{\iota_\psi\})}{\neg(\psi_1\U\psi_2)\in s \text{ or }\psi_2\in s}$ for each until formula $\psi_1\U\psi_2$ in $\cl(\psi)$.
\end{itemize}
The definition of $F_\psi$ guarantees that, for every until formula $\psi_1\U\psi_2$, eventually $\psi_2$ will hold.
	That is because the transition rules do not allow a transition from a state containing $\psi_1\U\psi_2$ to a state containing $\neg(\psi_1\U\psi_2)$ unless $\psi_2$ is already satisfied.

\paragraph{3. Degeneralization of {\Buchi} automata.} Finally, convert generalized {\Buchi} automaton $A_{\psi}$ to a ``plain'' {\Buchi} automaton. This conversion is entirely standard~\cite{GerthVardi:1995:OntheFlyLTL}, so we do not repeat it here.

\paragraph*{Correctness of the construction.}
Again without loss of generality, assume that the names of the traces are natural numbers;
then $\Pi$ is isomorphic to an $n$-tuple of traces.
Henceforth, we treat $\Pi$ as that tuple.
The following proposition states that $A_\psi$ is constructed such that it recognizes computation tuples that model $\psi$:
\begin{proposition}
 $\Pi\models_\emptyset\psi$ iff $\zip(\Pi)\in\mathcal{L}(A_{\psi})$.
\end{proposition}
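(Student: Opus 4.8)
This is the standard correctness statement for an LTL-to-automaton translation (in the style of~\cite{GerthVardi:1995:OntheFlyLTL,Vardi:1996:LTL}), lifted to the product alphabet $\mathcal{P}(\AP)^n$ and to indexed atoms $a_{\pi_i}$. The bridge throughout is that the trace assignment $\Pi$ (on $n$ path variables, which we may as well call $1,\dots,n$) \emph{is} the $\omega$-word $\zip(\Pi)$ over $\mathcal{P}(\AP)^n$: writing $\Pi_i$ for the immediate-suffix tuple $\Pi[i,\infty]$, the atom $a_{\pi_j}$ holds at $\Pi_i$ iff $a\in\proj_j(\zip(\Pi)[i])$, which is exactly the letter-constraint $\Delta_\psi$ places on a state reached by reading $\zip(\Pi)[i]$. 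So I would fix a path $\iota_\psi\,s_0\,s_1\cdots$ of $A_\psi$ that reads $\zip(\Pi)$, named so that $s_i$ is entered after the letter $\zip(\Pi)[i]$; then $s_i$ is the state ``responsible for'' $\Pi_i$, and $\psi\in s_0$ holds precisely because of the initial-transition rule for $\iota_\psi$. Recall also that, after the preprocessing step, $\psi$ is in NNF, so its subformulas carry $\neg$ only on atoms.

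\textbf{Models yield accepting paths (``only if'').} Given $\Pi\models_\emptyset\psi$, I would exhibit the \emph{canonical} path $s_i:=\{\chi\in\cl(\psi):\Pi_i\models_\emptyset\chi\}$ and make three checks. (i) Each $s_i\in\ms(\psi)$: the clause ``$\chi\in s_i$ iff $\neg\chi\notin s_i$'' and the $\wedge,\vee$ clauses are classical; ``$\psi_1\U\psi_2\in s_i$ implies $\psi_1\in s_i$ or $\psi_2\in s_i$'' and ``$\psi_1\R\psi_2\in s_i$ implies $\psi_2\in s_i$'' follow by unrolling the operator one step at $\Pi_i$. (ii) $(\iota_\psi,\zip(\Pi)[0],s_0)$ and each $(s_i,\zip(\Pi)[i{+}1],s_{i+1})$ satisfy $\Delta_\psi$: $\psi\in s_0$ since $\Pi\models_\emptyset\psi$; the (possibly negated) atomic-letter constraints hold by the bridge; ``$\X\chi\in s_i$ implies $\chi\in s_{i+1}$'' is the semantics of $\X$; and the two $\U/\R$ propagation rules hold because, e.g., if $\psi_1\U\psi_2$ holds at $\Pi_i$ but $\psi_2$ does not, then $\psi_1\U\psi_2$ holds at $\Pi_{i+1}$. (iii) Acceptance: for the accepting set attached to $\psi_1\U\psi_2$, namely $\{s:\neg(\psi_1\U\psi_2)\in s\text{ or }\psi_2\in s\}$, at every position $i$ either $\psi_1\U\psi_2$ fails at $\Pi_i$, so $s_i$ is in the set, or it holds and has a least witness $k\ge i$ with $\psi_2$ true at $\Pi_k$, so $s_k$ is in the set; either way the set recurs arbitrarily far out, so the path is accepting. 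Hence $\zip(\Pi)\in\mathcal{L}(A_\psi)$.

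\textbf{Accepting paths yield models (``if'').} Given an accepting path $\iota_\psi\,s_0\,s_1\cdots$ reading $\zip(\Pi)$, I would prove, by induction on the structure of a subformula $\theta$ of the NNF $\psi$, the single claim: for all $i$, if $\theta\in s_i$ then $\Pi_i\models_\emptyset\theta$. Note only this one direction is needed and only for subformulas of $\psi$, so the $\neg$-of-compound elements of $\cl(\psi)$ never enter the induction. Base cases: for $\theta=a_{\pi_j}$ and $\theta=\neg a_{\pi_j}$ the letter-constraints of $\Delta_\psi$ give the result directly. The $\vee$, $\wedge$ cases use the $\ms(\psi)$-consistency clauses plus the inductive hypothesis, and the $\X$ case uses the $\X$-propagation rule plus the inductive hypothesis. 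The interesting cases are $\U$ and $\R$, handled by an inner induction on the position index. For $\psi_1\U\psi_2\in s_i$: if $\psi_2\in s_i$, apply the hypothesis for $\psi_2$; otherwise $\ms(\psi)$-consistency gives $\psi_1\in s_i$ and $\Delta_\psi$ gives $\psi_1\U\psi_2\in s_{i+1}$, and iterating, $\psi_1\U\psi_2\in s_j$ and $\psi_1\in s_j$ for all $j\ge i$ up to the first $k$ with $\psi_2\in s_k$; such a $k$ \emph{must} exist, for otherwise $s_j$ avoids the accepting set of $\psi_1\U\psi_2$ for all $j\ge i$, contradicting acceptance; then the hypothesis for $\psi_1$ at $i,\dots,k{-}1$ and for $\psi_2$ at $k$ gives $\Pi_i\models_\emptyset\psi_1\U\psi_2$. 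For $\psi_1\R\psi_2\in s_i$: $\ms(\psi)$-consistency forces $\psi_2\in s_i$, and then either $\psi_1\in s_i$ (conclude via $\Pi_i\models\psi_1\wedge\psi_2$) or the release obligation propagates by $\Delta_\psi$ to $s_{i+1}$; iterating, it either propagates forever (so $\psi_2\in s_j$ for all $j\ge i$) or until the first $k$ with $\psi_1\in s_k$, and in either sub-case the hypothesis for $\psi_1,\psi_2$ gives $\Pi_i\models_\emptyset\psi_1\R\psi_2$; crucially, the $\R$ case uses \emph{no} fairness. Instantiating the claim at $\theta=\psi$, $i=0$ (legal since the initial-transition rule puts $\psi\in s_0$) yields $\Pi\models_\emptyset\psi$.

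\textbf{Main obstacle.} I expect the only real difficulty to be the $\U$ case of the ``if'' direction: reconciling the purely \emph{local}, forward-only propagation rules of $\Delta_\psi$ with the \emph{global} B\"uchi acceptance condition so that every until obligation is provably discharged, which is exactly where the definition of $F_\psi$ earns its keep and where the induction must be run simultaneously on formula structure and on position. Everything else is a mechanical unwinding of the definitions of $\cl(\psi)$, $\ms(\psi)$, $\Delta_\psi$, and the quantifier-free HyperLTL semantics; the product flavour of the alphabet causes no trouble because the components $\proj_1,\dots,\proj_n$ are constrained independently in the atomic clauses.
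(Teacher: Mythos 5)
Your proof is correct and follows essentially the same route as the paper's: the forward direction exhibits the canonical run $s_i=\{\chi\in\cl(\psi)\mid \Pi[i,\infty]\models_\emptyset\chi\}$ and checks consistency, transitions, and acceptance, while the converse is an induction relating membership of formulas in the run's states to satisfaction at the corresponding suffix, with the B\"uchi condition discharging the $\U$ obligations. The only (harmless) deviation is that the paper proves the stronger biconditional lemma for all of $\cl(\psi)$ (with an explicit $\neg$ case), whereas you prove just the needed one-directional claim for NNF subformulas and spell out the $\R$, $\wedge$, and acceptance details that the paper leaves terse.
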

\begin{proof}
($\Leftarrow$)
By the construction of $A_{\psi}$, the states with a transition from $\iota_{\psi}$ contain $\psi$. Hence by Lemma~\ref{lem:autoStrings} below, for all the strings $w$ such that $w=\zip(\Pi)$ in $\mathcal{L}(A_{\psi})$, it holds that $\Pi\models_\emptyset\psi$.
\par
($\Rightarrow$) 
Let $s_i=\{\psi'\in\cl(\psi)\mid \Pi[i,\infty]\models_\emptyset\psi'\}$ for all $i\geq 0$. Then by the definition, $s_i\in\ms(\psi)$. We show that $\iota_{\psi}s_0s_1\dots$ is an accepting path in $A_{\psi}$. By $\Pi\models_\emptyset\psi$ we have $\psi\in s_0$. By the construction of $A_{\psi}$, $(i_{\psi},\alpha_0,s_0)\in\Delta_{\psi}$ where $\alpha_0=\zip(\Pi)[0]$. The construction of the path inductively follows the construction of $A_{\psi}$, which respects the semantics of HyperLTL.
\end{proof}

\begin{lemma}
\label{lem:autoStrings}
Let $\iota_{\psi}s_0\dots$ be an accepting path in $A_{\psi}$ over the string $w=\alpha_0\alpha_1\dots$. Let $\Pi=\unzip(w)$. Then for all $i\geq 0$, it holds that $\psi'\in s_i$ iff $\Pi[i,\infty]\models_\emptyset\psi'$.
\end{lemma}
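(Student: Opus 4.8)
The plan is to prove the lemma by induction on the syntactic structure of the formula $\psi' \in \cl(\psi)$, establishing the biconditional uniformly for every index $i$. The induction is well-founded since $\cl(\psi)$ is finite and closed under subformulas and single negations (identifying $\neg\neg\chi$ with $\chi$), so every case below refers only to strictly smaller members of $\cl(\psi)$. Throughout, let $\alpha_i$ denote the letter labeling the transition into $s_i$, so that, writing $\Pi = \unzip(w)$ as the $n$-tuple of traces $(\Pi(\pi_1),\dots,\Pi(\pi_n))$, we have $\Pi(\pi_j)[i] = \proj_j(\alpha_i)$ for all $j$ and $i$. The overall point is that $A_\psi$ is the standard LTL tableau construction read over $n$-tuples of traces, and the quantifier-free fragment of HyperLTL is interpreted on such a tuple exactly as LTL is interpreted on a single trace, with the atom $a_{\pi_j}$ read off the $j$-th coordinate; so the classical correctness argument transfers.

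For the base cases $\psi' = a_{\pi_j}$ and $\psi' = \neg a_{\pi_j}$, the transition constraints on atomic propositions force $a \in \proj_j(\alpha_i)$ whenever $a_{\pi_j} \in s_i$ and $a \notin \proj_j(\alpha_i)$ whenever $\neg a_{\pi_j} \in s_i$; since $s_i$ is a maximal consistent set it contains exactly one of the two, so $a_{\pi_j} \in s_i$ iff $a \in \proj_j(\alpha_i) = \Pi(\pi_j)[i]$ iff $\Pi[i,\infty] \models_\emptyset a_{\pi_j}$. The Boolean cases ($\neg$ applied to a compound, $\vee$, $\wedge$) are immediate from the corresponding maximal-consistency conditions and the induction hypothesis on the immediate subformulas, using that $\neg,\vee,\wedge$ have the classical pointwise semantics on trace assignments. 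The next-time case uses the transition constraint tying $s_i$ to $s_{i+1}$: $\X\chi \in s_i$ iff $\chi \in s_{i+1}$, which by the induction hypothesis at index $i+1$ is equivalent to $\Pi[i+1,\infty] \models_\emptyset \chi$, i.e.\ to $\Pi[i,\infty] \models_\emptyset \X\chi$.

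The crux is the until case $\psi' = \psi_1 \U \psi_2$, which is where the generalized B\"uchi acceptance condition is needed. For the ``only if'' direction, assume $\psi_1 \U \psi_2 \in s_i$. Using the consistency condition for $\U$ together with the transition rule that carries an as-yet-unfulfilled until obligation to the successor state, one argues inductively along $s_i, s_{i+1}, \dots$ that either there is a least $k \ge i$ with $\psi_2 \in s_k$ and $\psi_1 \in s_j$ for all $i \le j < k$, or else $\psi_1 \U \psi_2 \in s_j$ and $\psi_2 \notin s_j$ for every $j \ge i$. In the first case the induction hypothesis at indices $k$ and $j$ gives $\Pi[i,\infty] \models_\emptyset \psi_1 \U \psi_2$. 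The second case cannot happen: by maximal consistency $\neg(\psi_1 \U \psi_2) \notin s_j$ for all $j \ge i$, so from index $i$ on the path avoids the accepting set $\{s : \neg(\psi_1\U\psi_2) \in s \text{ or } \psi_2 \in s\}$ associated with this until subformula, contradicting acceptance. For the ``if'' direction, assume $\Pi[i,\infty] \models_\emptyset \psi_1 \U \psi_2$ with witness $k$; then by the induction hypothesis $\psi_2 \in s_k$ and $\psi_1 \in s_j$ for $i \le j < k$, and a short downward induction on $k-i$, using the local consistency and transition conditions to propagate membership of $\psi_1 \U \psi_2$ back from $s_k$ to $s_i$, yields $\psi_1 \U \psi_2 \in s_i$. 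The release case $\psi' = \psi_1 \R \psi_2$ is dual; it relies only on the transition-propagation (safety) part and needs no appeal to the acceptance condition, since $\R$ carries no eventuality.

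The main obstacle is precisely this until case: one has to interlock the local maximal-consistency conditions, the transition rules that push until/release obligations forward, and the generalized B\"uchi acceptance condition in order to exclude ``promised but never delivered'' eventualities and obtain the forward implication; every other case is a routine transcription of the classical LTL tableau correctness proof to $n$-tuples of traces. Once this lemma is in hand, applying it at $i=0$ and using that every state reachable from $\iota_\psi$ by the first transition contains $\psi$ gives the proposition $\Pi \models_\emptyset \psi$ iff $\zip(\Pi) \in \mathcal{L}(A_\psi)$.
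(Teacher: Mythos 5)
Your proof is correct and follows essentially the same route as the paper's: structural induction over $\cl(\psi)$, with the atomic, Boolean, and $\X$ cases read off the maximal-consistency and transition constraints, and the until case resolved by invoking the generalized B\"uchi acceptance set for $\psi_1\U\psi_2$ to rule out a forever-postponed eventuality. You are in fact somewhat more explicit than the paper (which dispatches the $\Leftarrow$ directions and the $\wedge$/$\R$ cases as ``similar''), spelling out the dichotomy in the forward until argument and the backward propagation of the until obligation, but the decomposition and key ideas are the same.
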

\begin{proof}
The proof proceeds by induction on the structure of $\psi'$:
\smallskip

\noindent\textbf{Base cases:}
\begin{enumerate}
	\item $\psi'=a_{\pi_r}$
	
	($\Rightarrow$) Assume that $a_{\pi_r}\in s_i$. By the construction of $A_{\psi}$, if $a_{\pi_r}\in s_i$ then $a\in \proj_r(\alpha_i)$ or equivalently $p\in \proj_r(\Pi)[i]$. By the semantics of HyperLTL, we have $\Pi[i,\infty]\models_\emptyset a_{\pi_r}$.
	
	($\Leftarrow$) Assume that $\Pi[i,\infty]\models_\emptyset a_{\pi_r}$.
	Then $a\in \proj_r(\Pi)[i]$, which is equivalent to $a\in \proj_r(\alpha_i)$. By the fact that states are maximal consistent sets, one of $a_{\pi_r}$ or $\neg a_{\pi_r}$ must appear in $s_i$.
	By the construction of $A_{\psi}$ and the fact that $a\in \proj_r(\alpha_i)$, we have $a_{\pi_r}\in s_i$.
	
\end{enumerate}

\noindent\textbf{Inductive cases:}
\begin{enumerate}
	\item $\psi'=\neg\psi''$
	
	($\Rightarrow$) Assume that $\neg\psi''\in s_i$, Then $\psi''\not\in s_i$. By induction hypothesis, $\Pi[i,\infty]\not\models_\emptyset\psi''$, or equivalently, $\Pi[i,\infty]\models_\emptyset\neg\psi''$. Hence, $\Pi[i,\infty]\models_\emptyset\psi'$.

	($\Leftarrow$) Similar to $\Rightarrow$.
	
	\item $\psi'=\psi_1\vee\psi_2$
	
	($\Rightarrow$) By the construction of $A_{\psi}$, if $\psi_1\vee\psi_2\in s_i$ then $\psi_1\in s_i$ or $\psi_2\in s_i$. By induction hypothesis, $\Pi[i,\infty]\models_\emptyset\psi_1$ or $\Pi[i,\infty]\models_\emptyset\psi_2$, which concludes $\Pi[i,\infty]\models_\emptyset\psi_1\vee\psi_2$.
	
	($\Leftarrow$) Similar to $\Rightarrow$.
	
	\item $\psi'=\X\psi''$ 
	
	($\Rightarrow$) Assume that $\psi'\in s_i$. By the  construction of $A_{\psi}$, $\psi''\in s_{i+1}$. By induction hypothesis, $\Pi[i+1,\infty]\models_\emptyset\psi''$, which concludes $\Pi[i,\infty]\models_\emptyset\X\psi''$.
	
	($\Leftarrow$) Similar to $\Rightarrow$ and the fact that always one of $\X\psi''$ or $\neg\X\psi''$ appears in a state.
	
	\item $\psi'=\psi_1\U\psi_2$
	
	($\Rightarrow$) Assume that $\psi_1\U\psi_2\in s_i$. By the construction of $A_{\psi}$ and the fact that the path is accepting, there is some $j\geq i$ such that $\psi_2\in s_j$. Let $j$ be the smallest index. By induction hypothesis, $\Pi[i,\infty]\models_\emptyset\psi_2$. By the construction of $A_{\psi}$, for all $i\leq k < j$, $\psi_1\in s_k$. Therefore by induction hypothesis, $\Pi[k,\infty]\models_\emptyset\psi_1$. which concludes $\Pi[k,\infty]\models_\emptyset\psi_1\U\psi_2$.

	($\Leftarrow$) Similar to $\Rightarrow$.
	
	\end{enumerate}

\end{proof}

\section{Prototype Model Checker for {\Hp}}
\label{sec:prototypedetails}

The model-checking procedure for LTL works as follows:

\begin{enumerate}
\item Transform Kripke structure $K$ into its corresponding {\Buchi} automaton~\cite{Buchi62}, $A_K$.
This construction is standard~\cite{Clarke:1999:model-checking}. 
The language of $A_K$ is $\Traces(K)$.
\item Construct {\Buchi} automaton $A_{\neg\phi}$, whose language is the set of all traces that do not satisfy $\phi$.

\item Intersect $A_K$ and $A_{\neg\phi}$, yielding automaton $A_K \cap A_{\neg\phi}$. 
Its language contains all traces in $\Traces(K)$ that do not satisfy $\phi$.
This construction is standard~\cite{Clarke:1999:model-checking}.

\item Check whether the language of $A_K \cap A_{\neg\phi}$ is empty. 
If so, all traces $\Traces(K)$ satisfy $\phi$, hence we say $K$ satisfies $\phi$.
If not, then any element of the language is a counterexample showing that $K$ doesn't satisfy $\phi$.
\end{enumerate}

Our algorithm for model-checking {\Hp} adapts that LTL algorithm.
Without loss of generality, assume that the {\Hp} formula to be verified has the form $\forall\pi_{1..k}\exists\pi'_{1..j} \psi$, where $\forall\pi_{1..k}$ means $\forall\pi_1\dots\forall\pi_k$, and $\exists\pi'_{1..j}$ means $\exists\pi'_1\dots\exists\pi'_j$.
(Formulas of the form $\exists\pi_{1..k}\forall\pi'_{1..j}\psi$ can be verified by rewriting them as $\forall\pi_{1..k}\exists\pi'_{1..j}\neg\psi$.)
Let $n$ equal $k+j$.
Semantically, a model $\Pi$ of $\psi$ must be a set of named traces, where $|\Pi|=n$.
To determine whether a Kripke structure $K$ satisfies {\Hp} formula $\forall\pi_{1..k}\exists\pi'_{1..j} \psi$, our algorithm follows the same basic steps as the LTL algorithm:
\begin{enumerate}
\item Represent $K$ as a {\Buchi} automaton, $A_K$. 
Construct the $n$-fold product of $A_K$ with itself---that is, $A_K \times A_K \times \cdots \times A_K$, where ``$A_K$'' occurs $n$ times.
This construction is straightforward and formalized in Appendix~\ref{sec:constructions}.
Denote the resulting automaton as $A^n_K$.
If $\pi_1,\ldots\pi_n$ are all traces of $\Traces(K)$, then $\zip(\pi_1,\ldots\pi_n)$ is a word in the language of $A^n_K$.%

\item Construct {\Buchi} automaton ${A_{\psi}}$.
Its language is the set of all words $w$ such that $\unzip(w)=\Pi$ and $\Pi \models_{\emptyset} \psi$---that is, the tuples $\Pi$ of traces that satisfy $\psi$.
This construction is a generalization of the corresponding LTL construction.
It is formalized in Appendix~\ref{sec:constructions}.

\item Intersect $A^n_K$ and ${A_{\psi}}$, yielding automaton ${A^n_K\cap A_{\psi}}$.  
Its language is essentially the tuples of traces in $\Traces(K)$ that satisfy $\psi$.  

\item \label{step:complement} Check whether $\mathcal{L}({((A^n_K\cap A_{\psi})|_k)^C\cap A^k_K})$ is empty, where
(i) $A^C$ denotes the \emph{complement} of an automaton $A$,
(complement constructions are well-known---e.g.,~\cite{Vardi:1996:LTL}---so we do not formalize one here), 
and (ii) $A|_k$ denotes the same automaton as $A$, but with every transition label (which is an $n$-tuple of propositions) \emph{projected} to only its first $k$ elements.
That is, if $\mathcal{L}(A)$ contains words of the form $\zip(\pi_1,\ldots\pi_n)$, then $\mathcal{L}(A|_k)$ contains  words of the form $\zip(\pi_1,\ldots\pi_k)$.
Projection erases the final $j$ traces from each letter of a word, leaving only the initial $k$ traces.
Thus a word is in the projected language iff there exists some extension of the word in the original language. 

If $\mathcal{L}({((A^n_K\cap A_{\psi})|_k)^C\cap A^k_K})$ is empty, then it holds that $$\emptyset\models_{\Traces(K)}\forall\pi_{1..k}\exists\pi'_{1..j} \psi.$$
If not, then any element of the language is a counterexample showing that $\emptyset\not\models_{\Traces(K)}\forall\pi_{1..k}\exists\pi'_{1..j} \psi$.

\end{enumerate}

The final step of the above algorithm is a significant departure from the LTL algorithm.  
Intuitively, it works because projection introduces an existential quantifier, thus enabling verification of formulas with a quantifier alternation.
The following theorem states the correctness of our algorithm:
\begin{theorem}\label{thm:modelcheckingcorrect}
Let $\phi$ be {\Hp} formula $\forall\pi_{1..k}\exists\pi'_{1..j}\psi$, and let $n = k+j$.  
Let $K$ be a Kripke structure.
Then $\phi$ holds of $K$ iff $\mathcal{L}({((A^n_K\cap A_{\psi})|_k)^C\cap A^k_K})$ is empty.
\end{theorem}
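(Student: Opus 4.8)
The plan is to compute the language of $((A^n_K\cap A_{\psi})|_k)^C\cap A^k_K$ by peeling off one construction at a time, and then to observe that its emptiness is exactly the unfolding of the HyperLTL semantics of $\phi$ over $T=\Traces(K)$. All the genuinely automata-theoretic work is already packaged in two facts from Appendix~\ref{sec:constructions}: the self-composition proposition, which gives $\mathcal{L}(A^m_K)=\{\zip(\pi_1,\dots,\pi_m)\mid \pi_1,\dots,\pi_m\in T\}$ for each $m$ (we use $m=n$ and $m=k$, together with $\mathcal{L}(A_K)=\Traces(K)=T$); and the formula-automaton proposition, which gives $\zip(\Pi)\in\mathcal{L}(A_\psi)$ iff $\Pi\models_\emptyset\psi$, where a path assignment $\Pi$ is identified with the $n$-tuple of its components in a fixed coordinate order. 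What is left is bookkeeping with intersection, label-projection, and complement.

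Since \Buchi{}-automaton intersection realizes language intersection,
\[
\mathcal{L}(A^n_K\cap A_\psi)=\{\,\zip(\pi_1,\dots,\pi_n)\mid \pi_1,\dots,\pi_n\in T \text{ and } (\pi_1,\dots,\pi_n)\models_\emptyset\psi\,\}.
\]
The projection $A|_k$ keeps the states and the acceptance condition and only replaces each transition label $\alpha\in\mathcal{P}(\AP)^n$ by $(\proj_1(\alpha),\dots,\proj_k(\alpha))$; hence any accepting run of $A$ over $w$ is an accepting run of $A|_k$ over the projection of $w$, and conversely an accepting run of $A|_k$ over $v$ lifts---by choosing, along the run, some witness for the erased coordinates---to an accepting run of $A$ over a word projecting to $v$. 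So $\mathcal{L}(A|_k)$ is the componentwise projection of $\mathcal{L}(A)$ to the first $k$ coordinates, and $\mathcal{L}((A^n_K\cap A_\psi)|_k)$ is the set of $\zip(\pi_1,\dots,\pi_k)$ for which there exist $\pi_{k+1},\dots,\pi_n\in T$ with $(\pi_1,\dots,\pi_n)\models_\emptyset\psi$---a language over $\mathcal{P}(\AP)^k$. Complementing over that alphabet and intersecting with $A^k_K$ then leaves exactly the words $\zip(\pi_1,\dots,\pi_k)$ with $\pi_1,\dots,\pi_k\in T$ for which \emph{no} completion $\pi_{k+1},\dots,\pi_n\in T$ satisfies $(\pi_1,\dots,\pi_n)\models_\emptyset\psi$.

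Reading off emptiness finishes both directions at once: this last language is empty iff for all $\pi_1,\dots,\pi_k\in T$ there exist $\pi_{k+1},\dots,\pi_n\in T$ with $(\pi_1,\dots,\pi_n)\models_\emptyset\psi$. Since $\psi$ is quantifier-free, the trace-set subscript is irrelevant to $\models$, so $(\pi_1,\dots,\pi_n)\models_\emptyset\psi$ is the validity of $\psi$ over $T$ under the assignment sending the $i$-th quantified variable to $\pi_i$; unfolding the semantic clauses for the leading block $\forall\pi_{1..k}\exists\pi'_{1..j}$, the displayed condition is precisely $\emptyset\models_{T}\phi$, i.e.\ $K\models\phi$.

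The main obstacle is getting the alphabet of the complementation right. Complement must be taken over $\mathcal{P}(\AP)^k$, so $((A^n_K\cap A_\psi)|_k)^C$ contains \emph{every} $k$-tuple of $\omega$-words over $2^{\AP}$ that is not a $k$-projection of some satisfying $n$-tuple of $K$-traces---including tuples none of whose components are traces of $K$ at all. The subsequent $\cap A^k_K$ is therefore essential, not cosmetic: it re-imposes that the universally quantified $\pi_1,\dots,\pi_k$ range over $\Traces(K)$, which is exactly what the leading $\forall$-block demands; without it the emptiness test would range over arbitrary $k$-tuples of words. A secondary point to state carefully is that label-projection of the (degeneralized) \Buchi{} automaton genuinely implements existential language projection despite its interaction with the acceptance condition, and that the coordinate order used by $\zip$, by $|_k$, and by the $n$-fold product agrees with the one used to identify path assignments with $n$-tuples in the appendix propositions; both are routine.
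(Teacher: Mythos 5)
Your proposal is correct and follows essentially the same route as the paper's proof: both characterize $\mathcal{L}(A^n_K\cap A_\psi)$ and its $k$-projection as the (projections of) satisfying $n$-tuples of traces, then use complementation and intersection with $A^k_K$ to detect a universally quantified $k$-tuple lacking an existential completion, and read off emptiness as $K\models\phi$. The only difference is presentational (you compute the final language directly, the paper phrases it as a countermodel search by contrapositive), and your explicit remarks on the projection lemma and on taking the complement over the alphabet $\mathcal{P}(\AP)^k$ are sound.
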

\begin{proof}
($\Rightarrow$, by contrapositive)
We seek a countermodel showing that $\forall\pi_{1..k}\exists\pi'_{1..j}\psi$ doesn't hold of $K$.
For that countermodel to exist, 
\begin{equation}\label{eq:mc1}
\begin{split}
&\text{there must exist a $k$-tuple $\Pi_k$ : for all $j$-tuples $\Pi_j$ :} \\ &\text{if $\mathit{set}(\Pi_k \cdot \Pi_j) \subseteq \Traces(K)$ then $\Pi_k \cdot \Pi_j \models_\emptyset \neg\psi$,}
\end{split}
\end{equation}
where $\mathit{set}(\Pi)$ denotes the set containing the same elements as tuple $\Pi$.
To find that countermodel $\Pi_k$, consider $\mathcal{L}(A^n_K\cap A_{\psi})$.
If that language is empty, then 
\begin{equation}\label{eq:mc2}
\begin{split}
&\text{for all $k$-tuples $\Pi_k$ and for all $j$-tuples $\Pi_j$ :} \\ &\text{if $\mathit{set}(\Pi_k \cdot \Pi_j) \subseteq \Traces(K)$ then $\Pi_k \cdot \Pi_j \models_\emptyset \neg\psi$.}
\end{split}
\end{equation}
That's almost what we want, except that $\Pi_k$ is universally quantified in~\eqref{eq:mc2} rather than existentially quantified as in~\eqref{eq:mc1}.
So we introduce projection and complementation to relax the universal quantification to existential.
First, note that language $\mathcal{L}((A^n_K\cap A_{\psi})|_k)$ contains all $\zip(\Pi_k)$ for which there exists a $\Pi_j$ such that $\mathit{set}(\Pi_k \cdot \Pi_j) \subseteq \Traces(K)$ and $\Pi_k \cdot \Pi_j \models_\emptyset \psi$.
So if there exists a $\Pi_k^*$ such that $\zip(\Pi_k^*) \not\in \mathcal{L}((A^n_K\cap A_{\psi})|_k)$, then for all  $\Pi_j$, if $\mathit{set}(\Pi_k \cdot \Pi_j) \subseteq \Traces(K)$ then $\Pi_k \cdot \Pi_j \models \neg\psi$.
That $\Pi_k^*$ would be exactly the countermodel we seek according to~\eqref{eq:mc1}.
To find such a $\Pi_k^*$, it suffices to determine whether $\mathcal{L}((A^n_K\cap A_{\psi})|_k) \subset \mathcal{L}(A^k_K)$, because any element that strictly separates those sets would satisfy the requirements to be a $\Pi_k^*$.
By simple set theory, $X \subset Y$ iff $X^C \cap Y$ is not empty.
Therefore, if $\mathcal{L}({((A^n_K\cap A_{\psi})|_k)^C\cap A^k_K})$ is not empty, then a countermodel $\Pi_k^*$ exists.

($\Leftarrow$) The same argument suffices: if $\mathcal{L}({((A^n_K\cap A_{\psi})|_k)^C\cap A^k_K})$ is empty, then no countermodel can exist.
\end{proof}

\paragraph*{Formulas without quantifier alternation.} 
Define HyperLTL$_1$ to be the fragment of {\Hp} that contains formulas with no alternation of quantifiers.
HyperLTL$_1$ can be verified more efficiently than {\Hp}.
To verify $\forall\pi_{1..n}\psi$,
it suffices to check whether $A^n_M \cap A_{\neg\psi}$ is non-empty.
This is essentially the self-composition construction, as used in previous work~\cite{BartheDR04,Terauchi+Aiken/05/SecureInformationFlowAsSafetyProblem,ClarksonS10}.

\paragraph*{Complexity.}
The most expensive computation in model checking {\Hp} is the {\Buchi} automaton complementation.
Safra's construction~\cite{Safra:1988:complexity}, which can be used to implement complementation, has complexity of $2^{O(m \log m)}$, where $m$ is the number of the states of the original automaton.
In step~\ref{step:complement} of our algorithm, the automaton being complemented has $O(|K|^n\cdot 2^{O(|\psi|)})$ states, where $|K|$ is the number of states of $K$, and $|\psi|$ is the length of $\psi$.
Combined with the final intersection and non-emptiness check, the complexity of model checking is $O\left(|K|^n \cdot 2^{O(|K|^{n+1} \cdot O(|\psi|) \cdot 2^{O(|\psi|)})}\right)$.
Our prototype is therefore exponential in the size of the program (i.e., Kripke structure)  and doubly exponential in the size of the formula. 
That complexity is worse than the complexity of model-checking LTL, which is polynomial in the size of the program and exponential in the size of the formula~\cite{Vardi:1996:LTL}.
Perhaps the complexity of model-checking {\Hp} could be reduced in future work, or perhaps it's simply the price we pay to have a general-purpose logic of hyperproperties.

The worst case time complexity of epistemic logic with perfect recall is similar to {\Hp}, as it is exponential to the size of the system, and doubly exponential in the size of the formula~\cite{Cohen:2010:NonElementarySpeedUp}.
The worst-case space complexity of single-quantifier-alternation HyperLTL formulas (i.e., many of the security policies of interest) is NLOGSPACE, whereas epistemic with perfect recall is worse: PSPACE-hard~\cite{Meyden:1999:MCKnowledge}.
So there is hardly any theoretical advantage of epistemic over HyperLTL.

\else
\relax
\fi

\end{document}
